\journal{Journal of \LaTeX\ Templates}
\algrenewcommand\algorithmicindent{0.75em}%
\newcommand{\C}{\mathcal{C}}
\newcommand{\E}{\mathbb{E}}
\newcommand{\F}{\mathcal{F}}
\newcommand{\G}{\mathcal{G}}
\newcommand{\N}{\mathbb{N}}
\newcommand{\Z}{\mathbb{Z}}
\renewcommand{\P}{\mathbb{P}}
\newcommand{\R}{\mathbb{R}}
\newcommand{\X}{\mathbb{X}}
\newcommand{\cX}{\mathcal{X}}
\newcommand{\Y}{\mathbb{Y}}
\newcommand{\cY}{\mathcal{Y}}
\renewcommand{\ss}{\X}
\newcommand{\sss}{\mathcal{X}}
\newcommand{\abs}[1]{\left|#1\right|}
\newcommand{\defeq}{:=}
\newcommand{\boundMeas}{\mathscr{B}}
\newcommand{\pmeasure}{\mathscr{P}}
\newcommand{\measure}{\mathscr{M}}
\newcommand{\ud}{\mathrm{d}}
\newcommand{\id}{Id}
\newcommand{\ind}{\mathbb{I}}
\newcommand{\aSMC}{$\alpha$SMC}
\newcommand{\bGam}{\overline{\Gamma}}
\newcommand{\Gam}{\Gamma}
\newcommand{\floor}[1]{\left\lfloor#1\right\rfloor}
\newcommand{\cmod}{\:\overline{\mathrm{mod}}\:}	
\newcommand{\mg}{\Delta}
\newcommand{\cvarphi}{\overline{\varphi}}
\newcommand{\normal}{\mathcal{N}}
\newcommand{\one}{\mathbf{1}}
\newcommand{\bQ}{\overline{Q}}
\newcommand{\per}{\gs}
\newcommand{\pa}{\mathrm{pa}}
\newcommand{\gs}{M}
\newcommand{\gn}{m}
\newcommand{\aibpf}{\alpha}
\newcommand{\alepf}{\alpha}
\newcommand{\aibpfi}{\alpha_{\infty}}
\newcommand{\alepfi}{\alpha_{\infty}}
\newcommand{\hbw}{\beta} 
\newcommand{\Ber}{\mathrm{Bernoulli}}
\newcommand{\Bin}{\mathrm{Binomial}}
\newcommand{\BBin}{\mathrm{Beta}\text{-}\mathrm{Binomial}}
\newcommand{\betafun}{\mathrm{B}}
\newcommand{\RWZ}{\mathrm{RWZ}}
\newcommand{\bB}{\overline{B}}
\newcommand{\indist}[1]{\xrightarrow[#1]{{\mathrm{d}}}}
\newcommand{\indistsh}[1]{\xrightarrow[\phantom{iii}]{{\mathrm{d}}}}
\newcommand{\almostsurelyOneArgSh}[1]{\xrightarrow[\phantom{iii}]{\mathrm{a.s.}}}
\newcommand{\inprob}[1]{\xrightarrow[#1]{\P}}
\newcommand{\surely}[1]{\xrightarrow[#1]{}}
\theoremstyle{definition}
\newtheorem{remark}{Remark}
\theoremstyle{theorem}
\newtheorem{lemma}{Lemma}
\newtheorem{proposition}{Proposition}
\newtheorem{theorem}{Theorem}
\newtheorem{assumption}{Assumption}
\begin{document}

\begin{frontmatter}

\title{Fluctuations, stability and instability of a distributed particle filter with local exchange}

\author[mymainaddress]{Kari Heine\corref{mycorrespondingauthor}\fnref{myfootnote}}
\cortext[mycorrespondingauthor]{Corresponding author}
\ead{k.heine@ucl.ac.uk}
\fntext[myfootnote]{Present address: Department of Statistical Science,
University College London,
Gower Street,
London, WC1E 6BT,
United Kingdom}

\author[mymainaddress]{Nick Whiteley}
\ead{Nick.Whiteley@bristol.ac.uk}

\address[mymainaddress]{School of Mathematics, University Walk, Bristol, BS8 1TW, United Kingdom}


\begin{abstract}
We study a distributed particle filter proposed by Boli\'c et al.~(2005). This algorithm involves $\gn$ groups of $\gs$ particles, with interaction between groups occurring through a ``local exchange'' mechanism. We establish a central limit theorem in the regime where $\gs$ is fixed and $\gn\to\infty$. A formula we obtain for the asymptotic variance can be interpreted in terms of colliding Markov chains, enabling analytic and numerical evaluations of how the asymptotic variance behaves over time, with comparison to a benchmark algorithm consisting of $\gn$ independent particle filters. We prove that subject to regularity conditions, when $\gn$ is fixed both algorithms converge time-uniformly at rate $\gs^{-1/2}$. Through use of our asymptotic variance formula we give counter-examples satisfying the same regularity conditions to show that when $\gs$ is fixed neither algorithm, in general, converges time-uniformly at rate $\gn^{-1/2}$.
\end{abstract}

\begin{keyword}
hidden Markov model\sep particle filter \sep central limit theorem \sep asymptotic variance \sep local exchange \sep sequential Monte Carlo
\MSC[2010] 60F05\sep  60F99 \sep 60G35 \\

\end{keyword}

\end{frontmatter}


\section{Introduction}

Since their introduction in \cite{gordon_et_al_93}, particle filters have become very popular tools in engineering, signal processing, econometrics and various other disciplines for approximate nonlinear filtering of hidden Markov models (HMM's). Investigations of particle filters have generated book-length studies, notably \cite{smc:theory:Dm04}, demonstrating the well-developed state of knowledge about convergence rates, fluctuations, propagation of chaos, large deviations and various other properties, with more recent contributions to the literature focussing on specific algorithmic mechanisms, such as adaptive resampling \cite{crisan2012particle,del2012adaptive}.

Trends in the development of computers towards distributed and parallel architectures have influenced particle filtering methodology.
 One of the main bottlenecks for computational efficiency when implementing particle filters is the interaction between particles which occurs in the resampling step. This step is important because it ensures that the algorithm exhibits certain time-uniform convergence properties, but is difficult to parallelize.

A significant piece of work from the engineering literature which addresses this difficulty is \cite{bolic_et_al_05}, introducing an algorithm we refer to as the Local Exchange Particle Filter (LEPF), in which groups of particles are spread across computational units. What makes this algorithm unusual is that the $\gn$ groups of $\gs$ weighted particles interact through an ``exchange'' mechanism, which places it outside the frameworks of many existing studies, notably \cite{smc:theory:Dm04,crisan2012particle,del2012adaptive}.
The practical rationale for the LEPF is to achieve a compromise between communication efficiency of the algorithm and the benefits brought about by resampling. In particular the interaction between particles in the LEPF occurs in a localized manner, making it suited to implementation on a network of computing devices without the need for global connections.

Despite substantial interest in \cite{bolic_et_al_05} from practitioners---it has 250 citations according to Google scholar at the time of writing---relatively little is known about convergence properties of LEPF.  Indeed the question of whether it truly exhibits the same time-uniform convergence properties as the original particle filter of \cite{gordon_et_al_93} has not been fully answered. The few papers on analysis of the LEPF appear to be \cite{miguez07,miguez14} and the recent technical report \cite{miguez_et_vazquez15}. \cite{miguez07} concerns analysis over a single time-step,  and \cite{miguez14, miguez_et_vazquez15} provide proofs of time-uniform convergence of the particle filtering approximation error, in $L_1$ and $L_p$ norms respectively, in the regime where $\gs$ is fixed and $\gn\to\infty$, for an algorithm of which the LEPF as we present it is a special case. However, the proofs of \cite{miguez14, miguez_et_vazquez15} rely on key hypotheses on the particle weights which they do not rigorously verify, and which seem difficult to check in general. The results of \cite{miguez14, miguez_et_vazquez15} also do not establish a particular rate of convergence.

 The structure of this paper and outline of our main contributions are as follows (precise statements are given later). In Section \ref{sec:filtering_framework} we introduce the setup of the filtering problem, present the LEPF and describe the main result of \cite{miguez14,miguez_et_vazquez15}. We also introduce a standard algorithm consisting of $\gn$ independent bootstrap particle filters (IBPF), each with $\gs$ particles. The independence in the IBPF makes it very easy to parallelize, so from a computational point of view it is a natural alternative to the LEPF. In this paper the convergence properties of the IBPF, which are already well-understood, serve as benchmarks against which to compare the LEPF.

Section \ref{sec:clt} introduces a general algorithm of which the LEPF and IBPF are special cases, and gives our main result, Theorem \ref{the:CLT}, a central limit theorem (CLT) for the error in particle approximation of prediction filter distributions, in the regime where $\gs$ is fixed and $\gn\to\infty$. We address time-uniform convergence in Section \ref{sec:time_unform_conv}. Our first result here is a positive one: that under strong but standard regularity conditions, in $L_p$ norm the error from the LEPF converges time-uniformly with rate $\gs^{-1/2}$, in the regime where $\gn$ is fixed and $\gs\to\infty$. The same is true of the IBPF. Our second result, Proposition \ref{prop:negative} in Section \ref{sec:M_fixed_m_to_infty}, shows that growth without bound of the asymptotic variance in our CLT is sufficient to rule out time-uniform convergence at rate $\gn^{-1/2}$ in the regime where  $\gs$ is fixed and  $\gn\to\infty$.

 Section \ref{sec:comparison} investigates various properties of the asymptotic variance for the LEPF and compares them to those of the IBPF. In particular, we show by examples in Sections \ref{sec:law_of_Z_ibpf} and \ref{sec:sig_to_inf_for_LEPF} that under conditions which can be considered very favourable for performance, the asymptotic variance for the LEPF and IBPF can grow over time without bound. This can be considered a negative result for the LEPF, since the sequence of asymptotic variances (over time) for the original particle filter of \cite{gordon_et_al_93} has been shown under weaker conditions to be bounded, or tight when the observations in the HMM are treated as random \cite{smc:the:C04,kunsch2005recursive,favetto2009asymptotic,whiteley2013stability,douc2014stability}. Moreover, combined with Proposition \ref{prop:negative} in Section \ref{sec:M_fixed_m_to_infty}, these examples serve as counter-examples to time-uniform convergence at rate $m^{-1/2}$. This does not contradict the time-uniform convergence results of \cite{miguez14,miguez_et_vazquez15}, since the latter results do not pertain to a specific convergence rate, and they concern the updated filtering distributions. However, our Proposition \ref{prop:negative} allows us to confirm that a hypothesis slightly stronger than that of  \cite{miguez_et_vazquez15} does not hold in general, even under favourable conditions.  Section \ref{sec:interp} contains further discussion and interpretation of our results. Our analysis allows us to explain qualitatively why the asymptotic variance for the LEPF may be lower or grow over time more slowly than that for the IBPF, and we illustrate this phenomenon with numerical results.

Some clarifications about originality are in order. To the knowledge of the authors, our CLT is the first result of its kind for the LEPF. Our starting point to prove this result consists of a martingale decomposition and error bounds,  Proposition \ref{prop:martingale result} in Section \ref{sec:clt}, which is an application of a result obtained by the authors in \cite{whiteley_et_al14} for a class of algorithms which includes the LEPF. However, we emphasise that Proposition \ref{prop:martingale result} is only one of the first steps towards the CLT itself, leaving us with substantial work to do. In our study of time-uniform convergence, we also appeal to a result of \cite{whiteley_et_al14} (Proposition \ref{prop:aSMC_uniform_alt} in the present paper), but again we have some work to do in dealing with the specifics of the LEPF. We also point out that despite some superficial similarities, the details of LEPF and our analysis differ substantially from those of some resampling algorithms studied recently by the authors in \cite{heine_et_al_14}.

\subsection*{Notation}
For any measurable space $(\X,\cX)$ we use $\measure(\X)$, $\pmeasure(\X)$ and $\boundMeas(\X)$ to denote the set of measures, probability measures and the set of bounded and measurable functions defined on $\X$, respectively.  $\N$ includes 0. For any $\N$-valued $m \geq 1$ we write $[m] \defeq \{1,\ldots,m\}$. Whenever summation over a single variable appears without the summation set made explicit, the sum is taken over the set $[N]$, i.e.~$\sum_{i} \equiv \sum_{i=1}^{N}$ and for summations over multiple variables we write $\sum_{(i_{1},\ldots,i_{p})} \equiv \sum_{i_{1}}\cdots\sum_{i_{p}}$.
We use $\id$ to denote the identity mapping for any domain of definition and $\one$ to denote a constant function equal to 1 everywhere. For any function $\varphi:A\to\R$, we define $\varphi^{\otimes 2}(x,y) \defeq \varphi(x)\varphi(y)$ for all $x,y\in A$. For $\varphi\in\boundMeas(\X)$ we define $\Vert\varphi\Vert_{\infty}\defeq \sup_x\abs{\varphi(x)}$ and $\mathrm{osc}(\varphi)\defeq \sup_{x,y}\abs{\varphi(x)-\varphi(y)}$. For any $\mu,\nu \in \measure(\X)$, $\mu\otimes\nu$ denotes the product measure and $\mu^{\otimes 2} \defeq \mu\otimes\mu$. We use $\delta_{x}$ to denote the point mass located at $x$. We define $\lfloor x \rfloor \defeq \max(z \in \Z: z \leq x)$ and $(y \cmod x) \defeq y - \lfloor (y-1) / x\rfloor x$. All random variables we encounter are considered to be defined on some underlying probability space $(\Omega,\F,\P)$, with expectation w.r.t. $\P$ denoted by $\E$. Convergence in probability under $\P$ is denoted by $\inprob{}$.

\section{Filtering framework and the LEPF}\label{sec:filtering_framework}

Let  $X = (X_{n})_{n\in\N}$ be a Markov chain taking values in a measurable Polish space $(\X,\cX)$, having initial distribution $\pi_{0} \in \pmeasure(\X)$ and transition kernel $f:\X\times\cX \to [0,1]$,
\begin{equation*}
X_{0} \thicksim \pi_{0}, \qquad X_{n} \thicksim f(X_{n-1},\,\cdot\,), \qquad \forall ~ n \geq 1.
\end{equation*}
Let $Y = (Y_{n})_{n\in \N}$ be a process taking values in a measurable Polish space $(\Y,\cY)$ such that $(Y_{n})_{n\in \N}$ are conditionally independent given $X$, with the conditional distribution of $Y_n$ given $X$ being
\begin{equation*}
Y_{n}  \thicksim g(X_n,\,\cdot\,), \qquad \forall ~n\in\N,
\end{equation*}
for a probability kernel $g:\X\times \cY\to [0,1]$. For all $x\in\X$, we assume $g(x,\,\cdot\,)$ admits a density with respect to a $\sigma$-finite measure on $(\Y,\cY)$, and the same notation $g(x,\,\cdot\,)$ will be used for denoting this density. From here on, we consider a fixed $\Y$-valued observation sequence $(y_{n})_{n\in\N}$, write $g_n(x) \defeq g(x,y_{n})$ for all $x\in\X$, and assume that the following mild regularity condition holds.

\begin{assumption}\label{ass:bounded and positive g}
For all $n\in\N$, $g_{n} \in \boundMeas(\X)$ and  $g_{n}(x)>0$ for all $x\in \X$.
\end{assumption}

We focus on approximating the $\pmeasure(\X)$-valued prediction filter sequence $(\pi_{n})_{n\in\N}$, which cannot be computed exactly, except in some special cases. This sequence is defined for all $n\geq 1$, by the recursion $\pi_{n} = \Phi_{n}(\pi_{n-1})$ where $\Phi_{n}:\pmeasure(\X) \to \pmeasure(\X)$ is the operator
\begin{equation*}
\Phi_n(\mu)(A) := \frac{\int_\ss g(x,y_{n-1})f(x,A)\mu(dx)}{\int_\ss g(x,y_{n-1})
\mu(dx)},\qquad \forall~A\in\sss,~\mu\in\pmeasure(\X).\label{intro:pred_filt1}
\end{equation*}
If $(y_{n})_{n\in\N}$ is replaced by the random sequence $(Y_{n})_{n\in\N}$, then $\pi_{n}$ is a version of the conditional distribution of $X_{n}$ given $Y_{0},\ldots,Y_{n-1}$.

The algorithm which is our main object of study is one of several proposed in \cite{bolic_et_al_05} and there called the ``Distributed Resampling with Non-proportional Allocation and Local Exchange'' algorithm. For brevity, we refer to it as the LEPF. It is shown in Algorithm \ref{alg:lepf}. At each time step $n$, this algorithm delivers a collection of $N=\gs\gn$ particles $\zeta_n=\{\zeta^{i}_{n}: i \in [N]\}$ and weights $\{W^{i}_{n}: i \in [N]\}$, and the weighted empirical measure
\begin{equation}\label{eq:filter output}
\pi^{N}_{n} \defeq \frac{\sum_{i}W^{i}_{n}\delta_{\zeta_n^i}}{\sum_{i} W^{i}_{n}},
\end{equation}
is regarded as an approximation to $\pi_{n}$.
\begin{algorithm}
\begin{algorithmic}
\For{$i=1,\ldots,\gs\gn$}
\State Set $W_{0}^{i} = 1$ and sample $\zeta^{i}_{0}\thicksim \pi_{0}$
    \State Set $L^{i} = (i+\theta) \cmod \gs\gn$
\EndFor
\For{$k=1,\ldots,\gn$}
		\State Set $G_{k} = \{(k-1)\gs + 1,\ldots,(k-1)\gs + \gs\}$
\EndFor
\For{$n=1,2,\ldots$}

	\For{$k=1,\ldots,\gn$}
		\For{$i \in G_{k}$}
			\State Set $W^{i}_{n} = (\gs\gn)^{-1}\sum_{j\in G_{k}}W^{L^{j}}_{n-1}g_{n-1}\big(\zeta^{L^{j}}_{n-1}\big)$
			\State Sample $\zeta^{i}_{n}\,|\,\zeta_0,\ldots,\zeta_{n-1}\;\thicksim\; \dfrac{\sum_{j\in G_{k}} W^{L^{j}}_{n-1}g_{n-1}\big(\zeta^{L^{j}}_{n-1}\big)f(\zeta^{L^{j}}_{n-1},\,\cdot\,)}{\sum_{j\in G_{k}} W^{L^{j}}_{n-1}g_{n-1}\big(\zeta^{L^{j}}_{n-1}\big)}$
		\EndFor
	\EndFor
\EndFor
\end{algorithmic}
\caption{Local exchange particle filter}\label{alg:lepf}
\end{algorithm}
The sampling steps of Algorithm \ref{alg:lepf} should be understood to mean that the particles $\zeta_n=\{\zeta^{i}_{n}: i \in [N]\}$ are conditionally independent given $\zeta_0,\ldots,\zeta_{n-1}$. Within each of the $\gn$ groups of equal size $\gs$, the particles are drawn according to a common resampling/proposal mechanism. Indeed one can read off from Algorithm \ref{alg:lepf} that
\begin{equation}\label{eq:w_equals_w}
W_n^i=W_n^j\quad\text{and}\quad \P(\zeta_n^i\in\,\cdot\,|\,\zeta_0,\ldots,\zeta_{n-1})=\P(\zeta_n^j\in\,\cdot\,|\,\zeta_0,\ldots,\zeta_{n-1}),\quad\forall i,j\in G_k,
\end{equation}
and the parameter $\theta \in \{1,\ldots,\gs-1\}$ influences the interaction between groups via the indices $L^i$.

In this paper, we primarily focus on the asymptotic regime $\gs$ fixed, $\gn\rightarrow\infty$. Interest in this regime stems from parallel and distributed implementations: typically the sampling and weight computations for the $\gn$ groups are performed concurrently by a network of $\gn$ computers, so the regime $\gs$ fixed, $\gn\rightarrow\infty$ can be thought of as corresponding to an increasingly large network, in which each computer handles $\gs$ particles, see \cite{bolic_et_al_05} for details.

\cite{miguez14,miguez_et_vazquez15} studied an algorithm of which the LEPF as we present it in Algorithm \ref{alg:lepf} is a special case. Our mapping $i\mapsto L^i$ is a particular instance of the mapping denoted by $\beta$ in \cite{miguez14,miguez_et_vazquez15} and if one sets their exchange period parameter $n_0=1$, one recovers Algorithm \ref{alg:lepf}. The generality of $\beta$ in \cite{miguez14,miguez_et_vazquez15} allows for other patterns of interaction between particles, beyond the ones considered in the present article.   Whilst we focus on the prediction filter distributions $\pi_n$, \cite{miguez14,miguez_et_vazquez15} focus on particle approximations of the updated filtering distributions $\widehat{\pi}_{n}(A)\defeq\pi_{n}(g_n\mathbb{I}_A)/\pi_{n}(g_n)$, $A\in\cX,n\geq0$. To allow us to state their result, for each $n\geq0$ let $\{\widehat{\zeta}_n^i;i\in[N]\}$ be random variables which are conditionally independent given $\zeta_0,\ldots,\zeta_n$, with
$$
\P(\widehat{\zeta}_n^i\in\,\cdot\,|\,\zeta_0,\ldots,\zeta_{n}) = \frac{\sum_{j\in G_{k}}W^{L^{j}}_{n}g_{n}\big(\zeta^{L^{j}}_{n}\big)\delta_{\zeta^{L^{j}}_{n}}(\,\cdot\,)}{\sum_{j\in G_{k}}W^{L^{j}}_{n}g_{n}\big(\zeta^{L^{j}}_{n}\big)},\;\; \forall~i \in G_k,
$$
and
$$
\widehat{\pi}^{N}_{n} \defeq \frac{\sum_{i}W^{i}_{n+1}\delta_{\widehat{\zeta}_n^i}}{\sum_{i} W^{i}_{n+1}}.
$$
The $\{\widehat{\zeta}_n^i;i\in[N]\}$ can be understood as ``integrated out'' in Algorithm \ref{alg:lepf}.

In the notation of the present paper and with $\widetilde{W}_n^k:=W_n^i=W_n^j$ for all $i,j\in G_k$ and $k\in[m]$, the key hypothesis of \cite[Assumption 3]{miguez_et_vazquez15} can be equivalently written as follows: there exist $\epsilon\in[0,1)$ and $q\geq4$ such that,
\begin{equation}\label{eq:miguez_hyp}
\sup_{\gn\geq1}\sup_{n\geq0} \;\gn^{q-\epsilon}\;\E\left[\left|\max_{k\in[\gn]}\frac{\widetilde{W}_n^k}{\sum_{j\in[\gn]}\widetilde{W}_n^j}\right|^q\right]<\infty.
\end{equation}
Under this hypothesis, plus additional but standard regularity conditions, the main result of \cite{miguez_et_vazquez15} is: for any $\varphi\in\boundMeas(\X)$, $\gs\geq 1$ and $1\leq p \leq q$ with $q$ as in \eqref{eq:miguez_hyp},
$$
\lim_{\gn\to\infty}\sup_{n\geq0} \E[|\widehat{\pi}^{\gs\gn}_{n}(\varphi)-\widehat{\pi}_{n}(\varphi)|^p]^{1/p}=0.
$$
A similar result for the case $p=1$ was established in \cite{miguez14} under stronger conditions. However, in \cite{miguez_et_vazquez15} the hypothesis \eqref{eq:miguez_hyp} is not rigorously verified, and only empirical evidence that it holds is presented. We shall comment further on \eqref{eq:miguez_hyp} in Section \ref{sec:M_fixed_m_to_infty}.

The role of the indices $L^i$ in the LEPF is made more transparent if one compares to an alternative algorithm, what we term \emph{independent bootstrap particle filters} (IBPF), shown in Algorithm \ref{alg:ibpf} below.  The IBPF amounts to $\gn$ independent copies of the original bootstrap particle filter of \cite{gordon_et_al_93}, each with $M=N/\gn$ particles. Indeed one can read off from Algorithm \ref{alg:ibpf} that for the IBPF the $m$ collections of particles $\{\zeta^{i}_{n}: i \in G_{k},~n \in \N\}$, $k\in[m]$  are independent, making the IBPF very easy to parallelise and hence in practice it is a natural alternative to the LEPF. Algorithm \ref{alg:ibpf} also clearly satisfies \eqref{eq:w_equals_w}, and one could write the ``Sample'' step more simply as:
 $$
\P(\zeta_n^i\in\,\cdot\,|\,\zeta_0,\ldots,\zeta_{n-1})=\frac{\sum_{j\in G_k}g_{n-1}(\zeta_{n-1}^j)f(\zeta_{n-1}^j,\,\cdot\,)}{\sum_{j\in G_k}g_{n-1}(\zeta_{n-1}^j)},\qquad \forall~i \in G_k,
 $$
 but the presentation of Algorithm \ref{alg:ibpf} highlights the connection the LEPF: if in Algorithm \ref{alg:lepf} one were to set $\theta=0$, so $L^i=i$, then one recovers exactly Algorithm \ref{alg:ibpf}. With the weights $W_n^i$ as calculated in the IBPF, one again regards $\pi^{N}_{n} $ as in \eqref{eq:filter output} as an approximation to $\pi_{n}$, and the statistical independence between groups means that convergence properties of the IBPF in the regime where $\gs$ is fixed and $\gn\rightarrow\infty$ are relatively easy to study.

\begin{algorithm}
\begin{algorithmic}
\For{$i=1,\ldots,\gs\gn$}
\State Set $W_{0}^{i} = 1$ and sample $\zeta^{i}_{0}\thicksim \pi_{0}$
\EndFor
	\For{$k=1,\ldots,\gn$}
		\State Set $G_{k} = \{(k-1)\gs + 1,\ldots,(k-1)\gs + \gs\}$
\EndFor
\For{$n=1,2,\ldots$}
	\For{$k=1,\ldots,\gn$}
		\For{$i \in G_{k}$}
			\State Set $W^{i}_{n} = (\gs\gn)^{-1}\sum_{j\in G_{k}}W^{j}_{n-1}g_{n-1}\big(\zeta^{j}_{n-1}\big)$
			\State Sample $\zeta^{i}_{n}\,|\,\zeta_0,\ldots,\zeta_{n-1}\;\thicksim\;\dfrac{\sum_{j\in G_{k}}W^{j}_{n-1}g_{n-1}\big(\zeta^{j}_{n-1}\big)f(\zeta^{j}_{n-1},\,\cdot\,)}{\sum_{j\in G_{k}}W^{j}_{n-1}g_{n-1}\big(\zeta^{j}_{n-1}\big)}$
		\EndFor
	\EndFor
\EndFor
\end{algorithmic}
\caption{Independent bootstrap particle filters}\label{alg:ibpf}
\end{algorithm}

\section{Central limit theorem}\label{sec:clt}

\subsection{A general algorithm and statement of the main result}\label{sub:general_alg}

The starting point for our analysis is to write down Algorithm \ref{alg:alpha smc} of which the LEPF and IBPF are special cases. We do this not just for the sake of generality. Instead Algorithm \ref{alg:alpha smc} affords us some notational simplifications and, more crucially, it allows us make clear that the LEPF is a special case of the so-called \aSMC{} algorithm, introduced by the authors in \cite{whiteley_et_al14}. In turn this later allows us to leverage some results of \cite{whiteley_et_al14}---in particular Proposition \ref{prop:martingale result} below---providing some building blocks for our CLT. The IBPF is also an instance of Algorithm \ref{alg:alpha smc} and this fact eases our presentation of comparisons between it and the LEPF in Section \ref{sec:comparison}.

\begin{algorithm}
\begin{algorithmic}
\For{$i=1,\ldots,N$}
\State Set $W_{0}^{i} = 1$ and sample $\zeta^{i}_{0}\thicksim \pi_{0}$
\EndFor
\For{$n=1,2,\ldots$}
	\For{$i=1,\ldots,N$}
		\State Set $W^{i}_{n} = \sum_{j}\alpha^{ij}W^{j}_{n-1}g_{n-1}
(\zeta^{j}_{n-1})$
		\State Sample $\zeta^{i}_{n}\,|\,\zeta_0,\ldots,\zeta_{n-1}\;\thicksim\; (W^{i}_{n})^{-1}\sum_{j}\alpha^{ij}
W^{j}_{n-1}g_{n-1}(\zeta^{j}_{n-1})f(\zeta^{j}_{n-1},\,\cdot\,)$\label{eq:def zeta}
	\EndFor
\EndFor
\end{algorithmic}
\caption{}\label{alg:alpha smc}
\end{algorithm}

From henceforth, the integer $\gs \geq 1$ is, unless stated otherwise, assumed to be fixed. In Algorithm \ref{alg:alpha smc}, $\alpha$ is a row-stochastic matrix, of size $N\times N$, with $N=\gs\gn$. Assumption \ref{ass:band asmc assumptions} introduces hypotheses on the matrix $\alpha$ for each value $N\in\{\gs\gn:m\geq1\}$. To state these hypotheses precisely, we need to be clear about dependence of $\alpha$ on $N$ and hence write $\alpha_N$ up until the end of Section \ref{sub:general_alg}, beyond which  we revert to $\alpha$ to reduce notational clutter.

\begin{assumption}\label{ass:band asmc assumptions}
For all $N\in\{\gs\gn:m\geq 1\}$,
\begin{enumerate}[label=(2.\arabic*)]

\item \label{ass:double stochasticity} $\alpha_{N}$ is doubly stochastic,

\item \label{ass:periodic circulance}
for all $i,j\in[N]$ and $z\in\Z$,
\begin{equation*}
\alpha^{ij}_N = \alpha_N^{(i+z\gs)\cmod N, (j+z\gs)\cmod N}.
\end{equation*}

\end{enumerate}
Additionally, for some integer $\hbw\geq1$,
\begin{enumerate}[label=(2.\arabic*)]
\setcounter{enumi}{2}
\item \label{ass:band width} $\alpha^{ij}_{N} = 0$ for $N\geq 2\hbw+1$ and $i,j \in [N]$ such that
\begin{equation*}
\mg(i,j)\defeq \min_{\ell \in \Z} \abs{i - j + \ell N} > \hbw,
\end{equation*}
\item \label{ass:similarity} there exists $\{\alpha^{ij}_{\infty}:i,j\in\Z\}$ such that for $N\geq 2\hbw+1$,
\begin{equation}\label{eq:similarity}
\alpha^{ij}_{\infty} = \alpha_N^{i\cmod N,j\cmod N}\ind[|i-j|\leq \hbw],\qquad i,j\in\Z.
\end{equation}

\end{enumerate}
\end{assumption}

Assumption \ref{ass:double stochasticity} allows us to apply results from \cite{whiteley_et_al14} to Algorithm \ref{alg:alpha smc}. Assumption \ref{ass:periodic circulance} asserts that the elements on each diagonal of $\alpha_N$ are periodic with cycle length $\per$. Intuitively, this captures the idea that the $N$ particles in Algorithm \ref{alg:alpha smc} are in some sense organised into groups of size $\gs$. It is easily verified that the function $\mg$ appearing in Assumption \ref{ass:band width} is a metric on $[N]$,  in particular it is the graph distance on a cycle graph with vertex set $[N]$  where there is an edge between each $i\in[N]$ and $(i+1)\cmod N$. Assumption \ref{ass:band width} then asserts that $\alpha_N$ is a band matrix in the sense that elements further than $\hbw$ away from the main diagonal in metric $\mg$ are equal to zero, in turn influencing the conditional independence structure of the particles in Algorithm \ref{alg:alpha smc}. Finally Assumption \ref{ass:similarity} can be interpreted as meaning that there is some common structure to the matrices $\alpha_N$ as $N$ grows, and loosely speaking, this common structure is captured in the ``limiting'' doubly infinite matrix $\alpha_\infty$, which will show up later in our CLT.

Let us now state how the LEPF and IBPF fit in this framework. Consider
\begin{equation}\label{eq:del aibpf and alepf}
\arraycolsep=1.4pt
\begin{array}{rll}
\alepf_N^{ij} &= \gs^{-1}\ind[\floor{(i-1)/\gs} = \floor{((j-\theta) \cmod N-1)/\gs}]& \qquad \forall ~i,j \in [N]\\[.1cm]
\alepfi^{ij} &= \gs^{-1}\ind[\floor{(i-1)/\gs} = \floor{(j-\theta-1))/\gs}]&\qquad  \forall~ i,j \in \Z.
\end{array}
\end{equation}
It is a matter of elementary but tedious manipulations to show that with $\alpha=\alpha_N$ as in \eqref{eq:del aibpf and alepf}, Algorithm \ref{alg:alpha smc} reduces to the LEPF as in Algorithm \ref{alg:lepf}, and to check that Assumptions \ref{ass:double stochasticity}--\ref{ass:band width} hold with $\beta=M-1+\theta$. Checking Assumption \ref{ass:similarity} involves some less trivial work and a proof is provided in the Appendix.

To recover the IBPF from Algorithm \ref{alg:alpha smc}, we take
\begin{equation}\label{eq:del aibpf}
\arraycolsep=1.4pt
\begin{array}{rll}
\aibpf_N^{ij} &= \gs^{-1}\ind[\floor{(i-1)/\gs} = \floor{(j-1)/\gs}] & \qquad \forall~i,j \in [N]\\[.1cm]
\aibpfi^{ij} &= \gs^{-1}\ind[\floor{(i-1)/\gs} = \floor{(j-1)/\gs}] & \qquad \forall~i,j \in \Z.
\end{array}
\end{equation}
With $\beta =M-1$, checking Assumptions \ref{ass:double stochasticity}--\ref{ass:band width} is again elementary, and in this case Assumption \ref{ass:similarity} is obviously satisfied.

\begin{figure}
\begin{tabular}{c@{}c}
\scalebox{.65}{
\begin{minipage}{.7\textwidth}
\centering
\begin{equation*}
\arraycolsep=5pt
\left(\begin{array}{ccccccccc}
0 & 1/3 & 1/3 & 1/3 & 0 & 0 & 0 & 0 & 0\\[.1cm]
0 & 1/3 & 1/3 & 1/3 & 0 & 0 & 0 & 0 & 0\\[.1cm]
0 & 1/3 & 1/3 & 1/3 & 0 & 0 & 0 & 0 & 0\\[.1cm]
0 & 0 & 0 & 0 & 1/3 & 1/3 & 1/3 & 0 & 0\\[.1cm]
0 & 0 & 0 & 0 & 1/3 & 1/3 & 1/3 & 0 & 0\\[.1cm]
0 & 0 & 0 & 0 & 1/3 & 1/3 & 1/3 & 0 & 0\\[.1cm]
1/3 & 0 & 0 & 0 & 0 & 0 & 0 & 1/3 & 1/3\\[.1cm]
1/3 & 0 & 0 & 0 & 0 & 0 & 0 & 1/3 & 1/3\\[.1cm]
1/3 & 0 & 0 & 0 & 0 & 0 & 0 & 1/3 & 1/3\\
\end{array}\right)
\end{equation*}
\end{minipage}}
&
\scalebox{.65}{
\begin{minipage}{.7\textwidth}
\centering
\begin{equation*}
\arraycolsep=5pt
\left(\begin{array}{ccccccccc}
1/3 & 1/3 & 1/3 & 0 & 0 & 0 & 0 & 0 & 0\\[.1cm]
1/3 & 1/3 & 1/3 & 0 & 0 & 0 & 0 & 0 & 0\\[.1cm]
1/3 & 1/3 & 1/3 & 0 & 0 & 0 & 0 & 0 & 0\\[.1cm]
0 & 0 & 0 & 1/3 & 1/3 & 1/3 & 0 & 0 & 0\\[.1cm]
0 & 0 & 0 & 1/3 & 1/3 & 1/3 & 0 & 0 & 0\\[.1cm]
0 & 0 & 0 & 1/3 & 1/3 & 1/3 & 0 & 0 & 0\\[.1cm]
0 & 0 & 0 & 0 & 0 & 0 & 1/3 & 1/3 & 1/3\\[.1cm]
0 & 0 & 0 & 0 & 0 & 0 & 1/3 & 1/3 & 1/3\\[.1cm]
0 & 0 & 0 & 0 & 0 & 0 & 1/3 & 1/3 & 1/3\\
\end{array}\right)
\end{equation*}
\end{minipage}}
\\
(a) & (b)
\end{tabular}
\caption{Matrices in (a) and (b) correspond to the LEPF and IBPF, respectively.}
\label{fig:brizzle matrices}
\end{figure}
Figures \ref{fig:brizzle matrices}a and \ref{fig:brizzle matrices}b show the matrices defined in \eqref{eq:del aibpf and alepf}--\eqref{eq:del aibpf} in the case $N=9$, $M=3$ and $\theta=1$. It follows from Assumptions \ref{ass:band width} and \ref{ass:similarity} that $\alpha_\infty$ is, like each $\alpha_N$, a row-stochastic matrix, which can be thought of as specifying the transition probabilities of a $\Z$-valued Markov chain. It turns out that the asymptotic variance in our CLT is expressed in terms of two copies of this chain. To this end, denote by $\E_{u,v}$, where $u,v\in\Z$, the expectation w.r.t.~the law of the bi-variate backward Markov chain $(I_{k},J_{k})_{0\leq k \leq n}$, where
\begin{equation}\label{eq:backward markov}
\begin{array}{rl}
&(I_{n},J_{n}) \thicksim \delta_{u} \otimes \delta_{v}, \\ 
&\P(I_{k}=i_{k}, J_{k}=j_{k}\,|\, I_{k+1}= i_{k+1}, J_{k+1}=j_{k+1}) = \alpha^{i_{k+1}i_{k}}_{\infty}\alpha^{j_{k+1}j_{k}}_{\infty}.
\end{array}
\end{equation}
Figure \ref{fig:trajectories} illustrates some segments of paths for $I$ (or $J$) which have strictly positive probability under the transitions $\alpha_\infty$ for the LEPF and IBPF, with $\gs=3$ and $\theta=1$.

\begin{figure}
\begin{tabular}{c@{}c}
\begin{minipage}{.475\textwidth}
\begin{center}
\begin{tikzpicture}[scale=.75]
\def\h{.5}
\def\v{.8}
\foreach \i in {1,...,12} 
{
	\foreach \j in {0,...,3} 
	{
		\foreach \k in {0,...,2} 
		{
			\pgfmathsetmacro{\H}{floor((\i-1)/3)*3+1+\k+1}
			\pgfmathsetmacro{\V}{\j+1}
			\draw[-] (\i*\h,\j*\v) -- (\H*\h,\V*\v);
		}
	}
}
\foreach \j in {-1,...,2}
{
	\pgfmathsetmacro{\V}{3-\j}
	\pgfmathsetmacro{\Vprint}{3-\j}
	\node[draw=none] at (-.5*\h,\V*\v) {\tiny{$n-\pgfmathprintnumber{\Vprint}$}};
}
	\node[draw=none] at (-.5*\h,0) {\tiny{$n$}};
\foreach \i in {1,...,12}
{
	\pgfmathsetmacro{\lab}{\i-3}
	\node[draw=none] at (\i*\h,-.5*\v) {\tiny{$\pgfmathprintnumber{\lab}$}};
}
\end{tikzpicture}
\end{center}
\end{minipage}
&
\begin{minipage}{.475\textwidth}
\begin{center}
\begin{tikzpicture}[scale=.75]
\def\h{.5}
\def\v{.8}
\foreach \i in {1,...,12} 
{
	\foreach \j in {0,...,3} 
	{
		\foreach \k in {0,...,2} 
		{
			\pgfmathsetmacro{\H}{floor((\i-1)/3)*3+1+\k}
			\pgfmathsetmacro{\V}{\j+1}
			\draw[-] (\i*\h,\j*\v) -- (\H*\h,\V*\v);
		}
	}
}

\foreach \j in {-1,...,2}
{
	\pgfmathsetmacro{\V}{3-\j}
	\pgfmathsetmacro{\Vprint}{3-\j}
	\node[draw=none] at (-.5*\h,\V*\v) {\tiny{$n-\pgfmathprintnumber{\Vprint}$}};
}
	\node[draw=none] at (-.5*\h,0) {\tiny{$n$}};

\foreach \i in {1,...,12}
{
	\pgfmathsetmacro{\lab}{\i-3}
	\node[draw=none] at (\i*\h,-.5*\v) {\tiny{$\pgfmathprintnumber{\lab}$}};
}
\end{tikzpicture}
\end{center}
\end{minipage}
\\
(a) & (b)
\end{tabular}
\caption{Some of the paths assigned positive probability by $\alpha_\infty$ for the (a) LEPF  and (b) IBPF. In both cases $\gs=3$ and in (a) $\theta=1$.}
\label{fig:trajectories}
\end{figure}
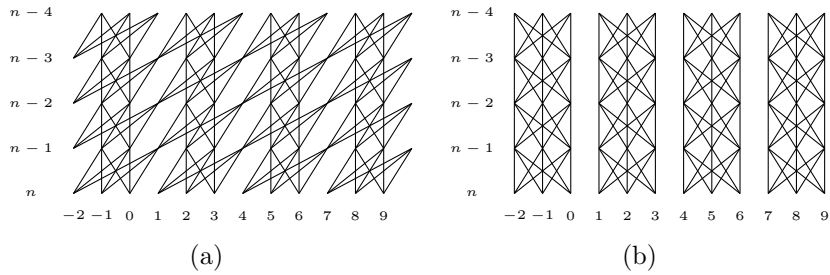

Before stating our main result we introduce some more notation. For all $n\geq 1$, define non-negative kernels $Q_n:\ss\times\sss\to\R_{+}$ as
\begin{equation}\label{eq:def Q}
Q_{n}(x,A) \defeq g_{n-1}(x)f(x,A), \qquad \forall~x\in\ss,~A\in \sss,
\end{equation}
and the corresponding operators on functions and measures
\begin{align*}
Q_{n}(\varphi)(x) &\defeq \int Q_{n}(x,\ud x')\varphi(x'),\qquad \forall~x\in\ss,~
\varphi\in\boundMeas(\ss),\\
\mu Q_{n}(A) &\defeq \int Q_{n}(x,A)\mu(\ud x), \qquad \forall~\mu \in
\pmeasure(\ss),~A\in\sss,
\end{align*}
respectively. Moreover we define for $n\geq 1$ and $0\leq p < n$
\begin{equation*}
\arraycolsep=1.4pt
\begin{array}{rlcrl}
Q_{p,p} &\defeq \id, &\qquad& Q_{p,n} &\defeq Q_{p+1}\cdots Q_{n},\\[.2cm]
\bQ_{p,p} &\defeq \id, &\qquad& \bQ_{p,n} &\defeq \bQ_{p+1}\cdots \bQ
_{n},
\end{array}
\end{equation*}
where $\bQ_n \defeq Q_{n}/\pi_{n-1}(g_{n-1})$ for all $n\geq 1$. Also let
\begin{equation*}
\gamma_{n}\defeq \pi_{0}Q_{0,n}, \qquad \forall~n\geq 0.
\end{equation*}

Define the tensor-product kernel $Q_n^{\otimes 2}(x,y,\ud(x',y')):=Q_n(x,\ud x')Q_n(y,\ud y')$, with the corresponding operators on functions and measures written similarly to those for $Q_n$, and finally define operators $\C_{0}$ and $\C_{1}$, such that for any $\varphi \in \boundMeas(\X^2)$,
\begin{align*}
\C_{0}(\varphi)(x,y) = \varphi(x,y) \quad \text{and} \quad \C_{1}(\varphi)(x,y) = \varphi(x,x), \qquad \forall~x,y \in \ss.
\end{align*}

We then have:
\begin{theorem}\label{the:CLT}
Fix $\gs > 1$ and $\hbw>0$ and suppose that Assumption \ref{ass:band asmc assumptions} holds. Then for any $\varphi\in\boundMeas(\X)$, Algorithm \ref{alg:alpha smc} has the property
\begin{equation*}
\sqrt{N}\big(\pi^{N}_{n}(\varphi) -  \pi_{n}(\varphi)\big) \indist{N\to\infty}
\normal(0,\sigma^{2}_{n}), \qquad \forall~n\in\N,
\end{equation*}
where $N$ goes to infinity along the sequence $\{\gs\gn:\gn=1,2,\ldots\}$, the following variances are assumed strictly positive,
\begin{align*}
\sigma^2_{0} &= \pi_{0}((\varphi-\pi_{0}(\varphi))^{2}) \\
\sigma^2_{n} &=\frac{1}{\gamma_n(\one)^2\per}\sum_{\substack{0\leq u<\per \\ |v|\leq2n\hbw}} \E_{u,u+v}\Big[\pi_{0}^{\otimes 2}\C_{\epsilon_{0}}Q^{\otimes 2}_{1}\C_{\epsilon_{1}}\cdots Q^{\otimes 2}_{n}\C_{\epsilon_{n}}\big(\cvarphi^{\otimes 2}\big)\Big],\qquad n\geq 1,
\end{align*}
with $\epsilon_{k} = \ind[I_{k}=J_{k}]$, for all $0 \leq k \leq n$, and $\cvarphi \defeq \varphi - \pi_{n}(\varphi)$.
\end{theorem}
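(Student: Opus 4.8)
The plan is to establish the CLT via a triangular-array martingale central limit theorem, taking Proposition \ref{prop:martingale result} as the entry point. That result furnishes a decomposition of the (unnormalized) error $\sqrt{N}\,\gamma_n^N(\cvarphi)$ into a sum of martingale increments $\sum_{p=0}^{n}\Delta_p^N$ adapted to the filtration $\F_p^N \defeq \sigma(\zeta_0,\ldots,\zeta_p)$, together with remainder terms that vanish in probability. Since $\cvarphi = \varphi - \pi_n(\varphi)$ gives $\gamma_n(\cvarphi) = \gamma_n(\varphi) - \pi_n(\varphi)\gamma_n(\one) = 0$, I would first write
$$
\sqrt{N}\big(\pi_n^N(\varphi) - \pi_n(\varphi)\big) = \frac{\gamma_n(\one)}{\gamma_n^N(\one)}\cdot\frac{\sqrt{N}\,\gamma_n^N(\cvarphi)}{\gamma_n(\one)},
$$
and use the consistency $\gamma_n^N(\one)/\gamma_n(\one)\to 1$ in probability (available from \cite{whiteley_et_al14}) to reduce, by Slutsky's lemma, the claimed CLT to a CLT for $\sqrt{N}\,\gamma_n^N(\cvarphi)/\gamma_n(\one)$; this is the source of the factor $\gamma_n(\one)^{-2}$ in $\sigma_n^2$. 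The base case $n=0$ is immediate: the $\zeta_0^i$ are i.i.d.\ $\pi_0$ with unit weights, so $\pi_0^N(\varphi)$ is an i.i.d.\ average and the ordinary CLT yields exactly $\sigma_0^2 = \pi_0((\varphi-\pi_0(\varphi))^2)$.

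For $n\geq 1$ I would apply a martingale array CLT (e.g.\ Hall--Heyde), which requires two conditions. The conditional Lindeberg (negligibility) condition I expect to be routine: each increment $\Delta_p^N$ is $N^{-1/2}$ times a centered sum over particles of uniformly bounded terms (using Assumption \ref{ass:bounded and positive g} and boundedness of $\varphi$), and the banded structure of Assumption \ref{ass:band width} limits the range of dependence, so a conditional fourth-moment bound $\sum_p \E[|\Delta_p^N|^4 \mid \F_{p-1}^N]\to 0$ follows from the $L_p$ estimates of \cite{whiteley_et_al14}. The substantive condition is convergence of the sum of conditional variances $\sum_{p=0}^{n}\E[(\Delta_p^N)^2 \mid \F_{p-1}^N]$ in probability to $\sigma_n^2$.

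The heart of the argument is identifying this limit with the colliding-chain expectation. Expanding each $\E[(\Delta_p^N)^2 \mid \F_{p-1}^N]$, the conditional independence of $\{\zeta_p^i\}_i$ given $\F_{p-1}^N$ eliminates cross terms within a sampling step, but the $\F_{p-1}^N$-measurable weights $W_p^i$ depend on the past through products of entries of $\alpha$ along ancestral lineages, so after taking expectations the conditional-variance sum becomes a double sum over pairs of particle indices, weighted by products of two $\alpha$-entries that trace two lineages. I would recognize this double sum, after normalization by $N$, as converging to the expectation under the bivariate backward chain $(I_k,J_k)$ of \eqref{eq:backward markov}: Assumption \ref{ass:similarity} lets the finite-$N$, $\alpha_N$-weighted sums converge to expectations under the $\alpha_\infty$-transitions; Assumption \ref{ass:periodic circulance} produces the phase-averaging $\per^{-1}\sum_{0\leq u<\per}$; and Assumption \ref{ass:band width} confines the offset to $|v|\leq 2n\hbw$, since two lineages each moving at most $\hbw$ per step can coincide within $n$ steps only when their terminal separation is at most $2n\hbw$. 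The operators $\C_{\epsilon_k}$ arise because two lineages occupying a common index at time $k$ (a collision, $\epsilon_k = \ind[I_k=J_k]=1$) resample the same particle, collapsing the pair onto the diagonal via $\C_1(\varphi)(x,y)=\varphi(x,x)$, whereas distinct indices give conditionally independent draws and the identity $\C_0$; tracing the two lineages forward through $Q_1^{\otimes 2},\ldots,Q_n^{\otimes 2}$ then assembles exactly $\pi_0^{\otimes 2}\C_{\epsilon_0}Q_1^{\otimes 2}\C_{\epsilon_1}\cdots Q_n^{\otimes 2}\C_{\epsilon_n}(\cvarphi^{\otimes 2})$. A useful consistency check, which also explains the truncation at $|v|\leq 2n\hbw$, is that any pair of lineages that never collides contributes $\pi_0^{\otimes 2}Q_1^{\otimes 2}\cdots Q_n^{\otimes 2}(\cvarphi^{\otimes 2}) = \gamma_n(\cvarphi)^2 = 0$.

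The main obstacle I anticipate is the rigorous passage from the finite double sum over particle indices to the $\alpha_\infty$ backward-chain expectation. This demands careful index bookkeeping to handle the circulant wrap-around near the ends of the cycle $[N]$, to verify that the band accumulates to an effective range of exactly $2n\hbw$ over $n$ steps while the period $\per$ governs the phase averaging, and --- most delicately --- to show that the $\F_{p-1}^N$-measurable empirical quantities appearing as coefficients concentrate on their deterministic ($\pi$-integral) limits, so that the conditional variances converge \emph{in probability} rather than merely in expectation. Combining the propagation-of-chaos and $L_p$ bounds of \cite{whiteley_et_al14} with the structural constraints of Assumption \ref{ass:band asmc assumptions} should close this gap; the assumed strict positivity of $\sigma_n^2$ then delivers the nondegenerate Gaussian limit, with convergence in probability to $0$ in any degenerate case.
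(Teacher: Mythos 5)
Your overall architecture---reduce to a CLT for $\sqrt{N}\bGam^{N}_{n}(\cvarphi)$ via Slutsky (your ratio form of the decomposition is algebraically valid and equivalent to the paper's additive one in \eqref{eq:bar gamma decompo 1}), then a martingale array CLT whose conditional-variance limit is the colliding-chain expectation---matches the paper, and your account of where Assumptions \ref{ass:periodic circulance}--\ref{ass:similarity} enter the variance formula (phase averaging over $\per$, truncation at $|v|\leq 2n\hbw$, passage from $\alpha_N$ to $\alpha_\infty$, vanishing of non-colliding pairs) is accurate. However, there is a genuine gap in your martingale step. You index the increments by time, $\Delta_p^N$ adapted to $\F_p^N=\sigma(\zeta_0,\ldots,\zeta_p)$ with $p=0,\ldots,n$, so each row of your array has the \emph{fixed} number $n+1$ of increments, each of which is $N^{-1/2}$ times a sum of $N$ conditionally independent, bounded, centered terms---hence of order one, and indeed itself asymptotically normal with nondegenerate variance whenever $\sigma_n^2>0$. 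For such an array the conditional Lindeberg condition of Theorem \ref{the:hh} cannot hold: $\E\big[(\Delta_p^N)^2\ind[|\Delta_p^N|>\varepsilon]\,\big|\,\F_{p-1}^N\big]$ converges to a strictly positive quantity, and your claimed bound $\sum_p\E\big[|\Delta_p^N|^4\,\big|\,\F_{p-1}^N\big]\to0$ is false, since the conditional fourth moment of a normalized sum of $N$ conditionally independent centered bounded variables is $O(1)$, not $o(1)$. This also misreads Proposition \ref{prop:martingale result}: it supplies an \emph{exact} decomposition \eqref{eq:martingale decompo} (there is no vanishing remainder) into $(n+1)\gs\gn$ increments $\xi_\varrho^{\gn}$, one per particle per time step, on a filtration refined particle-by-particle as in \eqref{eq:def F}, together with the uniform bound $|\xi_\varrho^{\gn}|\leq C_n/\sqrt{\gs\gn}$ of \eqref{eq:bounded difference}. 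It is precisely this fine indexing that makes the Lindeberg condition trivial; with your coarse time-indexed blocks you would instead need an inductive conditional-characteristic-function argument, which is a different (and unstated) proof.

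A second, smaller shortfall concerns the ``in probability'' convergence of the conditional variances, which you correctly flag as the delicate point but propose to close with generic propagation-of-chaos and $L_p$ estimates. The paper's mechanism is more specific and worth knowing: it splits $\sum_\varrho\E\big[(\xi_\varrho^{\gn})^2\,\big|\,\F_{\varrho-1}^{\gn}\big]$ into its expectation plus a residual; the residual is killed (Proposition \ref{eq:convergence in probability of the residual}) by a combinatorial argument counting pairs $(\varrho,\varrho')$ whose ancestral parent sets are disjoint---Lemmata \ref{lem:independent_step_i}--\ref{lem:independent_step_iii}, which exploit the band structure of Assumption \ref{ass:band width} to show all but $O(\gn)$ of the $O(\gn^2)$ pairs are independent---while the expectation is computed exactly for finite $N$ by the tensor-product formula of Lemma \ref{lem:tensor product formula} and only then passed to the limit (Proposition \ref{prop:convergence of second moment}), where the wrap-around and periodicity bookkeeping you anticipate is carried out. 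Your consistency check that never-colliding lineages contribute $\gamma_n(\cvarphi)^2=0$ is exactly how the paper disposes of the out-of-band term $A_{\gn}$, so that part of your intuition is on target.
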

\begin{remark}
Since the LEPF and IBPF are special cases of Algorithm \ref{alg:alpha smc}, Theorem \ref{the:CLT} applies to them immediately. Note that the only distinction between the asymptotic variances for LEPF and the IBPF arises from $\alpha_\infty$, as given for these two algorithms in \eqref{eq:del aibpf and alepf} and \eqref{eq:del aibpf}. In Section \ref{sec:comparison} we shall examine $\sigma_n^2$ for the LEPF and the IBPF in detail, which involves study of the $I,J$ processes for these two algorithms.
\end{remark}

\subsection{Martingale array and the proof of the main result}
\label{sec:martingale representation}

Defining the random measures
\begin{equation}\label{eq:def Gamma}
\Gam^{N}_{n} \defeq \frac{1}{N}\sum_{i=1}^{N}W^{i}_{n}\delta_{\zeta^{i}_{n}}, \qquad \bGam^{N}_{n} \defeq \frac{\Gam^{N}_{n}}{\gamma_{n}(\one)},\quad \forall n\in\N,
\end{equation}
allows us to decompose the particle approximation error as
\begin{equation}\label{eq:bar gamma decompo 1}
\sqrt{N}\big(\pi^{N}_{n}(\varphi) - \pi_{n}(\varphi)\big)
= \sqrt{N}\bGam^{N}_{n}(\cvarphi)  - \pi^{N}_{n}(\cvarphi)\frac{\sqrt{N}}{\gamma_{n}(\one)}\big(\Gam^{N}_{n}(\one) - \gamma_{n}(\one)\big),
\end{equation}
where $\cvarphi \defeq \varphi - \pi_{n}(\varphi)$.

Our overall strategy in proving Theorem \ref{the:CLT} is to establish asymptotic normality of $\sqrt{N}\bGam^{N}_{n}(\cvarphi)$ as $N\to\infty$ using the CLT for martingale arrays \cite{hall_et_heyde80}, and to apply results from \cite{whiteley_et_al14} to show that the second term on the r.h.s.~of \eqref{eq:bar gamma decompo 1} converges to zero in probability. Our first step is to identify a martingale representation for $\sqrt{N}\bGam^{N}_{n}(\cvarphi)$, for which the setup is as follows.

Fix $n\in\N$ and $\gs\geq1$. For given $\gn\geq1$ and $\varphi \in \boundMeas(\X)$ define, for $\varrho \in [\gs\gn]$,
\begin{equation}\label{eq:def xi_0}
\xi_{\varrho}^{\gn} \defeq
\displaystyle\frac{1}{\sqrt{\gs\gn}}\Big(\bQ_{0,n}(\varphi)(\zeta^{i}_{0})-\pi_{0}\bQ_{0,n}(\varphi)\Big),
\end{equation}
and for $\varrho \in [(n+1)\gs\gn]\setminus[\gs\gn]$,
\begin{equation} \label{eq:def xi}
\xi_{\varrho}^{\gn} \defeq \displaystyle\frac{W^{i}_{p}}{\sqrt{\gs\gn}\gamma_{p}(\one)}\Bigg(\bQ_{p,n}(\cvarphi)(\zeta^{i}
_{p}) - \frac{\sum_{j}
\alpha^{ij}W^{j}_{p-1}\bQ_{p-1,n}(\cvarphi)
(\zeta^{j}_{p-1})}
{\sum_{j}\alpha^{ij}W^{j}_{p-1}\bQ_{p}(\one)
(\zeta^{j}_{p-1})}\Bigg),
\end{equation}
where $p = p_{\gn}(\varrho)$, $i = i_{\gn}(\varrho)$ and 
\begin{equation*}
p_{\gn}(\varrho) \defeq \left\lfloor\frac{\varrho-1}{\gs\gn}\right\rfloor \quad\text{ and }\quad i_{\gn}(\varrho) \defeq \varrho\cmod \gs\gn.
\end{equation*}
Writing out the expression for $W_p^i$, $p\geq1$, in Algorithm \ref{alg:alpha smc}, using the fact that $\alpha$ is row-stochastic and Assumption \ref{ass:bounded and positive g},
$$
W_p^{i_p}=\sum_{(i_0,\ldots,i_{p-1})}\prod_{q=0}^{p-1}\alpha^{i_{q+1}i_q}g_q(\zeta_{q}^{i_q})\leq \prod_{q=0}^{p-1} \Vert g_q\Vert_{\infty}<\infty.
$$
Combining this with \eqref{eq:def xi_0}, \eqref{eq:def xi} and again using Assumption \ref{ass:bounded and positive g}, we have
\begin{equation}\label{eq:xi_bound_prelim}
\sup_{\varrho\in[(n+1)\gs\gn]}\abs{\xi^{\gn}_{\varrho}} \leq \frac{1}{\sqrt{\gs\gn}} \max_{p\in\{0,\ldots,n\}} \frac{\prod_{q=0}^{p-1} \Vert g_q\Vert_{\infty}}{\gamma_p(\one)}\mathrm{osc}(\bQ_{p,n}(\cvarphi))<\infty,
\end{equation}
with the convention $\prod_{q=0}^{-1}\Vert g_q\Vert_{\infty}=1$.

In our $\gn\rightarrow\infty$ analysis we consider the quantities in \eqref{eq:def xi_0}--\eqref{eq:def xi} associated with an instance of Algorithm \ref{alg:alpha smc} for each $m\geq1$. We harmlessly assume that $\P$ makes these instances statistically independent, but we commit an abuse, especially in \eqref{eq:def F} below, and suppress from the notation the association of $\{\zeta_p^i:i\in[\gs\gn]\}$, and various other objects, with the particular value $m$.

For each $\gn \geq 0$ define $\F_{0}^{\gn} \defeq \{\emptyset,\ss\}$, and then define $\sigma$-algebras $\{\F_{\varrho}^{m}: 1\leq \varrho \leq (n+1)\gs\gn,~\gn \geq 1\}$ recursively by
\begin{equation}\label{eq:def F}
\F_{\varrho}^{m} \defeq \F_{(n+1)(\gn-1)\gs}^{\gn-1} \vee \sigma\Big(\zeta_{p_{\gn}(1)}^{i_{\gn}(1)},\ldots,\zeta_{p_{\gn}(\varrho)}^{i_{\gn}(\varrho)}\Big).
\end{equation}

With these definitions in hand, we can state the following result. The bound in \eqref{eq:bounded difference} summarises \eqref{eq:xi_bound_prelim}, the rest of the statement is a direct application of \cite[Proposition 1 and Theorem 1]{whiteley_et_al14} and provides what we shall need: the desired martingale structure and bounds on the particle approximation errors.
\begin{proposition}\label{prop:martingale result}
Fix $n\geq 0$, $\hbw>0$ and $\gs \geq 1$ and suppose that Assumption \ref{ass:double stochasticity} holds. For any $\varphi\in\boundMeas(\X)$ there exists $C_{n}\in \R$ such that
\begin{equation}\label{eq:bounded difference}
\abs{\xi^{\gn}_{\varrho}} \leq \frac{1}{\sqrt{\gs\gn}}C_{n}, \qquad \forall~\gn\geq 1,~\varrho \in [(n+1)\gs\gn].
\end{equation}
For each $\gn\geq 1$, $\Big\{\Big(\sum_{s=1}^{\varrho}\xi^{\gn}_{s},~\F^{\gn}_{\varrho}\Big): \varrho \in [(n+1)\gs\gn]\Big\}$ is a zero-mean, square integrable martingale and
\begin{equation}\label{eq:martingale decompo}
\sqrt{\gs\gn}\bGam^{\gs\gn}_{n}(\cvarphi) = \sum_{\varrho=1}^{(n+1)\gs\gn}\xi^{\gn}_{\varrho}.
\end{equation}
Moreover, for any $p\geq1$
\begin{align}
\sup_{M,m\geq1} \sqrt{\gs\gn} \;\E[|\pi^{\gs\gn}_{n}(\varphi)-\pi_{n}(\varphi)|^p]^{1/p} &< \infty \label{eq:convergence in mean}\\
\sup_{M,m\geq1} \sqrt{\gs\gn}\; \E[\big|\Gamma^{\gs\gn}_n(\one)-\gamma_{n}(\one)\big|^p]^{1/p} &< \infty \label{eq:proper scaling}
\end{align}
\end{proposition}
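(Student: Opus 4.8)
The proposition bundles four claims: the uniform increment bound \eqref{eq:bounded difference}, the martingale-array structure together with the decomposition \eqref{eq:martingale decompo}, and the two $L_p$ bounds \eqref{eq:convergence in mean}--\eqref{eq:proper scaling}. The increment bound is essentially already in hand from \eqref{eq:xi_bound_prelim}, and my plan for the remainder is to recognise Algorithm \ref{alg:alpha smc} under Assumption \ref{ass:double stochasticity} as an instance of the \aSMC{} system of \cite{whiteley_et_al14} and then transcribe \cite[Proposition 1]{whiteley_et_al14} and \cite[Theorem 1]{whiteley_et_al14}. So the genuine tasks are (i) pinning down the constant $C_n$, and (ii) checking that the definitions \eqref{eq:def xi_0}, \eqref{eq:def xi}, \eqref{eq:def F} line up exactly with the cited framework.

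For the bound, I would take $C_n := \max_{0\le p\le n}\gamma_p(\one)^{-1}\big(\prod_{q=0}^{p-1}\Vert g_q\Vert_\infty\big)\mathrm{osc}(\bQ_{p,n}(\cvarphi))$, which is exactly the factor multiplying $(\gs\gn)^{-1/2}$ in \eqref{eq:xi_bound_prelim} and does not depend on $\gn$, giving \eqref{eq:bounded difference}. Its finiteness is where Assumption \ref{ass:bounded and positive g} enters: boundedness makes each $\Vert g_q\Vert_\infty$ finite, positivity makes $\gamma_p(\one)=\pi_0 Q_{0,p}(\one)>0$, and $\mathrm{osc}(\bQ_{p,n}(\cvarphi))\le 2\Vert \bQ_{p,n}(\cvarphi)\Vert_\infty<\infty$. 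Square integrability of the increments is then immediate from \eqref{eq:bounded difference}.

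The substantive point behind the martingale property is the identity that $\E[\bQ_{p,n}(\cvarphi)(\zeta_p^i)\mid \zeta_0,\ldots,\zeta_{p-1}]$ equals the ratio subtracted in \eqref{eq:def xi}, which I would verify directly: the conditional law of $\zeta_p^i$ is $(W_p^i)^{-1}\sum_j\alpha^{ij}W_{p-1}^j Q_p(\zeta_{p-1}^j,\cdot)$, and using $Q_p=\pi_{p-1}(g_{p-1})\bQ_p$, the semigroup identity $\bQ_p\bQ_{p,n}=\bQ_{p-1,n}$, and $\bQ_p(\one)=g_{p-1}/\pi_{p-1}(g_{p-1})$, one recovers both the numerator and the denominator of that ratio. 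Because the time-$p$ particles are conditionally independent given $\zeta_0,\ldots,\zeta_{p-1}$, conditioning on the larger $\F_{\varrho-1}^{\gn}$ leaves this conditional law unchanged, and the $\F_{\varrho-1}^{\gn}$-measurable prefactor $W_p^i/(\sqrt{\gs\gn}\,\gamma_p(\one))$ pulls out, so $\E[\xi_\varrho^{\gn}\mid\F_{\varrho-1}^{\gn}]=0$ (the $p=0$ case in \eqref{eq:def xi_0} being the trivial statement that $\zeta_0^i\sim\pi_0$ is unbiased for $\pi_0\bQ_{0,n}(\varphi)$). The decomposition \eqref{eq:martingale decompo} is the telescoping-sum identity of \cite[Proposition 1]{whiteley_et_al14}, which I would quote once the notation is matched.

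Finally, the two $L_p$ bounds \eqref{eq:convergence in mean}--\eqref{eq:proper scaling} are read off from \cite[Theorem 1]{whiteley_et_al14}, which supplies, for any doubly stochastic $\alpha$, $L_p$ control at rate $(\gs\gn)^{-1/2}$ of both the normalised error $\pi_n^{\gs\gn}(\varphi)-\pi_n(\varphi)$ and the unnormalised error $\Gamma_n^{\gs\gn}(\one)-\gamma_n(\one)$. I expect the main obstacle to be organisational rather than analytic: confirming that double stochasticity (Assumption \ref{ass:double stochasticity}) alone is the hypothesis under which those results hold, reconciling the flattened index and filtration conventions of \eqref{eq:def F} with those of \cite{whiteley_et_al14}, and --- most importantly for the stated suprema --- verifying that the constants in \cite[Theorem 1]{whiteley_et_al14} depend on $n$ and $p$ but not on the particular factorisation $N=\gs\gn$, so that $\sup_{M,m\ge1}$ in \eqref{eq:convergence in mean}--\eqref{eq:proper scaling} is indeed finite.
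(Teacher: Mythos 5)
Your proposal is correct and follows essentially the same route as the paper: the paper's proof of Proposition \ref{prop:martingale result} consists precisely of noting that \eqref{eq:bounded difference} summarises \eqref{eq:xi_bound_prelim} (with your $C_n$ being exactly that constant) and then invoking \cite[Proposition 1 and Theorem 1]{whiteley_et_al14} for the martingale structure, the decomposition \eqref{eq:martingale decompo}, and the $L_p$ bounds \eqref{eq:convergence in mean}--\eqref{eq:proper scaling} under Assumption \ref{ass:double stochasticity}. Your additional direct verification that $\E[\bQ_{p,n}(\cvarphi)(\zeta_p^i)\mid\zeta_0,\ldots,\zeta_{p-1}]$ equals the subtracted ratio in \eqref{eq:def xi} (via $Q_p=\pi_{p-1}(g_{p-1})\bQ_p$, $\bQ_p\bQ_{p,n}=\bQ_{p-1,n}$ and $\bQ_p(\one)=g_{p-1}/\pi_{p-1}(g_{p-1})$) is a correct, self-contained unpacking of what the paper leaves to the citation, and the conditional-independence argument for passing from $\sigma(\zeta_0,\ldots,\zeta_{p-1})$ to $\F^{\gn}_{\varrho-1}$ is the right justification.
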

\begin{remark}\label{rem:borel cantelli}
By a Borel-Cantelli argument, it follows from \eqref{eq:convergence in mean}--\eqref{eq:proper scaling} that for both the LEPF and IBPF, the particle approximation errors $\pi^{\gs\gn}_{n}(\varphi)-\pi_{n}(\varphi)$ and $\Gamma^{\gs\gn}_n(\one)-\gamma_{n}(\one)$ converge to zero almost surely, both in the regime $\gs$ fixed, $\gn\rightarrow\infty$ and in the regime $\gn$ fixed, $\gs\rightarrow\infty$.
\end{remark}
\begin{remark}
It follows from the martingale part of Proposition \ref{prop:martingale result}  that $\E[\bGam^{\gs\gn}_{n}(\one)]=1$, implying that for the LEPF and IBPF, $\Gam^{\gs\gn}_{n}(\one)$ is an unbiased approximation of the normalising constant $\gamma_n(\one)$; a fact implying that these algorithms are also suitable for implementation as a part of a particle Markov chain Monte Carlo algorithm \cite{andrieu_et_al10}. Some of the arguments in the proof of Theorem \ref{the:CLT} could be adapted to establish asymptotic normality of $\sqrt{\gs\gn}(\Gam^{\gs\gn}_{n}(\one)-\gamma_{n}(\one))$ with $\gs$ fixed, $\gn\rightarrow\infty$, but the details are beyond the scope of this paper.
\end{remark}

In order to establish asymptotic normality of $\sqrt{\gs\gn}\bGam^{\gs\gn}_{n}(\cvarphi)$ we shall apply the following special case of \cite[Theorem 3.2]{hall_et_heyde80}.
\begin{theorem}\label{the:hh}
Fix $n\geq 0$ and $\gs \geq 1$. For each $m\geq 1$, suppose that $\Big\{\Big(\sum_{s=1}^{\varrho}\xi^{\gn}_{s},~\F^{\gn}_{\varrho}\Big): \varrho \in [(n+1)\gs\gn]\Big\}$ is zero-mean, square integrable martingale, and that $\F_{\varrho}^{\gn} \subset \F_{\varrho}^{\gn+1}$ for each $\varrho \in [(n+1)\gs\gn]$. If
\begin{align}
\sum_{\varrho=1}^{(n+1)\gs\gn}\E\Big[\big(\xi^{\gn}_{\varrho}\big)^2\ind\big[\big|\xi^{\gn}
_{\varrho}\big|>\varepsilon\big]\,
\Big|\,\F^{\gn}_{\varrho-1}\Big]&\inprob{\gn\to\infty} 0,\qquad \forall \varepsilon > 0, \label{eq:conditional lindeberg}\\
\sum_{\varrho=1}^{(n+1)\gs\gn}\E\Big[\big(\xi^{\gn}_{\varrho}\big)^2\,\Big|\,\F^{\gn}_{\varrho-1}\Big]&\inprob{\gn\to\infty}
\sigma^2,\quad \sigma^2>0,\label{eq:conditional variance}
\end{align}
then
\begin{equation}\label{eq:hh clt}
\sum_{\varrho=1}^{(n+1)\gs\gn}\xi^{\gn}_{\varrho} \indist{\gn\to\infty} \normal(0,\sigma^2).
\end{equation}
\end{theorem}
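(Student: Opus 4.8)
The plan is to obtain the claimed convergence \eqref{eq:hh clt} as a direct specialization of the martingale central limit theorem of \cite{hall_et_heyde80} (their Theorem 3.2), rather than reproving a martingale CLT from scratch. The first step is to set up the dictionary between the two statements: the array index that tends to infinity in Hall and Heyde is here $\gn$, the within-row index is $\varrho$ running over $[(n+1)\gs\gn]$ with $n$ and $\gs$ held fixed (so the row length $(n+1)\gs\gn\to\infty$ as $\gn\to\infty$), the martingale differences are the $\xi_\varrho^\gn$, and the filtration is $\{\F_\varrho^\gn\}$. The structural hypotheses—that $\big\{\big(\sum_{s=1}^\varrho\xi_s^\gn,\F_\varrho^\gn\big)\big\}$ is a zero-mean, square-integrable martingale and that the $\sigma$-fields are nested, $\F_\varrho^\gn\subset\F_\varrho^{\gn+1}$—are assumed in the statement and are exactly the structural requirements of \cite[Theorem 3.2]{hall_et_heyde80}.

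The second step is to match the two analytic conditions and specialize the conclusion. Condition \eqref{eq:conditional lindeberg} is verbatim the conditional Lindeberg condition, and condition \eqref{eq:conditional variance} is the conditional-variance condition, with the almost surely finite limiting random variable $\eta^2$ of Hall and Heyde taken to be the deterministic constant $\sigma^2$. Invoking \cite[Theorem 3.2]{hall_et_heyde80} then yields $\sum_\varrho\xi_\varrho^\gn\to Z$ in distribution (indeed stably), where $Z$ has characteristic function $\E[\exp(-\tfrac12\eta^2t^2)]$. Since $\eta^2\equiv\sigma^2$ is constant, this characteristic function reduces to $\exp(-\tfrac12\sigma^2t^2)$, which identifies the limit as $\normal(0,\sigma^2)$ and gives \eqref{eq:hh clt}.

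The only point genuinely requiring care—and the natural place for an obstacle—is the passage from the mixed-normal limit of the general theorem to the plain Gaussian limit claimed here: one must observe that the random scale $\eta^2$ collapses to the fixed $\sigma^2$, where the assumption $\sigma^2>0$ is what rules out a degenerate limit. If a fully self-contained argument were demanded in place of a citation, I would fall back on the standard characteristic-function route: write $\E\big[\prod_\varrho(1+it\xi_\varrho^\gn)\big]$, control the remainder through the elementary bound on $|e^{ix}-1-ix+\tfrac12 x^2|$, discard the contribution of large increments using \eqref{eq:conditional lindeberg}, and pin the exponent to $-\tfrac12\sigma^2t^2$ using \eqref{eq:conditional variance}; with a deterministic limit variance the nesting hypothesis is in fact not needed for convergence in distribution, but it is retained here so that \cite[Theorem 3.2]{hall_et_heyde80} can be invoked verbatim. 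I would stress that the substantive work lies not in this theorem but in the later verification of \eqref{eq:conditional lindeberg} and \eqref{eq:conditional variance} for the specific increments $\xi_\varrho^\gn$ of Algorithm \ref{alg:alpha smc}, the latter being where the variance formula $\sigma_n^2$ of Theorem \ref{the:CLT} emerges.
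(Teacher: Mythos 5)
Your proposal is correct and matches the paper exactly: the paper states Theorem \ref{the:hh} without a separate proof, presenting it precisely as a special case of \cite[Theorem 3.2]{hall_et_heyde80}, which is the specialization you carry out (deterministic limit $\eta^2\equiv\sigma^2$ collapsing the mixed-normal limit to $\normal(0,\sigma^2)$). Your closing remark is also aligned with the paper's structure—the real work there is indeed the verification of \eqref{eq:conditional lindeberg} and \eqref{eq:conditional variance} in Propositions \ref{eq:convergence in probability of the residual} and \ref{prop:convergence of second moment}.
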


We now present the main arguments in the proof of Theorem \ref{the:CLT}.

\begin{proof}[Proof of Theorem \ref{the:CLT}]
We first note that the case $n=0$ is trivial, since in Algorithm \ref{alg:alpha smc}, $\{\zeta_0^i:i\in[N]\}$ are i.i.d.~samples from $\pi_0$. So it remains to consider $n\geq1$. With the definitions \eqref{eq:def xi_0}, \eqref{eq:def xi}, and \eqref{eq:def F}, Proposition \ref{prop:martingale result} establishes that
\begin{equation*}
\Big\{\Big(\sum_{s=1}^{\varrho}\xi^{\gn}_{s},~\F^{\gn}_{\varrho}\Big): \varrho \in [(n+1)\gs\gn]\Big\}
\end{equation*}
constitutes the martingale array as in the statement of Theorem \ref{the:hh}, and our next task is to check conditions \eqref{eq:conditional lindeberg} and \eqref{eq:conditional variance}.

Condition \eqref{eq:conditional lindeberg} is easily seen to be satisfied due to \eqref{eq:bounded difference}. The majority of our work then goes into checking \eqref{eq:conditional variance}.
Since, for given $\gn\geq 1$, $\big\{\big(\xi^{\gn}_\varrho,\F^{\gn}_{\varrho}\big): \varrho\in [(n+1)\gs\gn]\}$ is a martingale difference sequence, we have
\begin{align*}
\sum_{\varrho=1}^{(n+1)\gs\gn}\E\Big[\big(\xi^{\gn}_{\varrho}\big)^2\,\Big|\,\F^{\gn}_{\varrho-1}\Big]
&=
\E\Bigg[\Bigg(\sum_{\varrho=1}^{(n+1)\gs\gn}\xi^{\gn}_{\varrho}\Bigg)^2\Bigg] \\
&+
\sum_{\varrho=1}^{(n+1)\gs\gn}\E\Big[\big(\xi^{\gn}_{\varrho}\big)^2\,\Big|\,\F^{\gn}_{\varrho-1}\Big] - \E\Big[\big(\xi^{\gn}_{\varrho}\big)^2\Big].\nonumber
\end{align*}
Proposition \ref{eq:convergence in probability of the residual} in Section \ref{sec:conditional independence} establishes convergence to zero of the residual, in the sense that
\begin{equation*}
\sum_{\varrho=1}^{(n+1)\gs\gn}\E\Big[\big(\xi^{\gn}_{\varrho}\big)^2\,\Big|\,\F^{\gn}_{\varrho-1}\Big] - \E\Big[\big(\xi^{\gn}_{\varrho}\big)^2\Big]\inprob{m\to\infty}0.
\end{equation*}
Proposition \ref{prop:convergence of second moment} in Section \ref{sec:conditional variance} establishes convergence of the variance, in the sense that
\begin{align*}
&\E\Bigg[\Bigg(\sum_{\varrho=1}^{(n+1)\gs\gn}\xi^{\gn}_{\varrho}\Bigg)^2\Bigg]\surely{m\to\infty}\\
&\qquad \qquad \frac{1}{\per\gamma_{n}(\one)^{2}}\sum_{\substack{0\leq u<\per \\ |v|\leq2n\hbw}} \E_{u,v+u}\Big[\pi_{0}^{\otimes 2}\C_{\epsilon_{0}}Q^{\otimes 2}_{1}\C_{\epsilon_{1}}\cdots Q^{\otimes 2}_{n}\C_{\epsilon_{n}}\big(\cvarphi^{\otimes 2}\big)\Big],
\end{align*}
where $\epsilon_k=\mathbb{I}[I_k=J_k]$. Thus condition \eqref{eq:conditional variance} is satisfied and so by \eqref{eq:martingale decompo} in Proposition \ref{prop:martingale result},
\begin{equation}\label{eq:conv in dist}
\sqrt{\gs\gn}\bGam^{\gs\gn}_{n}(\cvarphi) \indist{m\to\infty} \normal(0,\sigma_n^{2}).
\end{equation}
By \ref{ass:double stochasticity} we can use \eqref{eq:convergence in mean} and \eqref{eq:proper scaling} of Proposition \ref{prop:martingale result} and H\"older's inequality to obtain
\begin{align*}
&\lim_{m\to\infty}\E\Big[\Big|\pi^{\gs\gn}_{n}(\cvarphi)\sqrt{\gs\gn}\Big(\Gam^{\gs\gn}_{n}(\one) - \gamma_{n}(\one)\Big)\Big|\Big] \\
&\leq \lim_{m\to\infty}\E\Big[\Big|\pi^{\gs\gn}_{n}(\cvarphi)\Big|^2\Big]^{\frac{1}{2}}\sup_{m\geq 1}\sqrt{\gs\gn}\E\big[\big|\Gam^{\gs\gn}_{n}(\one) - \gamma_{n}(\one)\big|^2\big]^{\frac{1}{2}} = 0,
\end{align*}
implying
\begin{equation}\label{eq:residual conv in prob}
\pi^{\gs\gn}_{n}(\cvarphi)\sqrt{\gs\gn}\Big(\Gam^{\gs\gn}_{n}(\one) - \gamma_{n}(\one)\Big) \inprob{m\to\infty} 0.
\end{equation}
The claim follows by Slutsky's theorem from \eqref{eq:bar gamma decompo 1}, \eqref{eq:conv in dist} and \eqref{eq:residual conv in prob}.
\end{proof}

\subsection{Convergence of the residual to zero}
\label{sec:conditional independence}

\begin{proposition}\label{eq:convergence in probability of the residual}
Under the assumptions of Theorem \ref{the:CLT},
\begin{equation*}
\sum_{\varrho=1}^{(n+1)\gs\gn}\E\Big[\big(\xi^{\gn}_{\varrho}\big)^2\,\Big|\,\F^{\gn}_{\varrho-1}\Big] - \E\Big[\big(\xi^{\gn}_{\varrho}\big)^2\Big]\inprob{\gn\to\infty}0.
\end{equation*}
\end{proposition}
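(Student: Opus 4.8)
The plan is to exploit the conditional independence of the particles within each time step to collapse the sequential conditioning, and then to establish a law of large numbers for one bounded random variable per time step. First, for $\varrho$ in the block with $p=p_\gn(\varrho)$ and $i=i_\gn(\varrho)$, note from \eqref{eq:def xi} that $\xi^\gn_\varrho$ is a function of $\zeta^i_p$ together with the configuration up to time $p-1$; write $\mathcal G^{\gn}_{p-1}$ for the $\sigma$-algebra generated by all particles at times $0,\dots,p-1$. Since the time-$p$ particles are conditionally independent given $\zeta_0,\dots,\zeta_{p-1}$, and $\F^\gn_{\varrho-1}$ augments $\mathcal G^{\gn}_{p-1}$ only with time-$p$ particles of index below $i$, the conditional law of $\zeta^i_p$ is unchanged, whence $\E[(\xi^\gn_\varrho)^2\mid\F^\gn_{\varrho-1}]=\E[(\xi^\gn_\varrho)^2\mid\mathcal G^{\gn}_{p-1}]$. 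Setting
\[ V^\gn_p:=\sum_{i=1}^{\gs\gn}\E\big[(\xi^\gn_{(p,i)})^2\mid\mathcal G^{\gn}_{p-1}\big], \]
the quantity in the claim equals $\sum_{p=0}^n(V^\gn_p-\E[V^\gn_p])$, so it suffices to show $\var(V^\gn_p)\to0$ for each fixed $p$; Cauchy--Schwarz over the finitely many blocks then yields $L^2$, hence in-probability, convergence.

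The block $p=0$ is trivial, as $\zeta^i_0$ are i.i.d.\ from $\pi_0$ and $\xi^\gn_{(0,i)}$ depends only on $\zeta^i_0$, so $V^\gn_0$ is deterministic. For $p\geq1$, recognising that the subtracted term in \eqref{eq:def xi} is exactly the conditional mean $\E[\bQ_{p,n}(\cvarphi)(\zeta^i_p)\mid\mathcal G^{\gn}_{p-1}]$ gives
\[ \E\big[(\xi^\gn_{(p,i)})^2\mid\mathcal G^{\gn}_{p-1}\big]=\frac{(W^i_p)^2}{\gs\gn\,\gamma_p(\one)^2}\,\var\big(\bQ_{p,n}(\cvarphi)(\zeta^i_p)\mid\mathcal G^{\gn}_{p-1}\big). \]
Expanding the conditional variance through the proposal kernel of Algorithm \ref{alg:alpha smc} and summing over $i$, $V^\gn_p$ takes the local pair-average form
\[ V^\gn_p=\frac{1}{\gs\gn\,\gamma_p(\one)^2}\sum_{l,j}\Big(\sum_i\alpha^{il}\alpha^{ij}\Big)W^l_{p-1}W^j_{p-1}\Big[g_{p-1}(\zeta^l_{p-1})Q_p(h^2)(\zeta^j_{p-1})-Q_p(h)(\zeta^l_{p-1})Q_p(h)(\zeta^j_{p-1})\Big], \]
with $h:=\bQ_{p,n}(\cvarphi)$. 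By Assumption \ref{ass:band width} the weight $\sum_i\alpha^{il}\alpha^{ij}$ is nonzero only when $\mg(l,j)\leq 2\hbw$, the particle weights are bounded by the estimate preceding \eqref{eq:xi_bound_prelim}, and $g_{p-1},Q_p(h),Q_p(h^2)$ are bounded; hence $V^\gn_p$ is a uniformly bounded $\mathcal G^{\gn}_{p-1}$-measurable random variable in which each index enters only $O(\hbw)$ nonzero pairs.

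To show $\var(V^\gn_p)\to0$ I would represent the configuration up to time $p-1$ as a function of a mutually independent family $\{U^i_q:i\in[\gs\gn],\,0\leq q\leq p-1\}$ of generation variables (possible since $\X$ is Polish), writing each $\zeta^i_q$ as a measurable function of $\mathcal G^{\gn}_{q-1}$ and $U^i_q$, and then apply the Efron--Stein inequality. The decisive point is the bounded range of influence: by the bandwidth constraint, replacing a single $U^i_q$ perturbs only $O((p-q)\hbw)$ of the time-$(p-1)$ particles, hence only $O((p-q)\hbw^2)$ of the nonzero pairs above, so by the uniform bounds of the previous paragraph it changes $V^\gn_p$ by at most a constant multiple of $(\gs\gn)^{-1}$. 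Summing these squared bounded differences over the $p\,\gs\gn$ independent sources gives $\var(V^\gn_p)=O((\gs\gn)^{-1})\to0$, which completes the reduction.

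The main obstacle is exactly this variance bound. For the IBPF the $\gn$ groups are independent, so $V^\gn_p$ is an average of i.i.d.\ group functionals and an ordinary law of large numbers suffices; but for the LEPF the exchange mechanism couples all groups, and these couplings persist through the accumulated history, so a direct estimate of the covariances between distant pair-terms in the expansion of $V^\gn_p$ would have to exclude long-range dependence, which is not available from the results at hand. Localising the influence of each independent source of randomness through the band structure, rather than bounding pairwise correlations directly, is what makes the argument succeed uniformly for both algorithms.
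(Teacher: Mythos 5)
Your proposal is correct, and it reaches the conclusion by a genuinely different route than the paper. The paper keeps the doubly-indexed array $Z^{\gn}_{\varrho}\defeq\E[(\xi^{\gn}_{\varrho})^2\,|\,\F^{\gn}_{\varrho-1}]-\E[(\xi^{\gn}_{\varrho})^2]$ intact, applies a Chebyshev-type bound to $\sum_{\varrho}Z^{\gn}_{\varrho}$, and controls the $O(N^2)$ cross terms $\E[Z^{\gn}_{\varrho}Z^{\gn}_{\varrho'}]$ by showing that all but $O(N)$ of them vanish exactly: $Z^{\gn}_{\varrho}$ is measurable w.r.t.\ the $\sigma$-algebra of the parent set of its particle (its Lemma \ref{lem:independent_step_i}), disjoint parent sets generate independent $\sigma$-algebras (Lemma \ref{lem:independent_step_ii}, proved by induction plus a monotone class argument), and the band structure of $\alpha$ bounds the number of pairs with overlapping parent sets (Lemma \ref{lem:independent_step_iii}); the surviving pairs are bounded crudely by \eqref{eq:bounded difference}. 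You instead collapse the conditioning onto the natural time filtration (legitimate, by the within-step conditional independence, exactly as in \eqref{eq:corollary of cond indep}), reduce to a variance bound for one bounded statistic $V^{\gn}_p$ per time step, derive its explicit local pair-sum representation (which I have checked: the subtracted term in \eqref{eq:def xi} is indeed the conditional mean, and your expansion of $(W^i_p)^2$ times the conditional variance is right), and then bound $\var(V^{\gn}_p)$ by Efron--Stein after realizing the particle system as a deterministic functional of i.i.d.\ noise variables. Both arguments pivot on the same structural fact---the finite range of dependence coming from Assumption \ref{ass:band width}---but your localization is dynamic (a cone-of-influence propagation of a single perturbation) where the paper's is static (disjointness of parent sets). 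What the paper's route buys is that it never leaves the original probability space; what yours buys is an explicit rate $\var(V^{\gn}_p)=O((\gs\gn)^{-1})$ and a more modular argument that treats the IBPF and LEPF on an equal footing. One point needs to be stated more carefully in your write-up: the noise-outsourcing functions must be \emph{chosen} so that $\zeta^j_r$ is a function only of the particles in its parent cone and of $U^j_r$ (not of the whole history $\mathcal{G}^{\gn}_{r-1}$, as your phrasing literally allows); this choice is legitimate precisely because the conditional law of $\zeta^j_r$ in Algorithm \ref{alg:alpha smc} depends on the past only through $\{\zeta^l_{r-1}:\alpha^{jl}>0\}$ and the weights $W^l_{r-1}$, which by \eqref{eq:weight formula} are functions of the cone variables---without fixing this choice, perturbing one $U^i_q$ could formally change every particle and the bounded-differences bound would not follow.
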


\begin{proof}
Define:
\begin{equation*}
Z^{\gn}_{\varrho} \defeq \E\Big[\big(\xi^{\gn}_{\varrho}\big)^2\,\Big|\,\F^{\gn}_{\varrho-1}\Big] - \E\Big[\big(\xi^{\gn}_{\varrho}\big)^2\Big].
\end{equation*}
By Markov's inequality we have for all $\varepsilon > 0$ that
\begin{align}\label{eq:Z square decompo}
\P\Bigg(\Bigg|\sum_{\varrho=1}^{(n+1)\gs\gn}Z^{\gn}_{\varrho}\Bigg|\geq \varepsilon\Bigg) &\leq \frac{1}{\varepsilon^2}\sum_{\varrho=1}^{(n+1)\gs\gn}\E\Big[\big(Z^{\gn}_{\varrho}\big)^2\Big]  \nonumber\\
&+ \frac{1}{\varepsilon^2}\sum_{\varrho=1}^{(n+1)\gs\gn}\sum_{\varrho'\neq \varrho}\E\Big[Z^{\gn}_{\varrho}Z^{\gn}_{\varrho'}\Big].
\end{align}
By \eqref{eq:bounded difference}, 
$\big(Z^{\gn}_{\varrho}\big)^2\leq 4C_{n}^4/(\gs\gn)^2$ and hence
\begin{equation*}
\sum_{\varrho=1}^{(n+1)\gs\gn}\E\Big[\big(Z^{\gn}_{\varrho}\big)^2\Big]\leq \frac{4(n+1)C_{n}^{4}}{\gs\gn}\surely{\gn\to\infty} 0.
\end{equation*}

To establish convergence to zero of the second summation on the r.h.s. of \eqref{eq:Z square decompo}, we shall show that suitably many pairs $Z^{\gn}_{\varrho},Z^{\gn}_{\varrho'}$ are independent, therefore making no contribution to the sum since $\E[Z^{\gn}_{\varrho}]=0$, and use \eqref{eq:bounded difference} to bound the remaining pairs. Introduce the notation, for $i\in[Mm]$,
\begin{equation}\label{eq:parent set}
\pa\big(\zeta_0^i\big):=\emptyset,\quad\pa\big(\zeta_n^i\big)  \defeq \big\{\zeta^{j}_{q}: 0 \leq q < n,~j\in[\gs\gn], (\alpha^{n-q})^{ij}>0\big\},\quad n\geq 1,
\end{equation}
and by convention let $\sigma(\emptyset)$ be the trivial $\sigma$-algebra. Our strategy to obtain a lower bound for the number of independent pairs $Z^{\gn}_{\varrho},Z^{\gn}_{\varrho'}$ is as follows:

Lemma \ref{lem:independent_step_i} shows that $Z^{\gn}_{\varrho}$ is measurable w.r.t. $\sigma\big(\pa\big(\zeta^{i_{\gn}(\varrho)}_{p_{\gn}(\varrho)}\big)\big)$, and consequently  $$\sigma\big(\pa\big(\zeta^{i_{\gn}(\varrho)}_{p_{\gn}(\varrho)}\big)\big)\perp \sigma\big(\pa\big(\zeta^{i_{\gn}(\varrho')}_{p_{\gn}(\varrho')}\big)\big) \quad \Longrightarrow \quad Z^{\gn}_{\varrho}\perp Z^{\gn}_{\varrho'}.$$
Lemma \ref{lem:independent_step_ii} shows that for any $0\leq p, q \leq n$ and $i,j\in[\gs\gn]$,
\begin{equation*}
\pa\big(\zeta^{i}_{p}\big)\cap \pa\big(\zeta^{j}_{q}\big) = \emptyset \quad \Longrightarrow \quad \sigma(\pa\big(\zeta^{i}_{p}\big)) \perp \sigma(\pa\big(\zeta^{j}_{q}\big)).
\end{equation*}
Lemma \ref{lem:independent_step_iii} shows that the number of pairs $\varrho\neq\varrho'$ such that $\pa\big(\zeta^{i_{\gn}(\varrho)}_{p_{\gn}(\varrho)}\big) \cap \pa\big(\zeta^{i_{\gn}(\varrho')}_{p_{\gn}(\varrho')}\big) = \emptyset$ is at least $\gs\gn(n+1)^2(\gs\gn-4n\hbw-1)$.

The total number of pairs $(\varrho,\varrho')$ where $\varrho \neq \varrho'$ is $(n+1)\gs\gn((n+1)\gs\gn-1)$ and hence by \eqref{eq:bounded difference}
\begin{equation*}
\sum_{\varrho=1}^{(n+1)\gs\gn}\sum_{\varrho'\neq \varrho}\E\Big[Z^{\gn}_{\varrho}Z^{\gn}_{\varrho'}\Big] \leq \frac{4C_{n}^{4}}{\gs\gn}\big((n+1)((n+1)\gs\gn-1)-(n+1)^2(\gs\gn-4n\hbw-1)\big)
\end{equation*}
which is easily seen to converge to 0 as $\gn\to\infty$, completing the proof of the Proposition.
\end{proof}

Before presenting Lemmata \ref{lem:independent_step_i}--\ref{lem:independent_step_iii} we point out the following useful consequence of \eqref{eq:parent set}. Note that $\zeta^{i_{q}}_{q}\in \pa(\zeta^{i_{p}}_{p})$ if and only if there exists a sequence $(i_{q},\ldots,i_{p})$, such that $\prod_{k=p-1}^{q} \alpha^{i_{k+1}i_{k}}>0$. Using this equivalence it follows that if $\ell<q<p$ and $\zeta^{i_{\ell}}_{\ell} \in \pa(\zeta^{i_{q}}_{q})$ and $\zeta^{i_{q}}_{q} \in \pa(\zeta^{i_{p}}_{p})$, then also $\zeta^{i_{\ell}}_{\ell} \in \pa(\zeta^{i_{p}}_{p})$, and thus we have the implication
\begin{equation}\label{eq:property one}
\zeta \in \pa(\zeta^{i_{p}}_{p}) \implies \pa(\zeta) \subset \pa(\zeta^{i_{p}}_{p}).
\end{equation}

\begin{lemma}\label{lem:independent_step_i}
For any $\varrho\in[(n+1)Mm]$, $Z^{\gn}_{\varrho}$ is measurable w.r.t. $\sigma\big(\pa\big(\zeta^{i_{\gn}(\varrho)}_{p_{\gn}(\varrho)}\big)\big)$.
\end{lemma}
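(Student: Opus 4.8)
The plan is to note that $\E[(\xi^{\gn}_\varrho)^2]$ is deterministic, so it suffices to show that the conditional expectation $\E[(\xi^{\gn}_\varrho)^2 \mid \F^{\gn}_{\varrho-1}]$ is $\sigma(\pa(\zeta^{i}_{p}))$-measurable, where I abbreviate $p=p_{\gn}(\varrho)$ and $i=i_{\gn}(\varrho)$. The case $p=0$ is immediate: here $\xi^{\gn}_\varrho$ depends only on $\zeta^i_0$, an i.i.d.\ draw from $\pi_0$ that is independent of everything recorded in $\F^{\gn}_{\varrho-1}$ (the earlier draws of the current instance and the entirety of the independent previous instances), so the conditional expectation equals the unconditional one and $Z^{\gn}_\varrho=0$, which is trivially measurable since $\pa(\zeta^i_0)=\emptyset$.

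For $p\geq1$ I would proceed in three steps. First, I would show that $\xi^{\gn}_\varrho$ is a measurable function of $\zeta^i_p$ together with the particles in $\pa(\zeta^i_p)$: inspecting \eqref{eq:def xi}, $\xi^{\gn}_\varrho$ is built from $\zeta^i_p$, the weight $W^i_p$, and the pairs $\{(\zeta^j_{p-1},W^j_{p-1}):\alpha^{ij}>0\}$. Each such $\zeta^j_{p-1}$ lies in $\pa(\zeta^i_p)$ by definition, and since the closed-form expression for $W^j_{p-1}$ derived earlier in the text depends only on ancestors of $\zeta^j_{p-1}$, the implication \eqref{eq:property one} gives $\pa(\zeta^j_{p-1})\subset\pa(\zeta^i_p)$, so each $W^j_{p-1}$—and hence also $W^i_p$—is $\sigma(\pa(\zeta^i_p))$-measurable. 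Thus $(\xi^{\gn}_\varrho)^2 = h(\zeta^i_p,\pa(\zeta^i_p))$ for a measurable $h$, and note that $\pa(\zeta^i_p)$ consists of particles at times strictly less than $p$, so $\sigma(\pa(\zeta^i_p))\subset\mathcal{H}\defeq\sigma(\zeta_0,\ldots,\zeta_{p-1})$.

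Second, I would reduce the conditioning $\sigma$-algebra. The extra information in $\F^{\gn}_{\varrho-1}$ beyond $\mathcal{H}$ is $\sigma(\{\zeta^{i'}_p:i'<i\})$ together with the previous instances. The conditional-independence structure of Algorithm \ref{alg:alpha smc}—particles at time $p$ are conditionally independent given $\zeta_0,\ldots,\zeta_{p-1}$—together with independence across instances yields that $\zeta^i_p$ is conditionally independent of $(\{\zeta^{i'}_p:i'<i\},\text{ previous instances})$ given $\mathcal{H}$. Since $(\xi^{\gn}_\varrho)^2$ is $(\sigma(\zeta^i_p)\vee\mathcal{H})$-measurable, the standard consequence of conditional independence then gives $\E[(\xi^{\gn}_\varrho)^2\mid\F^{\gn}_{\varrho-1}]=\E[(\xi^{\gn}_\varrho)^2\mid\mathcal{H}]$. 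Third, I would integrate out $\zeta^i_p$: its law given $\mathcal{H}$ is the proposal $\mu^i_p(\cdot)=(W^i_p)^{-1}\sum_j\alpha^{ij}W^j_{p-1}g_{p-1}(\zeta^j_{p-1})f(\zeta^j_{p-1},\cdot)$, which by the first step depends only on $\pa(\zeta^i_p)$. Hence $\E[(\xi^{\gn}_\varrho)^2\mid\mathcal{H}]=\int h(x,\pa(\zeta^i_p))^2\,\mu^i_p(dx)$ is $\sigma(\pa(\zeta^i_p))$-measurable, completing the argument. I expect the main obstacle to be the clean justification of the second step, namely verifying that conditioning on the full history $\F^{\gn}_{\varrho-1}$ collapses to conditioning on $\mathcal{H}$; this requires carefully invoking the conditional independence of the time-$p$ particles built into the algorithm and the mutual independence of the $\gn$ instances, rather than any property specific to the LEPF or IBPF weights.
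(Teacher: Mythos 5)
Your proposal is correct and follows essentially the same route as the paper's proof: both hinge on the explicit weight formula \eqref{eq:weight formula} together with \eqref{eq:property one} to get $\sigma(\pa(\zeta_p^i))$-measurability of the weights, then exploit the conditional independence of the time-$p$ particles and the fact that the conditional law \eqref{eq:zeta_cond_dist} of $\zeta_p^i$ depends only on $\pa(\zeta_p^i)$ (the paper compresses your Steps 2 and 3 into its equation \eqref{eq:corollary of cond indep}, and your explicit handling of the independent previous instances is a point the paper leaves implicit). The only blemish is the superfluous square in $\int h(x,\pa(\zeta^i_p))^2\,\mu^i_p(dx)$, since $h$ was already defined as the squared quantity.
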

\begin{proof}
The variables $\{\zeta_0^i\}_{i\in[\gs\gn]}$ are independent,  so  for $\varrho\in[Mm]$, $Z_{\varrho}^m=0$, $\P-\text{a.s.}$ and $\sigma\big(\pa\big(\zeta^{i_{\gn}(\varrho)}_{p_{\gn}(\varrho)}\big)\big)=\{0,\Omega\}$, hence the claimed measurability holds. For $\gs\gn<\varrho\leq (n+1)\gs\gn$ we need to show that
\begin{equation}
\E\Big[\big(\xi^{\gn}_{\varrho}\big)^2\,\Big|\,\F^{\gn}_{\varrho-1}\Big]=\E\Big[\big(\xi^{\gn}_{\varrho}\big)^2\,\Big|\,\pa\big(\zeta^{i_{\gn}(\varrho)}_{p_{\gn}(\varrho)}\big)\Big], \quad\P-\text{a.s.}\label{eq:Z_measurability}
\end{equation}
According to Algorithm \ref{alg:alpha smc},
\begin{equation}
\P\big(\,\zeta_p^{i_p} \in A\,\big|\, \zeta_0,\ldots,\zeta_{p-1} \big)=\frac{\sum_j\alpha^{{i_p}j}W_{p-1}^jg_{p-1}(\zeta_{p-1}^j)f(\zeta_{p-1}^j,A)}{W_{p}^{i_p}}, \quad \P-\text{a.s.}\label{eq:zeta_cond_dist}
\end{equation}
Writing out the expression for $W_p^{i_p}$ from Algorithm \ref{alg:alpha smc} gives
\begin{equation}\label{eq:weight formula} 
W^{i_{p}}_{p} = \sum_{(i_{0},\ldots,i_{p-1})} \prod_{q=0}^{p-1}g_{q}(\zeta^{i_{q}}_{q})\alpha^{i_{q+1}i_{q}},
\end{equation}
which clearly is measurable w.r.t.~$\sigma\big(\pa\big(\zeta_p^{i_p}\big)\big)$. Noting additionally \eqref{eq:property one}, we find the r.h.s.~of \eqref{eq:zeta_cond_dist} also measurable w.r.t.~$\sigma\big(\pa\big(\zeta_p^{i_p}\big)\big)$. The latter observation combined with the fact that in Algorithm  \ref{alg:alpha smc} the variables $\{\zeta_p^i\}_{i\in[Mm]}$ are conditionally independent given $\zeta_0,\ldots ,\zeta_{p-1}$, shows that
\begin{equation}\label{eq:corollary of cond indep}
\P\big(\,\zeta_p^i \in A\,\big|\, \zeta_{0},\ldots,\zeta_{p-1},\zeta^{1}_{p},\ldots,\zeta^{i-1}_{p}\big) = \P\big(\zeta_p^i \in A\,\big|\,\pa\big(\zeta^{i}_{p}\big)\big), \qquad \P-\text{a.s.}
\end{equation}
Then using again the fact that $W_p^i$ is measurable w.r.t.~$\sigma\big(\pa\big(\zeta_p^{i}\big)\big)$, we have by  \eqref{eq:def xi} that  \eqref{eq:Z_measurability} holds for $Mm < \varrho \leq Mm(n+1) $, which completes the proof.
\end{proof}

\begin{lemma}\label{lem:independent_step_ii}
For any $0\leq p, q \leq n$ and $i,j\in[\gs\gn]$,
\begin{equation}\label{eq:emptyset implication}
\pa\big(\zeta^{i}_{p}\big)\cap \pa\big(\zeta^{j}_{q}\big) = \emptyset \quad \Longrightarrow \quad \sigma(\pa\big(\zeta^{i}_{p}\big)) \perp \sigma(\pa\big(\zeta^{j}_{q}\big)).
\end{equation}
\end{lemma}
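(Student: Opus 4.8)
The plan is to turn the two standing facts already in hand---the ancestral-closure property \eqref{eq:property one} and the one-step conditional-independence identity \eqref{eq:corollary of cond indep}---into a factorisation of the joint law of \emph{all} particles lying in $\pa(\zeta_p^i)\cup\pa(\zeta_q^j)$ as a single product of conditional kernels, and then to observe that disjointness of the two parent sets forces this product to split into one factor depending only on the particles of $\pa(\zeta_p^i)$ and one depending only on those of $\pa(\zeta_q^j)$. Such a split is exactly a product-measure statement, which is the required independence \eqref{eq:emptyset implication}.

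First I would fix the total order $\prec$ on the whole particle system given by $\zeta_r^k\prec\zeta_{r'}^{k'}$ iff $r<r'$, or $r=r'$ and $k<k'$. Under this order, \eqref{eq:corollary of cond indep} says precisely that the conditional law of any $\zeta_r^k$ given the $\sigma$-algebra of all strictly $\prec$-earlier particles is a kernel $\mu_{\zeta_r^k}$ that is a measurable function of $\pa(\zeta_r^k)$ alone (for $r=0$ it is simply $\pi_0$). Writing $A\defeq\pa(\zeta_p^i)$ and $B\defeq\pa(\zeta_q^j)$, \eqref{eq:property one} tells us both $A$ and $B$ are \emph{ancestrally closed}: $\eta\in A\Rightarrow\pa(\eta)\subset A$, and likewise for $B$. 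Enumerate $A\cup B=\{\eta_1,\ldots,\eta_K\}$ in increasing $\prec$-order. Because parents live at strictly earlier times and $A,B$ are ancestrally closed, $\pa(\eta_s)\subset\{\eta_1,\ldots,\eta_{s-1}\}$ for every $s$, so $\pa(\eta_s)$ is $\sigma(\eta_1,\ldots,\eta_{s-1})$-measurable. A tower-property step---projecting the full-history identity \eqref{eq:corollary of cond indep} onto the smaller $\sigma$-algebra $\sigma(\eta_1,\ldots,\eta_{s-1})$, which already contains $\pa(\eta_s)$---then gives
\[
\P\big(\eta_s\in\,\cdot\,\mid\eta_1,\ldots,\eta_{s-1}\big)=\mu_{\eta_s}(\,\cdot\,),
\]
still a function of $\pa(\eta_s)$ only. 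Chaining these identities yields the factorisation of the joint law of $(\eta_1,\ldots,\eta_K)$ as $\prod_{s=1}^{K}\mu_{\eta_s}$.

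Finally I would partition $\{1,\ldots,K\}$ into the indices with $\eta_s\in A$ and those with $\eta_s\in B$; since $A\cap B=\emptyset$ these blocks are disjoint and exhaustive. Ancestral closure guarantees that every factor $\mu_{\eta_s}$ with $\eta_s\in A$ depends only on $A$-coordinates and every factor with $\eta_s\in B$ only on $B$-coordinates, so $\prod_s\mu_{\eta_s}$ separates into a chain of kernels over the $A$-particles times a chain over the $B$-particles. Integrating out one block at a time by Fubini (each omitted chain of conditional kernels integrates to $1$) identifies these two chains with the respective marginals, so the joint law of the $A$-particles and the $B$-particles is the product of their marginals; hence $\sigma(\pa(\zeta_p^i))\perp\sigma(\pa(\zeta_q^j))$.

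I expect the only delicate point to be the tower-property step: one must check that the conditional law of $\eta_s$ given the \emph{restricted} history $\{\eta_1,\ldots,\eta_{s-1}\}$---rather than the full $\prec$-history appearing in \eqref{eq:corollary of cond indep}---remains a function of $\pa(\eta_s)$ alone, with no leakage of dependence onto particles outside $A\cup B$. This is exactly where ancestral closure \eqref{eq:property one} is indispensable: it guarantees $\pa(\eta_s)\subset\sigma(\eta_1,\ldots,\eta_{s-1})$, so the outer conditioning collapses the full-history kernel to the same $\pa(\eta_s)$-measurable object and the factorisation never references any particle lying outside $A\cup B$.
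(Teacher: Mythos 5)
Your proposal is correct, and it rests on exactly the same two pillars as the paper's proof --- the ancestral closure property \eqref{eq:property one} and the parent-measurability of the one-step conditional laws \eqref{eq:zeta_cond_dist}--\eqref{eq:corollary of cond indep} --- but it assembles them along a genuinely different route. The paper inducts over time slices: writing $\mathcal{Z}_r$, $\mathcal{Z}_r^{\prime}$ for the restrictions of the two parent sets to times at most $r$, it proves $\sigma(\mathcal{Z}_r)\perp\sigma(\mathcal{Z}_r^{\prime})$ by induction on $r$, conditioning at each step on the full generation history $\mathcal{G}_r=\sigma(\zeta_0,\ldots,\zeta_r)$, exploiting the conditional independence of the entire generation $\zeta_{r+1}$ given $\mathcal{G}_r$, and closing each inductive step with a monotone class argument; it never writes down the joint law explicitly and stays entirely at the level of events. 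You instead enumerate the particles of $\pa(\zeta_p^i)\cup\pa(\zeta_q^j)$ in a single total order, factorise their joint law in one pass as a chain of parent-dependent kernels, and split the product by Fubini; the independence then reads off as a literal product-measure statement. Your tower-property step is sound: since the enumeration lists all of $A\cup B$ in $\prec$-order and both sets are ancestrally closed, $\sigma(\pa(\eta_s))\subset\sigma(\eta_1,\ldots,\eta_{s-1})\subset\mathcal{H}_s$ (the full strict $\prec$-past), so projecting \eqref{eq:corollary of cond indep} onto the smaller $\sigma$-algebra leaves the kernel unchanged. What your route buys is transparency --- the disjointness hypothesis visibly severs the product of kernels into two non-interacting chains --- and it avoids the induction altogether. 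The price is two standard details you should make explicit in a polished write-up: the chain-rule factorisation needs regular conditional distributions, which is unproblematic here because the algorithm's conditional laws are given by the explicit kernels in \eqref{eq:zeta_cond_dist} and $(\X,\cX)$ is Polish; and at the very end one needs a $\pi$-system remark to pass from the product formula on rectangles $\bigcap_s\{\eta_s\in C_s\}$ to independence of the generated $\sigma$-algebras --- this plays precisely the role that the monotone class argument plays in the paper's proof.
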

\begin{proof}

The implication in \eqref{eq:emptyset implication} holds immediately in the case that $p,q\in\{0,1\}$,
due to the convention that $\sigma(\emptyset)$ is the trivial $\sigma$-algebra
and the independence of the $\zeta_{0}^{i}$'s. So suppose w.l.o.g.
$p>1$ and $0\leq q\leq p$, fix any $i,j\in[Mm]$ and assume that
$\pa(\zeta_{p}^{i})\cap\pa(\zeta_{q}^{j})=\emptyset$.
For $0\leq r<p$, define the sets of random variables
\begin{eqnarray*}
 &  & \mathcal{Z}_{r}:=\pa(\zeta_{p}^{i})\cap\{\zeta_{s}^{k};0\leq s\leq r,\; k\in[Mm]\},\\
 &  & \mathcal{Z}_{r}^{\prime}:=\pa(\zeta_{q}^{j})\cap\{\zeta_{s}^{k};0\leq s\leq r,\; k\in[Mm]\},
\end{eqnarray*}
notice that $\mathcal{Z}_{p-1}=\pa(\zeta_{p}^{i})$ and similarly
$\mathcal{Z}_{p-1}^{\prime}=\pa(\zeta_{q}^{j})$, so our objective
is to prove $\sigma(\mathcal{Z}_{p-1})\perp\sigma(\mathcal{Z}_{p-1}^{\prime})$.
Notice also that $\mathcal{Z}_{r}\cap\mathcal{Z}_{r}^{\prime}=\emptyset$
for $0\leq r<p$ since we have assumed $\pa(\zeta_{p}^{i})\cap\pa(\zeta_{q}^{j})=\emptyset$.
We proceed with an inductive argument, the induction hypothesis being
that for some $0\leq r<p-1$,
\begin{equation}
\sigma(\mathcal{Z}_{r})\perp\sigma(\mathcal{Z}_{r}^{\prime}).\label{eq:ind_hyp}
\end{equation}
To initialise, observe that (\ref{eq:ind_hyp}) holds with $r=0$,
due when $q=0$ to the convention that $\sigma(\emptyset)$ is trivial,
and due when $q>0$ to the independence of the $\zeta_{0}^{i}$'s
and $\mathcal{Z}_{0}\cap\mathcal{Z}_{0}^{\prime}=\emptyset$. Now
assume that (\ref{eq:ind_hyp}) holds for some $0\leq r<p-1$, for
each $\zeta\in\mathcal{Z}_{r+1}\cup\mathcal{Z}_{r+1}^{\prime}$ let
$B_{\zeta}$ be an arbitrary member of $\mathcal{X}$ and let $A_{\zeta}$
be the event $\{\zeta\in B_{\zeta}\}$. Then writing $\mathcal{G}_{r}:=\sigma(\zeta_{0},\ldots,\zeta_{r})$,
and with the convention that products over the empty set are unity,
we have
\begin{eqnarray*}
 &  & \mathbb{P}\left(\bigcap_{\zeta\in\mathcal{Z}_{r+1}\cup\mathcal{Z}_{r+1}^{\prime}}A_{\zeta}\right)\\
 &  & =\mathbb{E}\left[\mathbb{P}\left(\left.\bigcap_{\zeta\in\mathcal{Z}_{r+1}\setminus\mathcal{Z}_{r}\;\cup\;\mathcal{Z}_{r+1}^{\prime}\setminus\mathcal{Z}_{r}^{\prime}}A_{\zeta}\right|\mathcal{G}_{r}\right)\prod_{\zeta\in\mathcal{Z}_{r}\cup\mathcal{Z}_{r}^{\prime}}\mathbb{I}\left[A_{\zeta}\right]\right]\\
 &  & =\mathbb{E}\left[\mathbb{P}\left(\left.\bigcap_{\zeta\in\mathcal{Z}_{r+1}\setminus\mathcal{Z}_{r}}A_{\zeta}\right|\mathcal{G}_{r}\right)\mathbb{P}\left(\left.\bigcap_{\zeta\in\mathcal{Z}_{r+1}^{\prime}\setminus\mathcal{Z}_{r}^{\prime}}A_{\zeta}\right|\mathcal{G}_{r}\right)\prod_{\zeta\in\mathcal{Z}_{r}\cup\mathcal{Z}_{r}^{\prime}}\mathbb{I}\left[A_{\zeta}\right]\right]\\
 &  & =\mathbb{E}\left[\mathbb{P}\left(\left.\bigcap_{\zeta\in\mathcal{Z}_{r+1}\setminus\mathcal{Z}_{r}}A_{\zeta}\right|\sigma(\mathcal{Z}_{r})\right)\mathbb{P}\left(\left.\bigcap_{\zeta\in\mathcal{Z}_{r+1}^{\prime}\setminus\mathcal{Z}_{r}^{\prime}}A_{\zeta}\right|\sigma(\mathcal{Z}_{r}^{\prime})\right)\prod_{\zeta\in\mathcal{Z}_{r}\cup\mathcal{Z}_{r}^{\prime}}\mathbb{I}\left[A_{\zeta}\right]\right]\\
 &  & =\mathbb{P}\left(\bigcap_{\zeta\in\mathcal{Z}_{r+1}}A_{\zeta}\right)\mathbb{P}\left(\bigcap_{\zeta\in\mathcal{Z}_{r+1}^{\prime}}A_{\zeta}\right).
\end{eqnarray*}
The first equality uses the tower property of conditional expectations
and the fact that $\sigma(\mathcal{Z}_{r})\vee\sigma(\mathcal{Z}_{r}^{\prime})\subset\mathcal{G}_{r}$.
The second and third equalities use the following facts: in Algorithm \ref{alg:alpha smc},  $\zeta_{r+1}=\{\zeta_{r+1}^{k}:k\in[Mm]\}$
are conditionally independent given $\mathcal{G}_{r}$; for any $\zeta\in\mathcal{Z}_{r+1}\setminus\mathcal{Z}_{r}$
(resp.~$\zeta\in\mathcal{Z}_{r+1}^{\prime}\setminus\mathcal{Z}_{r}^{\prime}$),
$\mathbb{P}\left(\left.A_{\zeta}\right|\mathcal{G}_{r}\right)$ is
measurable w.r.t.~$\sigma(\pa(\zeta))$ (see \eqref{eq:zeta_cond_dist}--\eqref{eq:corollary of cond indep}); $\pa(\zeta)\subset\{\zeta_{s}^{k};0\leq s\leq r,\; k\in[Mm]\}$ and by  \eqref{eq:property one} $\pa(\zeta)\subset\pa(\zeta_{p}^{i})$,
hence $\sigma(\pa(\zeta))\subset\sigma(\mathcal{Z}_{r})$ (resp.~$\sigma(\pa(\zeta))\subset\sigma(\mathcal{Z}_{r}^{\prime})$).
The fourth equality holds by the induction hypothesis. By a monotone
class argument, (\ref{eq:ind_hyp}) then holds with $r$ replaced
by $r+1$, which completes the induction and hence also the proof
of \eqref{eq:emptyset implication}.
\end{proof}


\begin{lemma}\label{lem:independent_step_iii}
Under Assumption \ref{ass:band width}, the number of pairs $\varrho\neq\varrho'$ such that $\pa\big(\zeta^{i_{\gn}(\varrho)}_{p_{\gn}(\varrho)}\big) \cap \pa\big(\zeta^{i_{\gn}(\varrho')}_{p_{\gn}(\varrho')}\big) = \emptyset$ is at least $\gs\gn(n+1)^2(\gs\gn-4n\hbw-1)$.
\end{lemma}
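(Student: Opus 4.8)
The plan is to use the bijection $\varrho\mapsto(p_\gn(\varrho),i_\gn(\varrho))$ between $[(n+1)\gs\gn]$ and $\{0,\ldots,n\}\times[\gs\gn]$, which reduces the task to counting quadruples $(p,i,p',i')$ for which $\pa(\zeta_p^i)\cap\pa(\zeta_{p'}^{i'})=\emptyset$. Throughout I take $\gs\gn\ge 2\hbw+1$, so that the band property of Assumption \ref{ass:band width} is in force; this is the only regime relevant to the $\gn\to\infty$ limit. The geometric heart of the argument is that ancestry spreads at speed at most $\hbw$ per time step in the cycle metric $\mg$.

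First I would record the elementary fact that $(\alpha^{k})^{ij}>0$ implies $\mg(i,j)\le k\hbw$. Any product $\alpha^{il_1}\alpha^{l_1l_2}\cdots\alpha^{l_{k-1}j}$ contributing positively to $(\alpha^{k})^{ij}$ has each factor nonzero, so $\mg(l_s,l_{s+1})\le\hbw$ for every consecutive pair by Assumption \ref{ass:band width}; applying the triangle inequality for $\mg$ along the path then gives $\mg(i,j)\le k\hbw$. Combined with the definition \eqref{eq:parent set}, this shows that $\zeta_q^j\in\pa(\zeta_p^i)$ forces $q<p$ and $\mg(i,j)\le(p-q)\hbw\le n\hbw$.

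Next I would establish the disjointness criterion: if $\mg(i,i')>2n\hbw$, then $\pa(\zeta_p^i)\cap\pa(\zeta_{p'}^{i'})=\emptyset$ for all $p,p'\in\{0,\ldots,n\}$. Indeed, a common element would be some $\zeta_q^j$ lying in both parent sets, which by the previous step forces $\mg(i,j)\le n\hbw$ and $\mg(i',j)\le n\hbw$, whence the triangle inequality gives $\mg(i,i')\le 2n\hbw$, a contradiction.

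It then remains to count. For each fixed $i\in[\gs\gn]$, the $\mg$-ball of radius $2n\hbw$ on the cycle graph with vertex set $[\gs\gn]$ contains at most $4n\hbw+1$ indices, so at least $\gs\gn-4n\hbw-1$ indices $i'$ satisfy $\mg(i,i')>2n\hbw$; this lower bound holds trivially (both sides being non-positive) in the degenerate case where the ball exhausts the vertex set. Since $\mg(i,i')>2n\hbw\ge0$ is strict it forces $i\ne i'$, hence $\varrho\ne\varrho'$, so every choice of $p,p'\in\{0,\ldots,n\}$, of $i\in[\gs\gn]$, and of such an $i'$ produces a distinct admissible ordered pair. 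Multiplying the independent counts yields at least $(n+1)\cdot\gs\gn\cdot(n+1)\cdot(\gs\gn-4n\hbw-1)=\gs\gn(n+1)^2(\gs\gn-4n\hbw-1)$ pairs, as claimed. I do not anticipate a genuine obstacle here; the point requiring the most care is the metric subadditivity step, which is what rigorously confines $\pa(\zeta_p^i)$ to an $\mg$-ball of radius $(p-q)\hbw$, together with the bookkeeping ensuring the counted pairs are distinct and satisfy $\varrho\ne\varrho'$.
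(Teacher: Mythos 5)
Your proof is correct and follows essentially the same route as the paper's: the speed-of-spread bound $\mg(i_p,i_n)\leq(n-p)\hbw$ via Assumption \ref{ass:band width} and the triangle inequality, the disjointness criterion $\mg(i,i')>2n\hbw\Rightarrow\pa(\zeta_p^i)\cap\pa(\zeta_{p'}^{i'})=\emptyset$, and the ball-complement count multiplied over $p,p',i$. The only cosmetic differences are that the paper fixes $i=1$ and invokes translation of the index to vary $i$, whereas you count directly around each $i$, and you explicitly handle the degenerate case $\gs\gn\leq 4n\hbw+1$ where the bound is trivially non-positive.
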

\begin{proof}
We start by proving the implication
\begin{equation}\label{eq:key implication}
\prod_{q=0}^{n-1}\alpha^{i_{q+1}i_{q}} > 0 \implies
\mg(i_p,i_n) \leq (n-p)\hbw\leq n\hbw
\quad \forall~ 0 \leq p < n.
\end{equation}
By \ref{ass:band width}, $\prod_{q=0}^{n-1}\alpha^{i_{q+1}i_{q}} > 0$ implies $\mg(i_{p+1},i_{p})\leq \hbw$, $\forall~ 0\leq p < n$ and then since $\mg$ is a metric,   \eqref{eq:key implication} follows from the triangle inequality.

Note that by \eqref{eq:parent set} and \eqref{eq:key implication}, $\pa\big(\zeta^{i}_{p}\big)\subset\{\zeta_r^k;\;0\leq r< p,\mg(i,k)\leq p\hbw\}$ and therefore when $i=1$, $ \pa\big(\zeta^{i}_{p}\big)\cap\pa\big(\zeta^{j}_{q}\big)=\emptyset$ for all $0\leq p,q \leq n$ and $j\in\{2n\hbw + 2,\ldots, \gs\gn-2n\hbw\}$, the latter set being non-empty for all $m$ large enough, since $M$, $\beta$ and $n$ are fixed. Hence for $i=1$ fixed, there are at least $(n+1)^2(\gs\gn-4n\hbw-1)$ pairs $(\zeta^{i}_{p},\zeta^{j}_{q})$ such that $\pa\big(\zeta^{i}_{p}\big)\cap \pa\big(\zeta^{j}_{q}\big)=\emptyset$. Then allowing $i$ to vary over the set $[\gs\gn]$ gives the lower bound as claimed.
\end{proof}

\subsection{Convergence of the variance}
\label{sec:conditional variance}
The main result of Section \ref{sec:conditional variance} is:

\begin{proposition}\label{prop:convergence of second moment}
Under the assumptions of Theorem \ref{the:CLT}, for all $n>0$
\begin{align}
&\lim_{\gn\to\infty} \E\Bigg[\Bigg(\sum_{\varrho=1}^{(n+1)\gs\gn}\xi^{\gn}_{\varrho}\Bigg)^2\Bigg] \nonumber \\
&=\frac{1}{\per\gamma_{n}(\one)^{2}}\sum_{\substack{0\leq u<\per \\ |v|\leq2n\hbw}} \E_{u,v+u}\Big[\pi_{0}^{\otimes 2}\C_{\epsilon_{0}}Q^{\otimes 2}_{1}\C_{\epsilon_{1}}\cdots Q^{\otimes 2}_{n}\C_{\epsilon_{n}}\big(\cvarphi^{\otimes 2}\big)\Big].\label{eq:result}
\end{align}
where $\epsilon_{k} = \ind[I_{k}=J_{k}]$, for all $0 \leq k \leq n$.
\end{proposition}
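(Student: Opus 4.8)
The plan is to sidestep a level-by-level analysis of the individual increments and instead compute the full second moment directly. Since \eqref{eq:martingale decompo} gives $\sum_{\varrho}\xi^{\gn}_{\varrho}=\sqrt{\gs\gn}\,\bGam^{\gs\gn}_{n}(\cvarphi)$, and this sum is a zero-mean martingale, orthogonality of its increments yields
\[
\E\Big[\Big(\sum_{\varrho}\xi^{\gn}_{\varrho}\Big)^{2}\Big]=\E\big[\big(\sqrt{\gs\gn}\,\bGam^{\gs\gn}_{n}(\cvarphi)\big)^{2}\big]=\frac{1}{\gs\gn\,\gamma_{n}(\one)^{2}}\sum_{i,i'}\E\big[W^{i}_{n}W^{i'}_{n}\cvarphi(\zeta^{i}_{n})\cvarphi(\zeta^{i'}_{n})\big].
\]
The task thus reduces to evaluating the mixed weight--particle second moment $\E[W^{i}_{n}W^{i'}_{n}\cvarphi(\zeta^{i}_{n})\cvarphi(\zeta^{i'}_{n})]$ for each pair $(i,i')$ and summing.

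First I would set up a backward recursion for this quantity. Conditioning on $\zeta_{0},\ldots,\zeta_{n-1}$, using the conditional law of $\zeta^{i}_{n}$ from Algorithm \ref{alg:alpha smc}, the weight identity $W^{i}_{n}=\sum_{j}\alpha^{ij}W^{j}_{n-1}g_{n-1}(\zeta^{j}_{n-1})$, and $Q_{n}=g_{n-1}f$, one obtains the single identity
\[
\E\big[W^{i}_{n}W^{i'}_{n}\cvarphi(\zeta^{i}_{n})\cvarphi(\zeta^{i'}_{n})\,\big|\,\zeta_{0},\ldots,\zeta_{n-1}\big]=\sum_{j,j'}\alpha^{ij}\alpha^{i'j'}W^{j}_{n-1}W^{j'}_{n-1}\,\big[Q_{n}^{\otimes 2}\C_{\ind[i=i']}(\cvarphi^{\otimes 2})\big](\zeta^{j}_{n-1},\zeta^{j'}_{n-1}),
\]
the cases $i=i'$ and $i\neq i'$ being unified by conditional independence together with the elementary facts $Q_{n}^{\otimes 2}\C_{1}(\cvarphi^{\otimes 2})(x,y)=Q_{n}(\cvarphi^{2})(x)g_{n-1}(y)$ and $Q_{n}^{\otimes 2}\C_{0}(\cvarphi^{\otimes 2})(x,y)=Q_{n}(\cvarphi)(x)Q_{n}(\cvarphi)(y)$. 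The right-hand side is again of this form with $(i,i')$ replaced by the ancestor pair $(j,j')$, so iterating the conditioning down to time $0$—where $W^{i}_{0}=1$ and the $\zeta^{i}_{0}$ are i.i.d.\ $\pi_{0}$, producing the leftmost $\pi_{0}^{\otimes 2}\C_{\epsilon_{0}}$—gives
\[
\E\big[W^{i}_{n}W^{i'}_{n}\cvarphi(\zeta^{i}_{n})\cvarphi(\zeta^{i'}_{n})\big]=\sum_{\substack{(i_{0},\ldots,i_{n-1})\\(i'_{0},\ldots,i'_{n-1})}}\prod_{k=0}^{n-1}\alpha^{i_{k+1}i_{k}}\alpha^{i'_{k+1}i'_{k}}\;\pi_{0}^{\otimes 2}\C_{\epsilon_{0}}Q^{\otimes 2}_{1}\C_{\epsilon_{1}}\cdots Q^{\otimes 2}_{n}\C_{\epsilon_{n}}(\cvarphi^{\otimes 2}),
\]
with $i_{n}=i$, $i'_{n}=i'$ and $\epsilon_{k}=\ind[i_{k}=i'_{k}]$. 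The path sum is exactly an expectation over the backward bivariate chain \eqref{eq:backward markov}, but driven by $\alpha_{N}$ rather than $\alpha_{\infty}$.

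It remains to take $(\gs\gn)^{-1}\sum_{i,i'}$ to the limit, and this is where the structural hypotheses enter. The key observation is that any path on which the two index-chains never coincide contributes $\pi_{0}^{\otimes2}Q_{1}^{\otimes2}\cdots Q_{n}^{\otimes2}(\cvarphi^{\otimes2})=(\pi_{0}Q_{0,n}\cvarphi)^{2}=\gamma_{n}(\cvarphi)^{2}=0$, because $\cvarphi=\varphi-\pi_{n}(\varphi)$ and $\pi_{n}=\gamma_{n}/\gamma_{n}(\one)$. By Assumption \ref{ass:band width} each chain moves at most $\hbw$ per step, so coincidence is possible only when $\mg(i,i')\leq 2n\hbw$; hence the inner sum collapses to $i'=i+v$ with $|v|\leq 2n\hbw$. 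Assumption \ref{ass:periodic circulance} makes the summand $\per$-periodic in $i$, reducing $(\gs\gn)^{-1}\sum_{i}$ to $\per^{-1}\sum_{0\leq u<\per}$. Finally, once $\gs\gn$ is large enough that the confined, non-wrapping paths do not feel the cyclic boundary, Assumption \ref{ass:similarity} identifies the $\alpha_{N}$-path weights with those of $\alpha_{\infty}$, turning the finite path sum into $\E_{u,u+v}[\,\cdot\,]$ and producing exactly \eqref{eq:result}.

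I expect the main obstacle to lie in this last limit step rather than in the recursion. The delicate points are justifying the replacement of the finite circulant band matrix $\alpha_{N}$ by the doubly-infinite $\alpha_{\infty}$ uniformly over the relevant pairs, reconciling the cycle metric $\mg$ with ordinary distance on $\Z$, and verifying that boundary/wrap-around contributions are genuinely absent (or vanish) as $\gs\gn\to\infty$. Conceptually the heart of the argument is the bookkeeping that inserts $\C_{1}$ exactly when two ancestor indices collide and $\C_{0}$ otherwise, together with the identity showing that the purely non-colliding contributions vanish; but the genuinely technical estimates are those matching the finite- and infinite-index descriptions of the colliding paths.
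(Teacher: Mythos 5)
Your proposal is correct and follows essentially the same route as the paper's own proof: your unified conditioning identity is exactly the paper's equation \eqref{eq:variance formula basic equality}, your iterated backward recursion reproduces Lemma \ref{lem:tensor product formula} (combined with \eqref{eq:connection between martingale and tensor formula}), and your final limit step—killing non-colliding paths via $\gamma_n(\cvarphi)=0$, restricting to $\mg(i,i')\leq 2n\hbw$ by the band structure, invoking \ref{ass:periodic circulance} for the $\per$-periodic reduction, and using \ref{ass:similarity} to swap $\alpha_N$ for $\alpha_\infty$—is precisely the paper's $A_{\gn}/B_{\gn}$ decomposition. The "delicate points" you flag at the end are resolved in the paper exactly as you anticipate, by shifting the reference index to $u_0=3n\hbw$ via periodicity so that, for $\gn$ large, the confined paths never wrap around the cycle and the finite path sum coincides identically (not just in the limit) with the $\E_{u,u+v}$ expectation.
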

From \eqref{eq:martingale decompo} and \eqref{eq:def Gamma} it follows that
\begin{equation}\label{eq:connection between martingale and tensor formula}
\E\Bigg[\Bigg(\sum_{\varrho=1}^{(n+1)\gs\gn}\xi^{\gn}_{\varrho}\Bigg)^2\Bigg] = \frac{\gs\gn}{\gamma_{n}(\one)^2}\E\Bigg[\bigg(\frac{1}{\gs\gn}\sum_{i}W^{i}_{n}\cvarphi(\zeta^{i}_{n})\bigg)^2\Bigg],
\end{equation}
The first step towards proving Proposition \ref{prop:convergence of second moment} is to develop an expression for the expectation on the r.h.s. of \eqref{eq:connection between martingale and tensor formula} in the following Lemma, which is inspired by tensor product analysis of \cite{cerou_et_al_11}.

\begin{lemma}\label{lem:tensor product formula}
Fix $n \in \N$, $\gs \geq 1$, $\gn\geq 1$ and set $N = \gs\gn$. For any $\varphi \in \boundMeas(\X)$,
\begin{align*}
&\E\Bigg[\bigg(\frac{1}{N}\sum_{i}W^{i}_{n}\varphi(\zeta^{i}_{n})\bigg)^2\Bigg]\\
&= \frac{1}{N^2}\sum_{(i_{0:n},j_{0:n})} \Bigg(\prod_{q=0}^{n-1}\alpha^{i_{q+1}i_{q}}\alpha^{j_{q+1}j_{q}}\Bigg) \pi_{0}^{\otimes 2}\C_{\ind[i_{0}=j_{0}]}Q^{\otimes 2}_{1}\C_{\ind[i_{1}=j_{1}]}\cdots Q^{\otimes 2}_{n}\C_{\ind[i_{n}=j_{n}]}\big(\varphi^{\otimes 2}\big).
\end{align*}
\end{lemma}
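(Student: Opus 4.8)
The plan is to compute the second moment directly by expanding the sums and propagating expectations backwards through the time index, exploiting the conditional independence structure of Algorithm \ref{alg:alpha smc}. First I would expand the square as a double sum,
\[
\E\Bigg[\bigg(\frac{1}{N}\sum_{i}W^{i}_{n}\varphi(\zeta^{i}_{n})\bigg)^2\Bigg] = \frac{1}{N^2}\sum_{i_n,j_n}\E\big[W^{i_n}_{n}W^{j_n}_{n}\varphi(\zeta^{i_n}_{n})\varphi(\zeta^{j_n}_{n})\big],
\]
so the goal becomes an expression for $\E[W^{i_n}_{n}W^{j_n}_{n}\varphi(\zeta^{i_n}_{n})\varphi(\zeta^{j_n}_{n})]$. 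The natural device is to introduce the tensor-product notation: the pair $(\zeta^{i_n}_{n},\zeta^{j_n}_{n})$ lives on $\X^2$, and $\varphi^{\otimes 2}(\zeta^{i_n}_{n},\zeta^{j_n}_{n}) = \varphi(\zeta^{i_n}_{n})\varphi(\zeta^{j_n}_{n})$. The operators $Q_n^{\otimes 2}$ and $\C_0,\C_1$ are precisely designed to encode two cases: when the two particle indices coincide the pair shares a common ancestor (diagonal behaviour, handled by $\C_1$), and when they differ they evolve via conditionally independent draws (handled by $\C_0$). The whole computation is thus a bookkeeping of these two possibilities at each generation.

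The core of the argument is an inductive (backward-in-time) computation. Fixing indices $i_n,j_n$ and conditioning on $\zeta_0,\ldots,\zeta_{n-1}$, I would use the explicit conditional law \eqref{eq:zeta_cond_dist} together with the fact that, given $\zeta_0,\ldots,\zeta_{n-1}$, the variables $\{\zeta^i_n\}_i$ are conditionally independent. Writing out $W^{i_n}_n = \sum_{j}\alpha^{i_n j}W^{j}_{n-1}g_{n-1}(\zeta^{j}_{n-1})$ and likewise for $W^{j_n}_n$, the conditional expectation $\E[W^{i_n}_{n}W^{j_n}_{n}\varphi(\zeta^{i_n}_{n})\varphi(\zeta^{j_n}_{n})\mid \zeta_0,\ldots,\zeta_{n-1}]$ splits according to whether $i_n=j_n$. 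When $i_n\neq j_n$, conditional independence makes the expectation factor into two single integrals against $Q_n$, each producing a factor $\sum_{i_{n-1}}\alpha^{i_n i_{n-1}}W^{i_{n-1}}_{n-1}Q_n(\varphi)(\zeta^{i_{n-1}}_{n-1})$ and the analogous term in $j$; this is exactly the action of $\C_0 Q_n^{\otimes 2}$ on $\varphi^{\otimes 2}$, evaluated at $(\zeta^{i_{n-1}}_{n-1},\zeta^{j_{n-1}}_{n-1})$ with the weights $\alpha^{i_n i_{n-1}}\alpha^{j_n j_{n-1}}$ pulled out. When $i_n=j_n$, both particles are drawn from the same conditional law, so the pair $(\zeta^{i_n}_n,\zeta^{j_n}_n)$ is forced onto the diagonal after one step back: the integral is $\sum_{i_{n-1}}\alpha^{i_n i_{n-1}}W^{i_{n-1}}_{n-1}Q_n(\varphi^{\otimes 2}(\cdot,\cdot))|_{\mathrm{diag}}$, which is precisely $\C_1 Q_n^{\otimes 2}(\varphi^{\otimes 2})$. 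Iterating this one-step identity from time $n$ down to time $0$, accumulating at each step the transition weights $\alpha^{i_{q+1}i_q}\alpha^{j_{q+1}j_q}$ and the correct $\C_{\ind[i_q=j_q]}$ operator, and finally integrating the initial i.i.d.\ draws against $\pi_0^{\otimes 2}\C_{\ind[i_0=j_0]}$, yields the claimed formula after summing over all ancestral index paths $(i_{0:n},j_{0:n})$.

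The step I expect to be the main obstacle is handling the $i_n=j_n$ (diagonal) case cleanly and making the induction bookkeeping rigorous. Concretely, the subtlety is that $W^{i_n}_n$ itself is a sum over $j$ and is $\sigma(\pa(\zeta^{i_n}_n))$-measurable, so when I condition and integrate out $\zeta_n$ I must be careful that the weights are functions of the earlier generation and can be factored out of the $\zeta_n$-integral; the identity \eqref{eq:weight formula} and the measurability observations around \eqref{eq:zeta_cond_dist}--\eqref{eq:corollary of cond indep} are what license this. A second delicate point is bookkeeping the indicator $\ind[i_k=j_k]$ correctly: once the two index-paths coincide at some generation $k$ (i.e.\ $i_k=j_k$), the diagonal operator $\C_1$ collapses the pair, but the paths may then diverge again at earlier generations since the ancestral indices $i_{k-1},j_{k-1}$ range freely; the formula correctly accounts for this because $\C_{\ind[i_q=j_q]}$ is applied at every generation independently. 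I would make the induction explicit by proving a one-step lemma of the form
\[
\E\big[F(\zeta^{i}_{p},\zeta^{j}_{p})\,W^{i}_{p}W^{j}_{p}\,\big|\,\zeta_0,\ldots,\zeta_{p-1}\big] = \sum_{i',j'}\alpha^{i i'}\alpha^{j j'}W^{i'}_{p-1}W^{j'}_{p-1}\,\big(\C_{\ind[i=j]}Q_p^{\otimes 2}F\big)(\zeta^{i'}_{p-1},\zeta^{j'}_{p-1}),
\]
valid for any $F\in\boundMeas(\X^2)$, and then apply it $n$ times with $F=\varphi^{\otimes 2}$ and finally take expectations over the i.i.d.\ initial particles.
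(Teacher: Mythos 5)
Your overall strategy is exactly the paper's: expand the square into a double sum over index pairs, prove a one-step conditional-expectation identity using the explicit conditional law \eqref{eq:zeta_cond_dist} and the conditional independence of the particles given the past, iterate backwards in time, and finish by integrating the i.i.d.\ initial particles against $\pi_0^{\otimes 2}\C_{\epsilon_0}$. The weight bookkeeping and the off-diagonal case are handled correctly. However, your key one-step identity is false as stated, and the error sits precisely in the conceptually crucial diagonal case. Under the paper's composition convention (operators act on the function to their right, the same convention under which the lemma's statement parses), your expression $\C_{\ind[i=j]}Q_p^{\otimes 2}F$ means: first form $Q_p^{\otimes 2}F$, then restrict to the diagonal, i.e.
\begin{equation*}
\big(\C_{1}Q_p^{\otimes 2}F\big)(x,y)=\int\int Q_p(x,\ud x')\,Q_p(x,\ud y')\,F(x',y'),
\end{equation*}
which describes \emph{two conditionally independent offspring of a common ancestor}. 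That is not what happens when $i=j$: in that case the two particles are literally the same random variable (not two draws from the same law, as your prose asserts), so what is needed is a \emph{single} draw with $F$ evaluated on the diagonal,
\begin{equation*}
\E\big[F(\zeta_p^i,\zeta_p^i)(W_p^i)^2\,\big|\,\zeta_0,\ldots,\zeta_{p-1}\big]=\sum_{i',j'}\alpha^{ii'}\alpha^{ij'}W_{p-1}^{i'}W_{p-1}^{j'}\,g_{p-1}(\zeta_{p-1}^{j'})\int Q_p(\zeta_{p-1}^{i'},\ud x')F(x',x'),
\end{equation*}
and the right-hand side equals $\sum_{i',j'}\alpha^{ii'}\alpha^{ij'}W_{p-1}^{i'}W_{p-1}^{j'}\big(Q_p^{\otimes 2}\C_{1}F\big)(\zeta_{p-1}^{i'},\zeta_{p-1}^{j'})$ --- the operators in the opposite order to yours: collapse $F$ to the diagonal \emph{first}, then propagate through $Q_p^{\otimes 2}$ (the second $Q_p$ coordinate integrates to $g_{p-1}$, which is exactly what reconstructs the weight $W_p^i$). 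The two orderings genuinely differ: with $F=\varphi^{\otimes 2}$ your version produces $(Q_p\varphi)(x)^2$ where the correct one produces $Q_p(\varphi^2)(x)\,g_{p-1}(y)$, and by Cauchy--Schwarz $(Q_p\varphi)^2\leq g_{p-1}\,Q_p(\varphi^2)$ with strict inequality in general. The discrepancy is precisely the within-generation sampling variance that this lemma is designed to capture, so it cannot be waved away as notation.

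Moreover the error does not self-correct under iteration: applying your one-step lemma from time $n$ down to time $0$ yields
$\pi_0^{\otimes 2}\C_{\epsilon_0}\C_{\epsilon_1}Q_1^{\otimes 2}\C_{\epsilon_2}Q_2^{\otimes 2}\cdots\C_{\epsilon_n}Q_n^{\otimes 2}\big(\varphi^{\otimes 2}\big)$,
with each $\C_{\epsilon_q}$ on the wrong side of $Q_q^{\otimes 2}$, rather than the claimed
$\pi_0^{\otimes 2}\C_{\epsilon_0}Q_1^{\otimes 2}\C_{\epsilon_1}\cdots Q_n^{\otimes 2}\C_{\epsilon_n}\big(\varphi^{\otimes 2}\big)$;
already for $n=1$ these have different numerical values. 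The fix is local: state and prove the one-step identity as
\begin{equation*}
\E\big[F(\zeta^{i}_{p},\zeta^{j}_{p})\,W^{i}_{p}W^{j}_{p}\,\big|\,\zeta_0,\ldots,\zeta_{p-1}\big] = \sum_{i',j'}\alpha^{i i'}\alpha^{j j'}W^{i'}_{p-1}W^{j'}_{p-1}\,\big(Q_p^{\otimes 2}\C_{\ind[i=j]}F\big)(\zeta^{i'}_{p-1},\zeta^{j'}_{p-1}),
\end{equation*}
which is the paper's equation \eqref{eq:variance formula basic equality} written in coordinates; with this correction your backward induction and the final integration against the i.i.d.\ initial particles go through and coincide with the paper's proof.
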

\begin{proof}
Throughout the proof we use the shorthand notations $i_{p:q} = (i_{p},\ldots,i_{q})$ and $j_{p:q} = (j_{p},\ldots,j_{q})$, where $q<p$. For all $0\leq k \leq n$, let $\G_{k} \defeq \sigma(\zeta_{0},\ldots,\zeta_{k})$, and let $\varphi\in\boundMeas(\X^2)$.

For all $i\in[N]$
\begin{align*}
&\E\Big[\big(W^{i}_n\delta_{\zeta^{i}_{n}}\otimes W^{i}_n\delta_{\zeta^{i}_{n}}\big)(\varphi)\,\Big|\,\G_{n-1}\Big] \\
&= \bigg(\bigg(\sum_{\ell} \alpha^{i\ell}W_{n-1}^{\ell}\delta_{\zeta^{\ell}_{n-1}}\bigg) \otimes \bigg(\sum_{\ell} \alpha^{i\ell}W_{n-1}^{\ell}\delta_{\zeta^{\ell}_{n-1}}\bigg)\bigg) \Big(Q^{\otimes 2}_{n}(\C_{1}(\varphi))\Big),
\end{align*}
and for $i\neq j$
\begin{align*}
&\E\Big[\big(W^{i}_n\delta_{\zeta^{i}_{n}}\otimes W^{j}_n\delta_{\zeta^{j}_{n}}\big)(\varphi)\,\Big|\,\G_{n-1}\Big] \\
&= \bigg(\bigg(\sum_{\ell} \alpha^{i\ell}W_{n-1}^{\ell}\delta_{\zeta^{\ell}_{n-1}}\bigg) \otimes \bigg(\sum_{\ell} \alpha^{j\ell}W_{n-1}^{\ell}\delta_{\zeta^{\ell}_{n-1}}\bigg)\bigg)\Big(Q^{\otimes 2}_{n}(\varphi)\Big).
\end{align*}
So for all $i,j\in[N]$ we have
\begin{align}
& \E\Big[\big( W^{i}_n\delta_{\zeta^{i}_{n}} \otimes W^{j}_n\delta_{\zeta^{j}_{n}}\big)(\varphi)\,\Big|\,\G_{n-1}\Big] \nonumber\\
&= \bigg(\bigg(\sum_{\ell} \alpha^{i\ell}W_{n-1}^{\ell}\delta_{\zeta^{\ell}_{n-1}}\bigg) \otimes \bigg(\sum_{\ell} \alpha^{j\ell}W_{n-1}^{\ell}\delta_{\zeta^{\ell}_{n-1}}\bigg)\bigg)\Big(Q^{\otimes 2}_{n}\big(\C_{\ind[i=j]}(\varphi)\big)\Big).\label{eq:variance formula basic equality}
\end{align}

In the remainder of the proof we write $\epsilon_{k} = \ind[i_{k}=j_{k}]$ for brevity. From \eqref{eq:variance formula basic equality} we conclude that
\begin{align*}
&\E\Bigg[\bigg(\bigg(\frac{1}{N}\sum_{i_{n}}W^{i_{n}}_n\delta_{\zeta^{i_{n}}_{n}}\bigg)\otimes \bigg(\frac{1}{N}\sum_{j_{n}}W^{j_{n}}_n\delta_{\zeta^{j_{n}}_{n}}\bigg)\bigg) (\varphi)\,\Bigg|\,\G_{n-1}\Bigg] \\
&= \frac{1}{N^2}\sum_{(i_{n-1:n},j_{n-1:n})} \alpha^{i_{n}i_{n-1}}\alpha^{j_{n}j_{n-1}} \Big(W^{i_{n-1}}_{n-1}\delta_{\zeta^{i_{n-1}}_{n-1}} \otimes W^{j_{n-1}}_{n-1}\delta_{\zeta^{j_{n-1}}_{n-1}}\Big)\Big(Q_{n}^{\otimes 2}\big(\C_{\epsilon_{n}}(\varphi)\big)\Big),
\end{align*}
which we use to initialise a backward induction. The induction assumption is that for some $1 \leq k < n$,
\begin{align*}
&\E\Bigg[\bigg(\bigg(\frac{1}{N}\sum_{i_{n}}W^{i_{n}}_n\delta_{\zeta^{i_{n}}_{n}}\bigg)\otimes \bigg(\frac{1}{N}\sum_{j_{n}}W^{j_{n}}_n\delta_{\zeta^{j_{n}}_{n}}\bigg)\bigg) (\varphi)\,\Bigg|\,\G_{k}\Bigg] \\
&= \frac{1}{N^2}\sum_{(i_{k:n},j_{k:n})} \Bigg(\prod_{q=k}^{n-1}\alpha^{i_{q+1}i_{q}}\alpha^{j_{q+1}j_{q}}\Bigg)\Big(W^{i_{k}}_{k}\delta_{\zeta^{i_{k}}_{k}} \otimes W^{j_{k}}_{k}\delta_{\zeta^{j_{k}}_{k}}\Big)\Big(Q^{\otimes 2}_{k+1}\C_{\epsilon_{k+1}}\cdots Q^{\otimes 2}_{n}\C_{\epsilon_{n}}(\varphi)\Big).
\end{align*}
Then applying \eqref{eq:variance formula basic equality} and the tower property of conditional expectations,
\begin{align*}
&\E\Bigg[\bigg(\bigg(\frac{1}{N}\sum_{i}W^{i_{n}}_n\delta_{\zeta^{i_{n}}_{n}}\bigg)\otimes \bigg(\frac{1}{N}\sum_{j_{n}}W^{j_{n}}_n\delta_{\zeta^{j_{n}}_{n}}\bigg)\bigg) (\varphi)\,\Bigg|\,\G_{k-1}\Bigg] \\
&= \frac{1}{N^2}\sum_{(i_{k:n},j_{k:n})} \Bigg(\prod_{q=k}^{n-1}\alpha^{i_{q+1}i_{q}}\alpha^{j_{q+1}j_{q}}\Bigg)\\
&\times\E\Bigg[\Big(W^{i_{k}}_{k}\delta_{\zeta^{i_{k}}_{k}} \otimes W^{j_{k}}_{k}\delta_{\zeta^{j_{k}}_{k}}\Big)\Big(Q^{\otimes 2}_{k+1}\C_{\epsilon_{k+1}}\cdots Q^{\otimes 2}_{n}\C_{\epsilon_{n}}(\varphi)\Big)\,\Bigg|\,\G_{k-1}\Bigg]\\
&= \frac{1}{N^2}\sum_{(i_{k:n},j_{k:n})} \Bigg(\prod_{q=k}^{n-1}\alpha^{i_{q+1}i_{q}}\alpha^{j_{q+1}j_{q}}\Bigg)\\
&\times \bigg(\bigg(\sum_{i_{k-1}} \alpha^{i_{k}i_{k-1}}W_{k-1}^{i_{k-1}}\delta_{\zeta^{i_{k-1}}_{k-1}}\bigg) \otimes \bigg(\sum_{j_{k-1}} \alpha^{j_{k}j_{k-1}}W_{k-1}^{j_{k-1}}\delta_{\zeta^{j_{k-1}}_{k-1}}\bigg)\bigg)\Big(Q^{\otimes 2}_{k}\C_{\epsilon_{k}}\cdots Q^{\otimes 2}_{n}\C_{\epsilon_{n}}(\varphi)\Big)\\
&= \frac{1}{N^2}\sum_{(i_{k-1:n},j_{k-1:n})} \Bigg(\prod_{q=k-1}^{n-1}\alpha^{i_{q+1}i_{q}}\alpha^{j_{q+1}j_{q}}\Bigg) \\
&\times \Big(W_{k-1}^{i_{k-1}}\delta_{\zeta^{i_{k-1}}_{k-1}} \otimes W_{k-1}^{j_{k-1}}\delta_{\zeta^{j_{k-1}}_{k-1}}\Big)\Big(Q^{\otimes 2}_{k}\C_{\epsilon_{k}}\cdots Q^{\otimes 2}_{n}\C_{\epsilon_{n}}(\varphi)\Big),
\end{align*}
proving that the induction hypothesis holds at rank $k-1$. Thus
\begin{align*}
&\E\Bigg[\bigg(\bigg(\frac{1}{N}\sum_{i_{n}}W^{i_{n}}_n\delta_{\zeta^{i_{n}}_{n}}\bigg)\otimes \bigg(\frac{1}{N}\sum_{j_{n}}W^{j_{n}}_n\delta_{\zeta^{j_{n}}_{n}}\bigg)\bigg) (\varphi)\,\Bigg|\,\G_{0}\Bigg] \\
&=\frac{1}{N^2}\sum_{(i_{0:n},j_{0:n})} \Bigg(\prod_{q=0}^{n-1}\alpha^{i_{q+1}i_{q}}\alpha^{j_{q+1}j_{q}}\Bigg) \bigg(W_{0}^{i_{0}}\delta_{\zeta^{i_{0}}_{0}} \otimes W_{0}^{j_{0}}\delta_{\zeta^{j_{0}}_{0}}\bigg)\Big(Q^{\otimes 2}_{1}\C_{\epsilon_{1}}\cdots Q^{\otimes 2}_{n}\C_{\epsilon_{n}}(\varphi)\Big).
\end{align*}
Finally, since $\{\zeta^{i}_{0}:i\in[N]\}$ are i.i.d.~samples from $\pi_{0}$ and $W^{i}_{0}=1$ for all $i\in[N]$, we have
\begin{align*}
&\E\Bigg[\bigg(\bigg(\frac{1}{N}\sum_{i_{n}}W^{i_{n}}_n\delta_{\zeta^{i_{n}}_{n}}\bigg)\otimes \bigg(\frac{1}{N}\sum_{j_{n}}W^{j_{n}}_n\delta_{\zeta^{j_{n}}_{n}}\bigg)\bigg) (\varphi)\Bigg]\\
&=\frac{1}{N^2}\sum_{(i_{0:n},j_{0:n})} \Bigg(\prod_{q=0}^{n-1}\alpha^{i_{q+1}i_{q}}\alpha^{j_{q+1}j_{q}}\Bigg) \pi^{\otimes 2}_{0}\C_{\epsilon_{0}}Q^{\otimes 2}_{1}\C_{\epsilon_{1}}\cdots Q^{\otimes 2}_{n}\C_{\epsilon_{n}}(\varphi),
\end{align*}
from which the claim follows by observing that
\begin{align*}
&\bigg(\bigg(\frac{1}{N}\sum_{i_{n}}W^{i_{n}}_n\delta_{\zeta^{i_{n}}_{n}}\bigg)\otimes \bigg(\frac{1}{N}\sum_{j_{n}}W^{j_{n}}_n\delta_{\zeta^{j_{n}}_{n}}\bigg)\bigg)(\varphi^{\otimes 2}) = \bigg(\frac{1}{N}\sum_{i_{n}}W^{i_{n}}_{n}\varphi(\zeta^{i_{n}}_{n})\bigg)^2.
\end{align*}
\end{proof}

\begin{proof}[Proof of Proposition \ref{prop:convergence of second moment}]
Throughout the proof we use the shorthand notations $i_{p:q} \defeq (i_{p},\ldots,i_{q})$, $j_{p:q} \defeq (j_{p},\ldots,j_{q})$, $i_{p:q}+u \defeq (i_{p}+u,\ldots,i_{q}+u)$ and $j_{p:q}+u \defeq (j_{p}+u,\ldots,j_{q}+u)$ for any $u\in\Z$ and $p,q\in\N$ such that $q<p$. Also we define
\begin{equation}
\Xi_{i_{n},j_{n}}(i_{0:n-1},j_{0:n-1}) \defeq \pi_{0}^{\otimes 2}\C_{\ind[i_{0}=j_{0}]}Q^{\otimes 2}_{1}\C_{\ind[i_{1}=j_{1}]}\cdots Q^{\otimes 2}_{n}\C_{\ind[i_{n}=j_{n}]}\big(\cvarphi^{\otimes 2}\big), \label{eq:def Xi}
\end{equation}
and
\begin{equation}
\begin{array}{rl}
&\Pi_{i_{n},j_{n}}(i_{0:n-1},j_{0:n-1}) \defeq \prod_{q=0}^{n-1}\alpha^{i_{q+1}i_{q}}\alpha^{j_{q+1}j_{q}},\\
&\Pi^{\infty}_{i_{n},j_{n}}(i_{0:n-1},j_{0:n-1}) \defeq \prod_{q=0}^{n-1}\alpha^{i_{q+1}i_{q}}_{\infty}\alpha^{j_{q+1}j_{q}}_{\infty}.
\end{array}
\end{equation}
By  Lemma \ref{lem:tensor product formula}, we have
\begin{align}
&(\gs\gn)^2\E\Bigg[\bigg(\frac{1}{\gs\gn}\sum_{i}W^{i}_{n}\cvarphi(\zeta^{i}_{n})\bigg)^2\Bigg] \nonumber \\
&= \sum_{(i_{0:n},j_{0:n})} \Pi_{i_{n},j_{n}}(i_{0:n-1},j_{0:n-1})\Xi_{i_{n},j_{n}}(i_{0:n-1},j_{0:n-1})\label{eq:basic decompo}\\
&= A_{\gn} + B_{\gn} \nonumber
\end{align}
where $A_{\gn}$ and $B_{\gn}$ are obtained by partitioning the summation set:
\begin{align}
A_{\gn} &\defeq \sum_{\substack{(i_{0:n},j_{0:n}):\\ \mg(i_{n},j_{n})>2n\hbw}} \Pi_{i_{n},j_{n}}(i_{0:n-1},j_{0:n-1})\Xi_{i_{n},j_{n}}(i_{0:n-1},j_{0:n-1}), \nonumber\\
B_{\gn} &\defeq \sum_{i_{n} =1}^{\gs\gn}\sum_{\substack{j_{n}:\\\mg(i_{n},j_{n})\leq 2n\hbw}}
\sum_{(i_{0:n-1},j_{0:n-1})}
\Pi_{i_{n},j_{n}}(i_{0:n-1},j_{0:n-1})\Xi_{i_{n},j_{n}}(i_{0:n-1},j_{0:n-1}). \label{eq:def bnm}
\end{align}
Note that although not explicitly shown in the notation, $\Pi_{i_{n},j_{n}}(i_{0:n-1},j_{0:n-1})$ depends also on $\gn$ through the size of matrix $\alpha$, whilst $\Xi_{i_{n},j_{n}}(i_{0:n-1},j_{0:n-1})$ does not.

We shall prove that $A_{\gn} = 0$ and that for all $\gn$ large enough $B_{\gn}$ is equal to the r.h.s. of \eqref{eq:result}. First consider $A_{\gn}$. We can use the implication \eqref{eq:key implication}, given in the proof of Lemma \ref{lem:independent_step_iii}, and observe that if $\mg(i_{n},j_{n})>2n\hbw$ and $\Pi_{i_{n},j_{n}}(i_{0:n-1},j_{0:n-1}) > 0$, then by two applications of the triangle inequality
\begin{equation*}
\mg(i_{p},j_{p}) \geq \mg(i_{p},j_{n})-\mg(j_{p},j_{n}) \geq \mg(i_{n},j_{n})-\mg(i_{p},i_{n})-\mg(j_{p},j_{n})>0,
\end{equation*}
and hence $\ind[i_{p}=j_{p}] = 0$, for all $0\leq p \leq n$. Consequently, by using the fact that \[\pi_{0}^{\otimes 2}Q^{\otimes 2}_{1}\cdots Q^{\otimes 2}_{n}\big(\cvarphi^{\otimes 2}\big) = \big(\pi_{0}Q_{1}\cdots Q_{n}(\cvarphi)\big)^2 = \gamma_{n}(\one)^{2}\pi_{n}(\cvarphi)^{2} = 0,\]
we have
\begin{equation}\label{eq:lim of A}
A_{\gn} =\sum_{\substack{(i_{0:n},j_{0:n}):\\ \mg(i_{n},j_{n})>2n\hbw}} \Pi_{i_{n},j_{n}}(i_{0:n-1},j_{0:n-1}) \pi_{0}^{\otimes 2}Q^{\otimes 2}_{1}\cdots Q^{\otimes 2}_{n}\big(\cvarphi^{\otimes 2}\big)=0.
\end{equation}

Next we consider $B_{\gn}$.
Let us start by writing
\begin{equation}\label{eq:def bij}
B^{m}_{i_{n},j_{n}}\defeq \sum_{(i_{0:n-1},j_{0:n-1})}\Pi_{i_{n},j_{n}}(i_{0:n-1},j_{0:n-1})\Xi_{i_{n},j_{n}}(i_{0:n-1},j_{0:n-1}),
\end{equation}
and
\begin{equation*}
\phi(a_{1},\ldots,a_{p}) \defeq \big((a_{1} + k\gs)\cmod N,\ldots,(a_{p} + k\gs)\cmod N\big),\quad \forall (a_{1},\ldots,a_{p}) \in \Z^{p},
\end{equation*}
for some fixed $k\in\Z$ and any $p>0$. First we prove that $B^{m}_{i_{n},j_{n}}$ satisfies
\begin{equation}\label{eq:periodicity of b}
B^{m}_{i_{n},j_{n}} = B^{m}_{\phi(i_{n},j_{n})}.
\end{equation}
By \ref{ass:periodic circulance} we have immediately, for all $i_{0:n},j_{0:n} \in [\gs\gn]^{n+1}$,
\begin{equation}\label{eq:Pi periodicity}
\Pi_{i_{n},j_{n}}(i_{0:n-1},j_{0:n-1}) = \Pi_{\phi(i_{n},j_{n})}(\phi(i_{0:n-1},j_{0:n-1})),
\end{equation}
and also
\begin{equation}\label{eq:Xi periodicity}
\Xi_{i_{n},j_{n}}(i_{0:n-1},j_{0:n-1}) = \Xi_{\phi(i_{n},j_{n})}(\phi(i_{0:n-1},j_{0:n-1})).
\end{equation}

Combining \eqref{eq:def bij}, \eqref{eq:Pi periodicity}, \eqref{eq:Xi periodicity} and using the fact that $\phi:[\gs\gn]^{n}\times[\gs\gn]^{n}\to[\gs\gn]^{n}\times[\gs\gn]^{n}$ is a bijection to perform a change of variable, we can write
\begin{align*}
B^{m}_{i_{n},j_{n}}
&= \sum_{(i_{0:n-1},j_{0:n-1})}\Pi_{\phi(i_{n},j_{n})}(\phi(i_{0:n-1},j_{0:n-1}))\Xi_{\phi(i_{n},j_{n})}(\phi(i_{0:n-1},j_{0:n-1})) \\
&= \sum_{(i_{0:n-1},j_{0:n-1})}\Pi_{\phi(i_{n},j_{n})}(i_{0:n-1},j_{0:n-1})\Xi_{\phi(i_{n},j_{n})}(i_{0:n-1},j_{0:n-1})\\
&= B^{m}_{\phi(i_{n},j_{n})},
\end{align*}
establishing \eqref{eq:periodicity of b}.

Since for any $u,v \in [N]$, and $0\leq c \leq N/2$, $\mg(u,v)=c$ if and only if $u = (v\pm c )\cmod N$, we can re-parametrise
the summations in \eqref{eq:def bnm}, and by using \eqref{eq:periodicity of b} we have
\begin{align}
B_{\gn}
&= \sum_{k=0}^{\gn-1}\sum_{\ell=1}^{\gs} \sum_{\abs{c}\leq 2n\hbw} B^{m}_{\ell+k\gs,(\ell+k\gs+c)\cmod N} \nonumber\\
&= \gn\sum_{\ell=1}^{\gs} \sum_{\abs{c}\leq 2n\hbw} B^{m}_{u_{0}+\ell,(u_{0}+\ell+c)\cmod N}\label{eq:reparam}
\end{align}
for any $0\leq u_{0} \leq (\gn-1)\gs$.

Recall \eqref{eq:key implication} from the proof of Lemma \ref{lem:independent_step_iii}. An analogous implication
\begin{equation}\label{eq:key implication infinity}
\prod_{q=0}^{n-1}\alpha_\infty^{i_{q+1}i_{q}} > 0 \implies
|i_p-i_n| \leq (n-p)\hbw\leq n\hbw
\quad \forall~ 0 \leq p < n,
\end{equation}
can be established for $\alpha_{\infty}$ by using the absolute difference instead of the metric $\mg$.

Let us set $u_{0} = 3n\hbw$ and assume that $\gn > (u_{0}+\gs+3n\hbw)/M$, which is legitimate since we our aim is to find the limit of $B_m$ as $m\to\infty$. We then have
\begin{equation}\label{eq:drop out modulus}
(u_{0}+\ell+c)\cmod N = u_{0}+\ell+c, \qquad \forall~\ell \in [M],~|c|\leq2n\hbw,
\end{equation}
and by using \eqref{eq:key implication} and \eqref{eq:key implication infinity} one can check that when
$i_{n}=u_{0}+\ell$ and $j_{n}=u_{0}+\ell+c$, then $\Pi_{i_{n},j_{n}}(i_{0:n-1},j_{0:n-1})$ and $\Pi^{\infty}_{i_{n},j_{n}}(i_{0:n-1},j_{0:n-1})$ are greater than zero only if $\hbw < i_{q+1}$, $j_{q+1}\leq \gs\gn-\hbw$, for all $0\leq q < n$. But by \ref{ass:similarity}, $\alpha^{i_{q+1}i_{q}}=\alpha^{i_{q+1}i_{q}}_{\infty}$ for all $\hbw < i_{q+1} \leq \gs\gn-\hbw$ and $i_{q}\in[\gs\gn]$. Thus we have by \eqref{eq:def bij} and \eqref{eq:drop out modulus}
\begin{align}
&B^{m}_{u_{0}+\ell,(u_{0}+\ell+c)\cmod N} \nonumber \\
&= \sum_{\substack{i_{0:n-1}\in\Z^{n}\\j_{0:n-1}\in\Z^{n}}}\Pi^{\infty}_{u_{0}+\ell,u_{0}+\ell+c}(i_{0:n-1},j_{0:n-1})\Xi_{u_{0}+\ell,u_{0}+\ell+c}(i_{0:n-1},j_{0:n-1})\label{eq:finite to infinite}
\end{align}
Finally we use the fact that by \ref{ass:periodic circulance} and \ref{ass:similarity}, $\alpha^{i+k\gs,j+k\gs}_{\infty} = \alpha^{ij}_{\infty}$ for all $i,j,k\in\Z$ and hence by \eqref{eq:reparam}, \eqref{eq:finite to infinite} and the fact that $\Xi_{u_{0}+\ell,u_{0}+\ell+c}(i_{0:n-1},j_{0:n-1}) = \Xi_{\ell,\ell+c}(i_{0:n-1},j_{0:n-1})$,
\begin{align}
\frac{B_{\gn}}{\gn} &= \sum_{\ell=0}^{\gs-1} \sum_{\abs{c}\leq 2n\hbw}\sum_{\substack{i_{0:n-1}\in\Z^{n}\\j_{0:n-1}\in\Z^{n}}}\Pi^{\infty}_{\ell,\ell+c}(i_{0:n-1},j_{0:n-1})\Xi_{\ell,\ell+c}(i_{0:n-1},j_{0:n-1}) \nonumber\\
&= \sum_{\ell=0}^{\gs-1} \sum_{\abs{c}\leq 2n\hbw}\E_{\ell,\ell+c}\Big[\pi_{0}^{\otimes 2}\C_{\epsilon_{0}}Q^{\otimes 2}_{1}\C_{\epsilon_{1}}\cdots Q^{\otimes 2}_{n}\C_{\epsilon_{n}}\big(\cvarphi^{\otimes 2}\big)\Big],\label{eq:canonical expectation}
\end{align}
where the last form is independent of $m$. The claim then follows by combining \eqref{eq:connection between martingale and tensor formula}, \eqref{eq:basic decompo}, \eqref{eq:lim of A} and \eqref{eq:canonical expectation}.
\end{proof}


\section{Time-uniform convergence}\label{sec:time_unform_conv}

Recall from Proposition \ref{prop:martingale result} that for Algorithm \ref{alg:alpha smc}, if Assumption \ref{ass:double stochasticity} holds, then for each $n\in\N$ and $p\geq1$,
\begin{equation}\label{eq:uniform_conv}
\sup_{M,m\geq1} \sqrt{\gs\gn} \;\E[|\pi^{\gs\gn}_{n}(\varphi)-\pi_{n}(\varphi)|^p]^{1/p} < \infty.
\end{equation}
In this section we establish conditions under which the LEPF and IBPF  satisfy, for all $p\geq1$:
\begin{equation}
\sup_{\gs\geq1} \sup_{n\geq 0}\sqrt{\gs} \;\E[|\pi^{\gs\gn}_{n}(\varphi)-\pi_{n}(\varphi)|^p]^{1/p} < \infty, \label{eq:uniform_conv_M}
\end{equation}
and do not satisfy, for any $p\geq1$:
\begin{equation}
\sup_{\gn\geq1} \sup_{n\geq 0}\sqrt{\gn} \;\E[|\pi^{\gs\gn}_{n}(\varphi)-\pi_{n}(\varphi)|^p]^{1/p} < \infty, \label{eq:uniform_conv_m}
\end{equation}
where in \eqref{eq:uniform_conv_M}, $m$ is fixed and in \eqref{eq:uniform_conv_m}, $M$ is fixed. We note that \eqref{eq:uniform_conv_M} and \eqref{eq:uniform_conv_m} are equivalent to corresponding inequalities with $\sup_{\gs\geq1}$ and $\sup_{\gn\geq1}$ replaced by $\limsup_{\gs\to\infty}$ and $\limsup_{\gn\to\infty}$ respectively, since for $\varphi\in\boundMeas(\X)$, $|\pi^{\gs\gn}_{n}(\varphi)-\pi_{n}(\varphi)|\leq\mathrm{osc}(\varphi)<\infty$.

We shall again leverage the fact that the LEPF and IBPF are instances of Algorithm \ref{alg:alpha smc}, which is itself an instance of $\alpha$SMC from \cite{whiteley_et_al14}, where it was shown that
\[
\mathcal{E}_{n}^{N}\defeq\frac{\left(\frac{\text{1}}{N}\sum_{i}W_{n}^{i}\right)^{2}}{\frac{\text{1}}{N}\sum_{i}(W_{n}^{i})^{2}},\quad\quad N^{\mathrm{eff}}_n(\gs,\gn) \defeq \gs\gn \mathcal{E}_{n}^{\gs\gn},
\]
play a central role in time-uniform convergence. The quantity $N^{\mathrm{eff}}_n$ is commonly called the \emph{effective sample size}. Note that by Jensen's inequality we always have $\mathcal{E}_n^N \leq 1$, or equivalently $N^{\mathrm{eff}}_n(\gs,\gn)\leq\gs\gn$. We shall appeal to the following result, which is a special case of \cite[Proposition 3]{whiteley_et_al14} (in particular see the last displayed equation in \cite[Proof of Theorem 2]{whiteley_et_al14}).

\begin{proposition}\label{prop:aSMC_uniform_alt}
Suppose that Assumption \ref{ass:double stochasticity} holds and additionally,
\begin{equation}\label{eq:mixing_hyp}
\exists(\delta,\epsilon)\in[1,\infty)^{2}\quad\mathrm{s.t.}\quad\sup_{n\geq0}\sup_{x,y}\frac{g_{n}(x)}{g_{n}(y)}\leq \delta,\quad\mathrm{and}\quad f(x,\cdot)\leq\epsilon f(y,\cdot),\;\forall x,y\in\mathbb{X}.
\end{equation}
Then there exists $\rho<1$ and for each $p\geq1$ a finite constant $c_p$ such that for any  $n\geq0$, $\gs\geq1$, $\gn\geq1$ and $\varphi\in \boundMeas(\X)$, Algorithm \ref{alg:alpha smc} has the property:
\begin{equation}\label{eq:aSMC_uniform_alt}
\E[|\pi^{\gs\gn}_{n}(\varphi)-\pi_{n}(\varphi)|^p]^{1/p} \leq \left\Vert \varphi\right\Vert_{\infty} \frac{c_p}{\sqrt{\gs\gn}}  \sum_{q=0}^n \rho ^{n-q} \E\left[|\mathcal{E}_q^{\gs\gn}|^{-p/2}\right]^{1/p}.
\end{equation}
\end{proposition}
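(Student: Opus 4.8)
\noindent\emph{Proof proposal.} Since Proposition \ref{prop:aSMC_uniform_alt} is quoted as a specialization of \cite[Proposition 3]{whiteley_et_al14}, the proof in the paper is a reference; here I sketch the self-contained argument one would reconstruct. The plan is to decompose the normalized error telescopically in time and then control each term by combining two ingredients: (i) the geometric stability of the prediction-filter semigroup implied by the mixing condition \eqref{eq:mixing_hyp}, which produces the factors $\rho^{n-q}$ and is responsible for time-uniformity, and (ii) a conditional $p$-th moment bound on the one-step sampling fluctuation in which the effective sample size $\mathcal{E}_q^{\gs\gn}$ appears. Concretely, I would build directly on the martingale representation \eqref{eq:martingale decompo} of Proposition \ref{prop:martingale result}, grouping the increments $\xi_\varrho$ by time step to write $\sqrt{\gs\gn}\,\bGam_n^{\gs\gn}(\cvarphi)=\sum_{q=0}^n S_q$, where $S_q$ collects the increments associated with time $q$ and, by \eqref{eq:def xi}, carries the kernel $\bQ_{q,n}(\cvarphi)$ together with the weights $W_q^i$.

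For ingredient (i), under \eqref{eq:mixing_hyp} the normalized kernels $\bQ_{q,n}$ forget their initialization geometrically, so that a Dobrushin-type contraction gives $\mathrm{osc}\big(\bQ_{q,n}(\cvarphi)\big)\leq C\rho^{n-q}\,\mathrm{osc}(\varphi)\leq 2C\rho^{n-q}\Vert\varphi\Vert_\infty$ for some $\rho<1$ depending on $(\delta,\epsilon)$; this is where both $\rho$ and $\Vert\varphi\Vert_\infty$ enter. For ingredient (ii), each increment in $S_q$ is, by \eqref{eq:def xi}, a centered term bounded in modulus by $(\gs\gn)^{-1/2}\gamma_q(\one)^{-1}W_q^i\,\mathrm{osc}(\bQ_{q,n}(\cvarphi))$; applying a conditional Marcinkiewicz--Zygmund/Burkholder inequality to the martingale-difference structure bounds $\E[|S_q|^p\mid\cdot]^{1/p}$ by the conditional quadratic variation, whose leading factor is $\tfrac{1}{\gs\gn}\sum_i (W_q^i)^2=\big(\tfrac{1}{\gs\gn}\sum_i W_q^i\big)^2/\mathcal{E}_q^{\gs\gn}$. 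This is exactly the step that converts the unevenness of the weights into the factor $(\mathcal{E}_q^{\gs\gn})^{-1/2}$.

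Assembling, I would take $L_p$ norms, use Minkowski's inequality over $q$ on $\bGam_n^{\gs\gn}(\cvarphi)=(\gs\gn)^{-1/2}\sum_q S_q$, and pass from the unnormalized measure to $\pi_n^{\gs\gn}(\varphi)-\pi_n(\varphi)=\bGam_n^{\gs\gn}(\cvarphi)\,\gamma_n(\one)/\Gam_n^{\gs\gn}(\one)$, controlling the ratios $\Gam_q^{\gs\gn}(\one)/\gamma_q(\one)$ and $\gamma_n(\one)/\Gam_n^{\gs\gn}(\one)$ via \eqref{eq:mixing_hyp} and the moment bounds of Proposition \ref{prop:martingale result}; this yields \eqref{eq:aSMC_uniform_alt}. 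I expect the main obstacle to be twofold and to coincide with the genuine content of \cite[Proposition 3]{whiteley_et_al14}: first, establishing the contraction \emph{uniformly in} $n$ for the weighted $\alpha$SMC semigroup, and second, cleanly isolating $(\mathcal{E}_q^{\gs\gn})^{-1/2}$ while arranging for the random normalizing constants to cancel along the semigroup so that no stray $\Gam_q^{\gs\gn}(\one)$ factors survive in the final bound. The within-group (rather than global) conditional independence of the particles, governed by Assumption \ref{ass:double stochasticity} and the $\alpha$-matrix, is what makes the quadratic variation collapse to $\sum_i (W_q^i)^2$, and verifying this carefully is the delicate point.
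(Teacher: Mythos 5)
Your two ingredients are the right ones---geometric stability of the normalized filter maps produces $\rho^{n-q}$, and a conditional Marcinkiewicz--Zygmund bound produces $(\mathcal{E}_q^{\gs\gn})^{-1/2}$---and both do appear in the proof of \cite[Proposition 3]{whiteley_et_al14}. The gap is in the decomposition you hang them on. Grouping the increments \eqref{eq:def xi} by time step bounds the \emph{unnormalized} error $\bGam_n^{\gs\gn}(\cvarphi)$, and each group $S_q$ then carries, besides $\mathrm{osc}(\bQ_{q,n}(\cvarphi))$, a stray factor: since each increment is bounded by $W_q^i\,\mathrm{osc}(\bQ_{q,n}(\cvarphi))/(\sqrt{\gs\gn}\,\gamma_q(\one))$ and $\sum_i (W_q^i)^2=\big(\sum_i W_q^i\big)^2/(\gs\gn\,\mathcal{E}_q^{\gs\gn})$, your Burkholder step yields $\E[|S_q|^p]^{1/p}\leq C_p\,\mathrm{osc}(\bQ_{q,n}(\cvarphi))\,\E\big[\bGam_q^{\gs\gn}(\one)^p\,(\mathcal{E}_q^{\gs\gn})^{-p/2}\big]^{1/p}$, not the claimed $\E[(\mathcal{E}_q^{\gs\gn})^{-p/2}]^{1/p}$. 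Removing the factor $\bGam_q^{\gs\gn}(\one)^p$ (e.g.\ by H\"older) would require $\sup_{q}\sup_{\gs,\gn}\E[\bGam_q^{\gs\gn}(\one)^{2p}]<\infty$, which is not available: the bounds \eqref{eq:convergence in mean}--\eqref{eq:proper scaling} of Proposition \ref{prop:martingale result} hold with constants depending on $n$, and under \eqref{eq:mixing_hyp} alone the deterministic bounds degrade exponentially, $\delta^{-q}\leq\bGam_q^{\gs\gn}(\one)\leq\delta^{q}$. Worse, your final renormalization step fails structurally: by \eqref{eq:bar gamma decompo 1}, converting $\bGam_n^{\gs\gn}(\cvarphi)$ into $\pi_n^{\gs\gn}(\varphi)-\pi_n(\varphi)$ involves $\bGam_n^{\gs\gn}(\one)-1$, and the $L_p$ norm of $\sqrt{\gs\gn}\,(\bGam_n^{\gs\gn}(\one)-1)$ is \emph{not} bounded uniformly in $n$ (apply your own method to $\varphi=\one$: the factors $\mathrm{osc}(\bQ_{q,n}(\one))$ do not decay in $n-q$, so the bound grows with $n$; this is the familiar linear-in-time growth of the normalizing-constant error, which persists even for perfectly mixing models). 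H\"older plus the $n$-dependent moment bounds of Proposition \ref{prop:martingale result} therefore cannot close the argument time-uniformly.

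The cited proof avoids both problems by never leaving the space of probability measures: writing $\Phi_{q,n}\defeq\Phi_n\circ\cdots\circ\Phi_{q+1}$, one telescopes the \emph{normalized} error as $\pi_n^{\gs\gn}-\pi_n=\sum_{q=0}^{n}\big[\Phi_{q,n}(\pi_q^{\gs\gn})-\Phi_{q,n}(\Phi_q(\pi_{q-1}^{\gs\gn}))\big]$ (with $\Phi_0(\pi_{-1}^{\gs\gn})\defeq\pi_0$). Under \eqref{eq:mixing_hyp} the stability constants of the maps $\Phi_{q,n}$ are \emph{deterministic} and give $\rho^{n-q}$, so no random normalizing constant ever needs to be inverted. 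The one-step error is then a self-normalized sum: since $W_q^i$ is $\F_{q-1}^{\gn}$-measurable (hence so is $\mathcal{E}_q^{\gs\gn}$) and Assumption \ref{ass:double stochasticity} makes $\Phi_q(\pi_{q-1}^{\gs\gn})(h)$ the exact conditional mean of $\pi_q^{\gs\gn}(h)$, one has $\pi_q^{\gs\gn}(h)-\Phi_q(\pi_{q-1}^{\gs\gn})(h)=\sum_i \frac{W_q^i}{\sum_j W_q^j}\big(h(\zeta_q^i)-\E[h(\zeta_q^i)\,|\,\F^{\gn}_{q-1}]\big)$, a sum of conditionally independent centered terms whose conditional quadratic variation is at most $\mathrm{osc}(h)^2\sum_i\big(W_q^i/\sum_j W_q^j\big)^2=\mathrm{osc}(h)^2/(\gs\gn\,\mathcal{E}_q^{\gs\gn})$ exactly---with no stray weight averages. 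Your sketch correctly flags "arranging for the random normalizing constants to cancel" as the delicate point, but the resolution is not a more careful H\"older argument along your decomposition; it is the choice of the normalized telescoping itself.
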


\subsection{The regime $\gn$ fixed and $\gs\to\infty$}

We shall now show that under the assumptions of Proposition \ref{prop:aSMC_uniform_alt}, both the LEPF and IBPF satisfy \eqref{eq:uniform_conv_M}. For the LEPF, this is a new result. For the IBPF, the result is not very surprising, since it is well known that under the strong but standard hypothesis \eqref{eq:mixing_hyp}, a single bootstrap particle filter is time-uniformly convergent (see \cite[Section 7.4.3.]{smc:theory:Dm04} and references therein). However, perhaps more surprising is the simplicity of the following argument, which applies to both the IBPF and LEPF.

It was noted in Section \ref{sec:filtering_framework} equation \eqref{eq:w_equals_w} that for both the LEPF and IBPF, for any $k\in[\gn]$,
\begin{equation}
W_n^i=W_n^j=:\widetilde{W}_n^k,\quad \forall i,j \in G_k.\label{eq:W_tilde}
\end{equation}
Consequently, for any $\gs,\gn\geq1$ and $n\geq0$,
\begin{eqnarray}
\mathcal{E}_{n}^{\gs\gn} & = & \frac{\Big(\frac{\text{1}}{N}\sum_{k=1}^{\gn}\gs\widetilde{W}_{n}^{k}\Big)^{2}}{\frac{\text{1}}{N}\sum_{k=1}^{\gn}\gs(\widetilde{W}_{n}^{k})^{2}}\nonumber\\
 & = & \frac{1}{\gn}\frac{\left(\frac{1}{\sqrt{\gn}}\sum_{k=1}^{\gn}\widetilde{W}_{n}^{k}\right)^{2}}{\frac{\text{1}}{\gn}\sum_{k=1}^{\gn}(\widetilde{W}_{n}^{k})^{2}}\nonumber\\
 & = & \frac{1}{\gn}\left(1+\frac{\sum_{k=1}^{\gn}\sum_{\ell\neq k}\widetilde{W}_{n}^{k}\widetilde{W}_{n}^{\ell}}{\sum_{k=1}^{\gn}(\widetilde{W}_{n}^{k})^{2}}\right)\geq\frac{1}{\gn},\label{eq:E_lower_bound}
\end{eqnarray}
or alternatively  $N_n^{\mathrm{eff}}\geq M$. Substituting the lower bound \eqref{eq:E_lower_bound} into  \eqref{eq:aSMC_uniform_alt} gives \eqref{eq:uniform_conv_M} as claimed.

\subsection{The regime $\gs$ fixed and $\gn\to\infty$}\label{sec:M_fixed_m_to_infty}

The following proposition establishes that $\limsup_{n\to\infty}\sigma_n^2=\infty$ is a sufficient condition for failure of \eqref{eq:uniform_conv_m}. In Sections \ref{sec:law_of_Z_ibpf} and \ref{sec:sig_to_inf_for_LEPF} we present examples for the IBPF and LEPF such that $\lim_{n\to\infty}\sigma_n^2=\infty$ and \eqref{eq:mixing_hyp} holds.

\begin{proposition}\label{prop:negative}
Consider Algorithm \ref{alg:alpha smc}. Assume that the hypotheses of Theorem \ref{the:CLT} hold, fix $\varphi\in\boundMeas(\X)$ and $\gs>1$. Then for any $n\geq0$ and $p\geq1$,
\begin{equation}\label{eq:conv_of_moments}
\lim_{m\to\infty}\sqrt{m}\;\E[|\pi^{\gs\gn}_{n}(\varphi)-\pi_{n}(\varphi)|^p]^{1/p}= \frac{\sigma_n}{\sqrt{M}} \sqrt{2} \left(\frac{\Gamma((p+1)/2)}{\sqrt{\pi}}\right)^{1/p}.
\end{equation}
If $\limsup_{n\to\infty}\sigma_n^2=\infty$, then \eqref{eq:uniform_conv_m} does not hold for any $p\geq1$. If additionally \eqref{eq:mixing_hyp} holds, then for the LEPF and IBPF, for any $p\geq1$
\begin{equation}\label{eq:quad_var}
\limsup_{\gn\to\infty}\sup_{n\geq 0}\;\gn^{p/2}\;\E\left[\left|\sqrt{\sum_{k\in[\gn]}\left(\frac{\widetilde{W}_n^k}{\sum_{j\in[\gn]}\widetilde{W}_n^j}\right)^2}\right|^p\right] = \infty.
\end{equation}
\end{proposition}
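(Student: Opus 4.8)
The plan is to establish the moment-limit formula \eqref{eq:conv_of_moments} first and then read off the remaining two claims from it. For \eqref{eq:conv_of_moments}, I would begin with the central limit theorem of Theorem \ref{the:CLT}. Since $N=\gs\gn$ with $\gs=M$ fixed, multiplying the weak limit by the constant $M^{-1/2}$ gives $\sqrt{\gn}\,(\pi^{\gs\gn}_n(\varphi)-\pi_n(\varphi))\indist{\gn\to\infty}\normal(0,\sigma_n^2/M)$. Weak convergence alone does not transmit to absolute moments, so the crux is to verify uniform integrability of $\{|\sqrt{\gn}\,(\pi^{\gs\gn}_n(\varphi)-\pi_n(\varphi))|^p\}_{\gn\geq1}$. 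This is supplied by \eqref{eq:convergence in mean} of Proposition \ref{prop:martingale result}: applied with exponent $p+1$ and $\gs=M$ fixed it yields $\sqrt{\gn}\,\E[|\pi^{\gs\gn}_n(\varphi)-\pi_n(\varphi)|^{p+1}]^{1/(p+1)}\leq C/\sqrt M$ uniformly in $m$, so the family is bounded in $L^{p+1}$ and the $p$-th powers are uniformly integrable. Convergence in distribution together with uniform integrability then gives $\gn^{p/2}\E[|\pi^{\gs\gn}_n(\varphi)-\pi_n(\varphi)|^p]\to\E[|Z|^p]$ for $Z\sim\normal(0,\sigma_n^2/M)$, and \eqref{eq:conv_of_moments} follows from the standard Gaussian absolute-moment identity $\E[|Z|^p]=(\sigma_n^2/M)^{p/2}\,2^{p/2}\Gamma((p+1)/2)/\sqrt\pi$ upon taking $p$-th roots.

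The failure of \eqref{eq:uniform_conv_m} is then immediate. Write $\kappa_p:=\sqrt2\,(\Gamma((p+1)/2)/\sqrt\pi)^{1/p}>0$ for the $\varphi$-free constant on the right of \eqref{eq:conv_of_moments}. For each fixed $n$, $\sup_{\gn\geq1}\sqrt{\gn}\,\E[|\pi^{\gs\gn}_n(\varphi)-\pi_n(\varphi)|^p]^{1/p}\geq\lim_{\gn\to\infty}\sqrt{\gn}\,\E[|\pi^{\gs\gn}_n(\varphi)-\pi_n(\varphi)|^p]^{1/p}=\kappa_p\,\sigma_n/\sqrt M$; taking the supremum over $n$ and using $\limsup_{n}\sigma_n^2=\infty$ (hence $\sup_n\sigma_n=\infty$) shows the double supremum in \eqref{eq:uniform_conv_m} is infinite for every $p\geq1$.

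For \eqref{eq:quad_var} I would first record the algebraic link between the normalised-weight quantity and the effective sample size. With $\widetilde W_n^k$ as in \eqref{eq:W_tilde} and each group of size $M$, the computation behind \eqref{eq:E_lower_bound} gives $\sum_{k\in[\gn]}(\widetilde W_n^k/\sum_{j\in[\gn]}\widetilde W_n^j)^2=(\gn\,\mathcal{E}_n^{\gs\gn})^{-1}$, so the left-hand side of \eqref{eq:quad_var} equals $\limsup_{\gn\to\infty}\sup_{n\geq0}\E[|\mathcal{E}_n^{\gs\gn}|^{-p/2}]$. I would then argue by contradiction: were this limsup finite, then for all large $m$ one would have $\sup_{n\geq0}\E[|\mathcal{E}_n^{\gs\gn}|^{-p/2}]^{1/p}$ bounded by some constant, and substituting into Proposition \ref{prop:aSMC_uniform_alt} (applicable because \eqref{eq:mixing_hyp} is assumed) and summing the geometric factor $\sum_{q=0}^n\rho^{n-q}\leq(1-\rho)^{-1}$ would bound $\sqrt{\gn}\,\E[|\pi^{\gs\gn}_n(\varphi)-\pi_n(\varphi)|^p]^{1/p}$ uniformly in $n$ for all large $m$. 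This is the $\limsup_{\gn\to\infty}$ form of \eqref{eq:uniform_conv_m}, which is equivalent to \eqref{eq:uniform_conv_m} itself, contradicting the failure already established under $\limsup_n\sigma_n^2=\infty$. Hence the limsup must be $+\infty$, which is \eqref{eq:quad_var}.

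The main obstacle is the uniform-integrability step in \eqref{eq:conv_of_moments}; without the uniform $L^{p+1}$ bound \eqref{eq:convergence in mean} one cannot promote the CLT to convergence of moments, and everything else rests on that formula. The weight identity $\sum_{k\in[\gn]}(\widetilde W_n^k/\sum_{j\in[\gn]}\widetilde W_n^j)^2=(\gn\,\mathcal{E}_n^{\gs\gn})^{-1}$ and the contradiction with Proposition \ref{prop:aSMC_uniform_alt} are the two remaining pivots, but both are essentially mechanical once that identity and the failure of \eqref{eq:uniform_conv_m} are in hand.
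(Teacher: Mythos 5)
Your proposal is correct and follows essentially the same route as the paper: the CLT of Theorem \ref{the:CLT} is promoted to convergence of $p$-th absolute moments via uniform integrability obtained from the moment bound \eqref{eq:convergence in mean} (the paper uses an $L^{p+\delta}$ bound where you use $L^{p+1}$, an immaterial difference), the failure of \eqref{eq:uniform_conv_m} is read off from \eqref{eq:conv_of_moments} together with $\limsup_n\sigma_n^2=\infty$, and \eqref{eq:quad_var} is established by the same contradiction argument combining the identity $\sum_{k\in[\gn]}\bigl(\widetilde W_n^k/\sum_{j}\widetilde W_n^j\bigr)^2=(\gn\,\mathcal{E}_n^{\gs\gn})^{-1}$ with Proposition \ref{prop:aSMC_uniform_alt}. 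No gaps; this matches the paper's proof in structure and substance.
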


\begin{remark}
The condition in \eqref{eq:quad_var} clearly rules out:
$$
\sup_{\gn\geq1}\sup_{n\geq 0}\;\gn^{p}\;\E\left[\left|\max_{k\in[\gn]}\frac{\widetilde{W}_n^k}{\sum_{j\in[\gn]}\widetilde{W}_n^j}\right|^p\right] < \infty,
$$
which is exactly the key hypothesis of \cite{miguez_et_vazquez15} as written in \eqref{eq:miguez_hyp} in the case $\epsilon=0$. {Note however, that whilst Proposition establishes that \eqref{eq:uniform_conv_m} does not hold, i.e. time-uniform convergence at rate $m^{-1/2}$ does not occur, we have not ruled out the possibility that time-uniform convergence occurs at some slower rate. Moreover, our negative result is of course valid only for the specific local exchange mechanism appearing in Algorithm \ref{alg:lepf}, which is only a special case of the more general framework of \cite{miguez_et_vazquez15}.  In Section \ref{sec:conc} we shall comment on some possible algorithmic modifications to ensure time-uniform convergence.}
\end{remark}

\begin{proof}
To prove \eqref{eq:conv_of_moments}, we follow arguments used in the proof of \cite[Theorem 12]{douc2014stability}, who established a limit of the same form for a standard particle filter. We first recall the fact that for a sequence of random variables $(A_{\gn})_{\gn\geq1}$, if $A_{\gn} \indist{} A$ for some $A$, and for some $p>0$, $(|A_{\gn}|^p)_{\gn\geq1}$ is uniformly integrable, then $\lim_{\gn\to\infty}\E[|A_{\gn}|^p] =  \E[|A|^p]$, see \cite[p.14, Theorem A]{serfling}. As in the statement, fix $\varphi\in\boundMeas(\X)$, $\gs>1$ and $n\geq0$. Then set $A_{\gn}= \sqrt{\gn}\pi_n^{\gs\gn}(\cvarphi)$. By Theorem \ref{the:CLT}, $A_{\gn}$ converges in distribution to a zero-mean Gaussian variable with variance $\sigma_n^2/M$. For any given $p\geq1$ and $\delta>0$, \eqref{eq:uniform_conv} implies $\sup_{m\geq1}\E[|A_{\gn}|^{p+\delta}]<\infty$, so by \cite[Lemma II.6.3]{shiryaev1984probability}, $(|A_{\gn}|^{p})_{\gn\geq1}$ is uniformly integrable. Therefore \eqref{eq:conv_of_moments} holds.

If \eqref{eq:uniform_conv_m} were to hold, the r.h.s.~of \eqref{eq:conv_of_moments} would be upper-bounded by a finite constant possibly depending on $p$ and $\gs$, but independent of $n$. The latter would contradict $\limsup_{n\to\infty}\sigma_n^2=\infty$. Hence \eqref{eq:uniform_conv_m} does not hold when $\limsup_{n\to\infty}\sigma_n^2=\infty$.

Now assume \eqref{eq:mixing_hyp} holds in addition to $\limsup_{n\to\infty}\sigma_n^2=\infty$. In order to establish \eqref{eq:quad_var} by a contradiction, assume that for some $p\geq 1$ there is a constant $d_p$ such that
\begin{equation}\label{eq:time_uni_contra}
\limsup_{\gn\to\infty}\sup_{n\geq 0}\gn^{p/2}\;\E\left[\left|\sqrt{\sum_{k\in[\gn]}\left(\frac{\widetilde{W}_n^k}{\sum_{j\in[\gn]}\widetilde{W}_n^j}\right)^2}\right|^p\right] =d_p<\infty.
\end{equation}
Since for the IBPF and LEPF, $W_n^i=W_n^j=\widetilde{W}_n^k$ for all $i,j\in G_k$, we have
$$m \sum_{k\in[\gn]}\left(\frac{\widetilde{W}_n^k}{\sum_{j\in[\gn]}\widetilde{W}_n^j}\right)^2 = 1/\mathcal{E}_n^{\gs\gn}.$$
Combining this and \eqref{eq:time_uni_contra}  into the bound \eqref{eq:aSMC_uniform_alt} of Proposition \ref{prop:aSMC_uniform_alt} gives
$$
\limsup_{m\to\infty}\sup_{n\geq 0} \sqrt{m}\;\E[|\pi^{\gs\gn}_{n}(\varphi)-\pi_{n}(\varphi)|^p]^{1/p} \leq \Vert \varphi \Vert_{\infty} \frac{c_p}{\sqrt{\gs}}  \frac{ d_p}{1-\rho} <\infty,
$$
in turn implying \eqref{eq:uniform_conv_m}, since $|\pi^{\gs\gn}_{n}(\varphi)-\pi_{n}(\varphi)|\leq\mathrm{osc}(\varphi)<\infty$. But we have already proved that \eqref{eq:uniform_conv_m} does not hold for any $p\geq1$ when $\limsup_{n\to\infty}\sigma_n^2=\infty$, hence the inequality in \eqref{eq:time_uni_contra} does not hold for any $p\geq1$. This completes the proof.
\end{proof}


\section{A closer look at the asymptotic variance}\label{sec:comparison}
\newcommand{\bE}{\overline{\E}}
\newcommand{\bhbw}{\overline{\hbw}}

Our objective in this section is to develop more insight into the asymptotic variance in \mbox{Theorem \ref{the:CLT}},
\begin{equation}\label{eq:asymp_var_sec_4}
\sigma_n^2=\frac{1}{M\gamma_n(\one)^2}\sum_{\substack{0\leq u<\per \\ |v|\leq2n\hbw}} \E_{u,v+u}\Big[\pi_{0}^{\otimes 2}\C_{\ind[I_{0}=J_{0}]}Q^{\otimes 2}_{1}\C_{\ind[I_{1}=J_{1}]}\cdots Q^{\otimes 2}_{n}\C_{\ind[I_{n}=J_{n}]}\big(\cvarphi^{\otimes 2}\big)\Big]
\end{equation}
for the LEPF and IBPF, especially regarding its behaviour as $n\to\infty$.

For the convenience of the reader we recall that in \eqref{eq:asymp_var_sec_4}, $\E_{u,v}$ denotes expectation w.r.t.~to the law of the bi-variate Markov chain:
\begin{equation}\label{eq:IJ_law_sec_4}
\begin{array}{rl}
&(I_{n},J_{n}) \thicksim \delta_{u} \otimes \delta_{v}, \\
&\P(I_{k}=i_{k}, J_{k}=j_{k}\,|\, I_{k+1}= i_{k+1}, J_{k+1}=j_{k+1}) = \alpha^{i_{k+1}i_{k}}_{\infty}\alpha^{j_{k+1}j_{k}}_{\infty},
\end{array}
\end{equation}
and thus the only distinction between the asymptotic variances for the LEPF and IBPF is through $\alpha_\infty$, as given in \eqref{eq:del aibpf and alepf} and \eqref{eq:del aibpf}.

To help develop insight, we consider a much simplified HMM:
\begin{equation}\label{eq:simple model}
f(x,\,\cdot\,)=\pi_{0}(\,\cdot\,)\quad \text{and} \quad g_n = g \in \boundMeas(\X), \qquad\forall~n\in\N.
\end{equation}
This is obviously quite unrealistic, so let us be clear about our motives:

Firstly, \eqref{eq:simple model} can be understood as being a favourable assumption for the performance of the LEPF and IBPF: $f(x,\,\cdot\,)=\pi_{0}(\,\cdot\,)$ implies that $\pi_n=\pi_0$ and that the particles $\{\zeta_n^i:i\in[N]\}$ in both Algorithms \ref{alg:lepf} and \ref{alg:ibpf} are i.i.d.~samples from $\pi_0$ for all $n\in\N$. Never-the-less, we shall see in Section \ref{sec:sig_to_inf_for_LEPF} in conjunction with Section \ref{sec:M_fixed_m_to_infty} that under this favourable assumption certain \emph{negative} results can hold for the IBPF and LEPF, namely $\lim_{n\to\infty}\sigma_n^2=\infty$ and lack of time-uniform convergence.

Secondly, we shall see that \eqref{eq:simple model} makes the expression in \eqref{eq:asymp_var_sec_4} considerably more tractable, allowing us to make precise comparisons between the LEPF and IBPF. We shall see in Section \ref{sec:stochastic volatility} that our conclusions for this simplified HMM are consistent with results obtained by simulation for a more realistic stochastic volatility model.

Under \eqref{eq:simple model}, we have $\pi_{n}=\pi_{0}$, $\gamma_{n}(\mathbf{1})=\pi_{0}Q_{0,n}(\mathbf{1})=\pi_{0}(g)^{n}$,
and for all $\Phi\in\boundMeas(\X^{2})$ and $1\leq p \leq n$,
\begin{equation*}
\C_{\ind[I_{p-1}=J_{p-1}]}Q^{\otimes 2}_{p}(\Phi) = \big(\ind[I_{p-1}=J_{p-1}]g^2+\ind[I_{p-1}\neq J_{p-1}]g^{\otimes 2}\big)\pi^{\otimes 2}_{0}(\Phi)
\end{equation*}
so
\begin{align}
&\frac{1}{\gamma_{n}(\mathbf{1})^2}\pi_{0}^{\otimes2}\mathcal{C}_{\ind[I_{0}=J_{0}]}Q_{1}^{\otimes2}\mathcal{C}_{\ind[I_{1}=J_{1}]}\cdots Q_{n}^{\otimes2}\mathcal{C}_{\ind[I_{n}=J_{n}]}\big({\cvarphi}^{\otimes2}\big)\nonumber \\
&=\ind[I_{n}=J_{n}]\pi_{0}({\cvarphi}^{2})\prod_{p=0}^{n-1}\left(1+\ind[I_{p}=J_{p}]\left(\frac{\pi_{0}(g^{2})}{\pi_{0}(g)^{2}}-1\right)\right)\nonumber\\
&=\ind[I_{n}=J_{n}]\pi_{0}({\cvarphi}^{2})(1+c)^{Z_{n}}, \label{eq:exponential moment origin}
\end{align}
where
\begin{equation}\label{eq:collision count}
Z_{n}=\sum_{p=0}^{n-1}\ind[I_{p}=J_{p}]
\end{equation}
and $c=\pi_{0}(g^{2})/\pi_{0}(g)^{2}-1$. By \eqref{eq:asymp_var_sec_4} and \eqref{eq:exponential moment origin},
we thus have
\begin{equation}\label{eq:var as mgf}
\frac{\sigma^2_{n}}{\pi_{0}(\cvarphi^{2})} = \frac{1}{\gs}\sum_{\substack{0\leq u<\per \\ |v|\leq2n\hbw}} \E_{u,v+u}\Big[\ind[I_{n}=J_{n}](1+c)^{Z_{n}}\Big] = \frac{1}{\gs} \sum_{u=0}^{\gs-1} \E_{u,u}[e^{tZ_{n}}] 
\end{equation}
where $t = \log (1+c)$ and the second equality follows from the initial condition part of \eqref{eq:IJ_law_sec_4}.

We thus observe the key role in the asymptotic variance played by the moment generating function of the random variable $Z_{n}$, whose interpretation is clear by \eqref{eq:collision count}: $Z_{n}$ is the number of times the Markov chains $I$ and $J$ collide in $n$ steps. Intuitively, the more frequent these collisions tend to be, the faster the growth of the asymptotic variance.

To help formalize this intuition, our next step is to characterise the law of $Z_n$ under $\eqref{eq:IJ_law_sec_4}$ with $u=v$, for the IBPF and the LEPF, in order to understand how $\sigma_n^2$ behaves as $n\to\infty$. We stress that this law is a consequence only of \eqref{eq:IJ_law_sec_4} and does not depend on \eqref{eq:simple model}.

\subsection{Law of $Z_n$ for the IBPF}\label{sec:law_of_Z_ibpf}

In the case of the IBPF we see immediately by inspecting $\aibpfi$ in \eqref{eq:del aibpf} (see also Figure \ref{fig:trajectories}b) that when $u=v$ for any $u\in\Z$ in \eqref{eq:IJ_law_sec_4}, $I$ and $J$ are sequences of i.i.d.~random variables, each uniformly distributed on the set $\{\floor{(u-1)/\gs}\gs+1,\ldots,\floor{(u-1)/\gs}\gs+\gs\}$. Hence the random variables $(\mathbb{I}[I_{k}-J_{k}])_{0\leq k < n}$ constitute a sequence of Bernoulli variables with success probability $\gs^{-1}$ and consequently
\begin{equation}\label{eq:Z IBPF}
Z_{n} \thicksim \Bin\big(n,1/\gs\big),
\end{equation}
whatever the value of $u$  (we note that this conclusion can also be deduced from \cite[Lemma 3.2]{cerou_et_al_11}, which provides a non-asymptotic variance formula for a single bootstrap particle filter, i.e. $N=M$). Hence \eqref{eq:var as mgf} can be further simplified to
\begin{equation}\label{eq:u invariance}
\frac{\sigma^2_{n}}{\pi_{0}(\cvarphi^{2})}=\frac{1}{\gs} \sum_{u=0}^{\gs-1} \E_{u,u}[e^{tZ_{n}}] = \E_{0,0}[e^{tZ_{n}}],
\end{equation}
where $t=\log(1+c)$, $c=\pi_{0}(g^{2})/\pi_{0}(g)^2-1$.

By \eqref{eq:Z IBPF}, $\E_{0,0}[e^{tZ_{n}}]$ is the moment generating function of a binomial distribution,
so readily,
\begin{equation}\label{eq:rel as var ibpf}
\frac{\sigma^2_{n}}{\pi_{0}(\cvarphi^{2})} = \Big(1+\frac{c}{\gs}\Big)^n.
\end{equation}
Thus when \eqref{eq:simple model} holds, and assuming that $\pi_{0}(\cvarphi^{2})>0$ and $c>0$, for the IBPF $\sigma_n^2$ grows exponentially fast as $n\rightarrow\infty$. This can be considered a negative result for the IBPF compared to the standard bootstrap particle filter, for which it has been shown that under a variety of more realistic conditions the sequence $(\sigma_n^2)_{n\in\N}$ may be bounded by a finite constant, or is tight when the observation sequence is treated as random \cite{smc:the:C04,kunsch2005recursive,favetto2009asymptotic,whiteley2013stability,douc2014stability}. When \eqref{eq:simple model} holds one can easily construct $\pi_0$ and $g$ such that \eqref{eq:mixing_hyp} holds and $c>0$.

\subsection{Example of $\sigma_n^2\to\infty$ for the LEPF}\label{sec:sig_to_inf_for_LEPF}

Let us point out an example which satisfies \eqref{eq:simple model}, \eqref{eq:mixing_hyp} and for which $\sigma_n^2\to\infty$. Notice that for the LEPF, it follows easily from \eqref{eq:IJ_law_sec_4} and \eqref{eq:del aibpf and alepf} that for any $u\in\Z$ and whatever the values of $\gs$ and $\theta$,
$$
\E_{u,u}[\mathbb{I}[Z_n=n]]=\E_{u,u}\left[\prod_{p=0}^{n-1}\mathbb{I}[I_p=J_p]\right]=\frac{1}{M^n},
$$
hence we have the crude lower bound,
$$
\E_{u,u}[e^{tZ_{n}}]\geq \left(\frac{\pi_0(g^2)}{\pi_0(g)^2}\frac{1}{M}\right)^n.
$$
As we shall now demonstrate, one can readily construct examples for which $\frac{\pi_0(g^2)}{\pi_0(g)^2}\frac{1}{M}>1$ and hence such that $\sigma_n^2\to\infty$ exponentially fast for any $\varphi$ with $\pi_{0}(\cvarphi^{2})>0$. Let $\X=\{0,1\}$, $p\in(0,1/M)$, $\delta\in (0,1)$ and
\begin{equation*}
\pi_0(0)=p,\quad\pi_0(1)=1-p,\quad g(0)=1-\delta,\quad g(1)=\delta.
\end{equation*}
Then, since
$$
\frac{\pi_0(g^2)}{\pi_0(g)^2} = \frac{p(1-\delta)^2+(1-p)\delta^2}{(p(1-\delta)+(1-p)\delta)^2} \surely{\delta\to 0} 1/p > M,
$$
we can choose $\delta$ small enough that $\frac{\pi_0(g^2)}{\pi_0(g)^2}\frac{1}{M}>1$, whilst satisfying $g\in\boundMeas(\X)$ and $g(x)>0$, as required for Assumption \ref{ass:bounded and positive g} and \eqref{eq:mixing_hyp}.

\subsection{Law of $Z_n$ for the LEPF}\label{sec:law_of_Z_LEPF}

The interaction pattern illustrated in Figure \ref{fig:trajectories}a makes study of the law of $Z_n$ more difficult for the LEPF than for the IBPF, but never-the-less we shall below derive an exact characterisation of the distribution of $Z_{n}$. Observe that $Z_{n}$ depends on $I$ and $J$ only through the sequence of indicator variables $(\mathbb{I}[I_{k}=J_{k}])_{0\leq k < n}$, but this sequence is unfortunately non-Markov and difficult to analyse directly. However the bi-variate process $(D,E)$, with $D \defeq (D_{k})_{0\leq k \leq n}$, $E \defeq (E_{k})_{0\leq k \leq n}$ and
\begin{equation*}
D_{k} \defeq \floor{\frac{I_{n-k}-1}{\gs}} - \floor{\frac{J_{n-k}-1}{\gs}},\quad E_{k} \defeq \ind[I_{n-k} = J_{n-k}], \quad \forall~ 0 \leq k \leq n.
\end{equation*}
is easier to deal with. 

It follows from $\alepfi$ in \eqref{eq:del aibpf and alepf}, \eqref{eq:IJ_law_sec_4} and some elementary manipulations (omitted for brevity) that the bi-variate sequence $(D_{k},E_{k})_{0\leq k \leq n}$ is Markov and for any $u\in\mathbb{Z}$ the initial condition $(I_n,J_n)\sim\delta_u\otimes\delta_u$  implies $(D_{0},E_{0}) \thicksim \delta_0\otimes \delta_{1}$. Thus all statements about the law of functionals of $(D,E)$ in the remainder of Section \ref{sec:law_of_Z_LEPF} hold irrespective of the particular value of $u=v$ in \eqref{eq:IJ_law_sec_4}.

By similarly elementary but lengthy manipulations it can be checked that $D$ is also Markov, with for all $1\leq k \leq n$, $d\in\Z$,
\begin{equation}
D_{k}\,|\, \{D_{k-1} = d\} \thicksim \frac{\theta(\gs-\theta)}{\gs^2}(\delta_{d+1}+\delta_{d-1})+\bigg(\frac{(\gs-\theta)^2}{\gs^2}+\frac{\theta^2}{\gs^2}\bigg)\delta_{d}\label{eq:d is rw},
\end{equation}
and if $d_{k-1},d_{k}\in\Z$ and $x\in \{0,1\}$ such that $d_{k-1} = d_{k} = 0$, i.e.~that the integer parts of $(I_{n-k}-1)/\gs$ and $(J_{n-k}-1)/\gs$, as well as $(I_{n-k+1}-1)/\gs$ and $(J_{n-k+1}-1)/\gs$, coincide, then
\begin{equation}\label{eq:def epsilon is cond indep}
E_{k}\,|\, \{D_{k-1} = d_{k-1},~D_{k} = d_{k},~E_{k-1}=x\} \thicksim \Ber\bigg(\displaystyle\frac{\gs}{(\gs-\theta)^2+\theta^2}\bigg)
\end{equation}
and otherwise
\begin{equation}\label{eq:def epsilon is cond indep 2}
E_{k}\,|\, \{D_{k-1} = d_{k-1},~D_{k} = d_{k},~E_{k-1}=x\} \thicksim \delta_{0}.
\end{equation}
By \eqref{eq:def epsilon is cond indep} and \eqref{eq:def epsilon is cond indep 2}, for all $b \in [n]$, 
\begin{equation}\label{eq:conditional Z}
Z_{n}\,|\, \{B=b\} \thicksim \Bin\bigg(b,\frac{\gs}{(\gs-\theta)^2+\theta^2}\bigg)
\end{equation}
where
\begin{equation}\label{eq:def b as consecutive zeros}
B \defeq \sum_{k=1}^{n}\ind[D_{k-1}=0]\ind[D_{k}=0].
\end{equation}
Therefore it remains to derive the distribution of $B$, the distribution of $Z_{n}$ is then available by marginalisation.

We will write $\BBin(n,a,b)$ for the so-called beta binomial distribution \cite{skellam48} specified  for any $a,b > 0$ by the probability mass function
\begin{equation}\label{eq:beta binom pmf}
p(k) = {n \choose k}\frac{\betafun(k+a,n-k+b)}{\betafun(a,b)}, \qquad \forall~0 \leq k \leq n,
\end{equation}
where $\betafun(a,b)$ denotes the beta-function. The case $\BBin(n,1,0)$ is understood as the point mass $\delta_{n}$. Moreover, we write $\RWZ(n)$ for the distribution specified by the probability mass function:
\begin{equation}\label{eq:feller pmf}
p(x) = \frac{2^{x}}{2^{2\floor{{n}/{2}}}}{2\floor{{n}/{2}} - x \choose \floor{{n}/{2}}} \qquad \forall ~0\leq x \leq \floor{{n}/{2}},
\end{equation}
with convention ${n \choose 0}=1$ for all $n\geq 0$. As shown in \cite[Theorem 2]{feller57}, \eqref{eq:feller pmf} is the distribution of the number of times a symmetric simple random walk on $\Z$ starting from zero returns to zero in $n$ time-steps. The following result, in conjunction with \eqref{eq:conditional Z}, characterises the distribution of $Z_n$. The proof is in the Appendix.
\begin{lemma}\label{lem:distribution of B}
Fix $\gs \geq 1$ and $\theta \in \{1,\ldots,\gs-1\}$, let $B$ be as defined in \eqref{eq:def b as consecutive zeros} and let $\bB$, $S$ and $V$ be random variables such that
\begin{equation}\label{eq:claim one}
\bB\,|\, \{V=v,~S=s\} \thicksim \BBin(v,s+1,n-v-s),
\end{equation}
and
\begin{equation}\label{eq:feller}
S\,|\, \{V=v\} \thicksim \RWZ(n-v), \qquad V \thicksim \Bin\bigg(n,\frac{(\gs-\theta)^2}{\gs^2}+\frac{\theta^2}{\gs^2}\bigg).
\end{equation}
Then $B$ has the same distribution as $\bB$.
\end{lemma}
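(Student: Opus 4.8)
The plan is to exploit the spatial homogeneity of $D$. By \eqref{eq:d is rw} the transition law of $D$ depends only on the displacement, so I would represent $D$ as a random walk $D_k = D_{k-1}+\eta_k$ with $D_0=0$ and increments $\eta_1,\dots,\eta_n$ i.i.d., each equal to $0$ with probability $q\defeq\frac{(\gs-\theta)^2+\theta^2}{\gs^2}$ and to each of $\pm1$ with probability $\frac{\theta(\gs-\theta)}{\gs^2}$. Call step $k$ a \emph{stay} if $\eta_k=0$ and a \emph{move} otherwise. Since $D_{k-1}=D_k=0$ forces $\eta_k=0$, and a stay step starting at level $0$ keeps the walk at $0$, I would rewrite $B=\#\{k:\eta_k=0,\ D_{k-1}=0\}$; that is, $B$ counts the stay steps initiated while the walk sits at level $0$.

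Next I would condition on the number of stays $V\defeq\#\{k:\eta_k=0\}$. Because the $\eta_k$ are i.i.d., $V\sim\Bin(n,q)$, matching the marginal of $V$ in \eqref{eq:feller}; and given $V=v$, the positions of the $v$ stays among the $n$ slots are uniform and independent of the signs of the $w\defeq n-v$ moves, which are themselves i.i.d.\ uniform on $\{\pm1\}$. The successive distinct levels of $D$ thus form a symmetric simple random walk $L_0=0,L_1,\dots,L_w$. Writing $c_j$ for the number of stay steps occurring while $D$ rests at level $L_j$ (block $0$ before the first move, block $w$ after the last), the vector $(c_0,\dots,c_w)$ is uniform over all compositions of $v$ into $w+1$ nonnegative parts and is independent of $L$. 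Since $D$ sits at level $0$ exactly on the blocks $j$ with $L_j=0$, this yields $B=\sum_{j:\,L_j=0}c_j$.

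I would then identify the remaining randomness with $S$ and $\bB$. By \cite[Theorem 2]{feller57} the number $S\defeq\#\{j\in\{1,\dots,w\}:L_j=0\}$ of returns of $L$ to $0$ satisfies $S\,|\,\{V=v\}\sim\RWZ(n-v)$, exactly as in \eqref{eq:feller}, and the number of zero-blocks (including $L_0$) is $S+1$. It then remains to show that summing a uniform composition of $v$ into $w+1$ parts over a fixed subset of $r\defeq s+1$ parts is $\BBin(v,s+1,n-v-s)$; this is the elementary identity
\begin{equation*}
\frac{\binom{k+r-1}{r-1}\binom{v-k+w-r}{w-r}}{\binom{v+w}{w}}=\binom{v}{k}\frac{\betafun\!\left(k+r,\,v-k+w+1-r\right)}{\betafun\!\left(r,\,w+1-r\right)},\qquad w+1-r=n-v-s,
\end{equation*}
which is routine Gamma-function bookkeeping matching \eqref{eq:beta binom pmf}. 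Because the uniform composition is exchangeable, the law of the sum over the \emph{random} set $\{j:L_j=0\}$ depends on $L$ only through its size $s+1$; combined with the independence of $(c_j)$ and $L$ this gives $B\,|\,\{V=v,S=s\}\sim\BBin(v,s+1,n-v-s)$, matching \eqref{eq:claim one}. The degenerate case $w=0$ (so $v=n$, $s=0$, $B=n$) is precisely the convention $\BBin(n,1,0)=\delta_n$. Having matched the three conditional laws of $V$, of $S$ given $V$, and of $B$ given $(V,S)$ to those defining $\bB$, the equality in distribution of $B$ and $\bB$ follows.

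The main obstacle here is not any single computation but the bookkeeping that underlies the reduction: verifying that the arrangement of stay steps is uniform and independent of the signed move-walk (so that $(c_0,\dots,c_w)$ is a uniform composition independent of $L$), and correctly translating the collision count $B$ into a subset-sum of that composition over the zero-levels. Once this structural picture is in place, the appeal to Feller's return-time law and the beta-binomial identity are straightforward.
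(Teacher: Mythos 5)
Your proof is correct, and it reaches the Beta\textendash Binomial law by a genuinely different mechanism than the paper, although the two arguments share the same skeleton: both condition on the number of zero increments $V\thicksim\Bin\big(n,\tfrac{(\gs-\theta)^2+\theta^2}{\gs^2}\big)$, both identify the stay-free (embedded) walk as a symmetric simple random walk so that Feller's theorem gives $S\,|\,\{V=v\}\thicksim\RWZ(n-v)$, and both reduce the problem to showing $B\,|\,\{V=v,S=s\}\thicksim\BBin(v,s+1,n-v-s)$. Where you differ is in that last, crucial step. The paper argues \emph{dynamically}: it rebuilds $D$ from a pure $\pm1$ walk $D^{0}$ by inserting the $V$ stays one at a time at uniformly random positions, observes that the indicator of the $p$-th inserted stay landing at level zero is Bernoulli with success probability $(1+S+B^{p-1})/(n-V+p)$, recognises a P\'olya urn, and cites the known Beta\textendash Binomial law of urn counts. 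You argue \emph{statically}: given $V=v$, the stay locations form a uniform composition $(c_0,\ldots,c_w)$ of $v$ into $w+1$ blocks, independent of the move-walk $L$; then $B=\sum_{j:L_j=0}c_j$ is a subset-sum of an exchangeable uniform composition over a set of $S+1$ blocks, and the stars-and-bars/Gamma identity you display shows directly that such a subset-sum has the $\BBin(v,s+1,n-v-s)$ pmf of \eqref{eq:beta binom pmf}. Each route has its advantages: the paper's urn argument outsources the combinatorics to a standard reference, but it must justify that sequential uniform insertion reproduces the law of $D$ (a point it passes over with ``can be checked''); your argument is self-contained and elementary, makes the independence and exchangeability structure fully explicit, and handles the degenerate case $w=0$ cleanly via the $\BBin(n,1,0)=\delta_n$ convention, at the modest cost of verifying the counting identity by hand.
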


\section{Interpretation of results and discussion}\label{sec:interp}

One of the main conclusions which can be drawn from our results thus far is quite negative: we have seen in Section \ref{sec:comparison}, that for the IBPF and LEPF, the asymptotic variance can increase over time at an exponential rate. However, taken in isolation, this fact does not convey information about the relative performance of the two algorithms. The aim of Section \ref{sec:interp} is to address this matter, qualitatively and numerically.

In Section \ref{sec:numerical_eval}, we continue with a toy model for which we are able to numerically evaluate asymptotic variances without simulation and explain the behaviour we see in terms of the collision count $Z_n$. We also examine dependence on the parameters $M$ and $\theta$, compare asymptotic variance values with nonasymptotic values obtained by simulation, and explore the behaviour of the effective sample size.
Section \ref{sec:stochastic volatility} considers a more realistic stochastic volatility model, and Section \ref{sec:conc} provides some concluding perspectives and describes avenues for future investigation.

\subsection{Evaluation of asymptotic variances}\label{sec:numerical_eval}
Recall that for the toy model of Section \ref{sec:comparison}, the asymptotic variances for the IBPF and LEPF are proportional to $\E_{0,0}[e^{tZ_{n}}]$, where $Z_n$ counts collisions of the Markov chains with transition probabilities given by $\alpha_\infty$. Due to the graph in  Figure\ref{fig:trajectories}a having only one connected component, versus several in Figure \ref{fig:trajectories}b, it seems natural to suppose that $Z_n$ is ``typically'' lower for the LEPF than for the IBPF, and thus the LEPF will exhibit lower asymptotic variance.

To explore this idea, we now use \eqref{eq:conditional Z} and Lemma \ref{lem:distribution of B} to make numerical evaluations of $\E_{0,0}[e^{tZ_{n}}]$. We do so for the specific instance of the model \eqref{eq:simple model} where
\begin{equation}\label{eq:instance of simple model}
X_{0} \thicksim \normal(0,1)\qquad \text{and}\qquad g(x) = e^{-(x+1/2)^2/2}/\sqrt{2\pi},
\end{equation}
and define $t_{0} \defeq \log({\pi_{0}(g^{2})}/{\pi_{0}(g)^{2}})\approx .1855077$.

\begin{figure}
\begin{minipage}{\textwidth}
\begin{tikzpicture}
\node[] at (0,0) {\includegraphics[width=\textwidth]{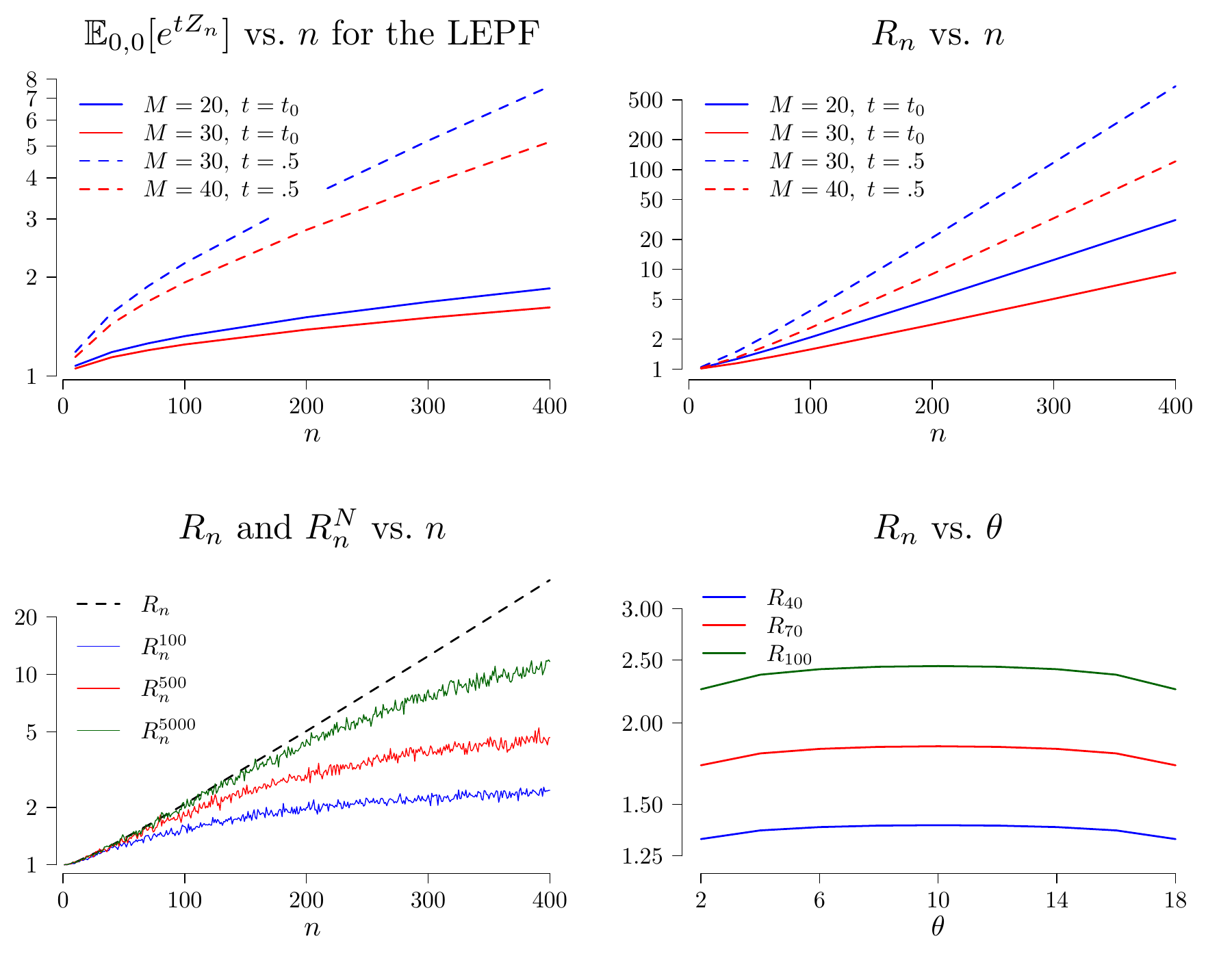}};
\node[] at (-3cm,0) {(a)};
\node[] at (3.15cm,0) {(b)};
\node[] at (-3cm,-4.85cm) {(c)};
\node[] at (3.15cm,-4.85cm) {(d)};
\end{tikzpicture}
\end{minipage}
\caption{(a) $\E_{0,0}[e^{tZ_n}]$ vs.~$n$ for the LEPF with $\theta=1$. (b) $R_{n}$ vs.~$n$ for $\theta=1$. (c) $R_n$ and $R^{N}_n$ vs.~$n$ for $\theta=1$, $M=20$ and $t=t_{0}$. (d) $R_{n}$ vs. $\theta$ for $M=20$ and $t=t_{0}$.}
\label{fig:simple1}
\end{figure}

Figure \ref{fig:simple1}a shows $\E_{0,0}[e^{tZ_{n}}]$ vs.~$n$ for the LEPF. Noting the logarithmic scale, the plot suggests that $\E_{0,0}[e^{tZ_{n}}]$ grows without bound as $n\to\infty$. In Figure \ref{fig:simple1}b, $R_n$ denotes the ratio of $\E_{0,0}[e^{tZ_{n}}]$ for the IBPF to that for the LEPF. It is apparent that $R_n$ is growing exponentially fast with $n$, suggesting the interaction structure of the LEPF has significant benefits in terms of asymptotic variance.

\begin{figure}[!t]
\begin{tikzpicture}
\node[] at (0,0) {\includegraphics[width=\textwidth]{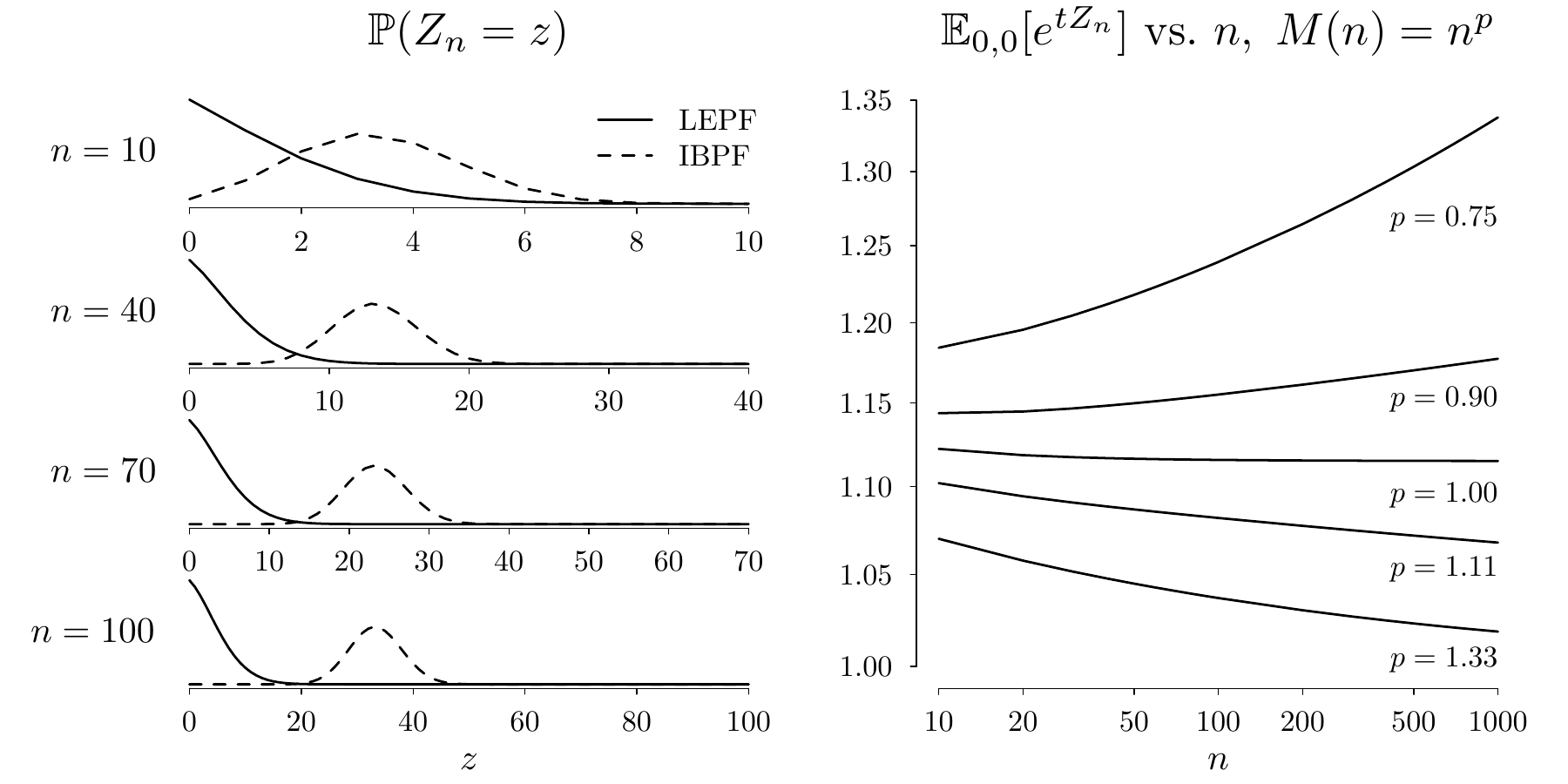}};
\node[] at (-2.45cm,-3.25cm) {(a)};
\node[] at (3.35cm,-3.25cm) {(b)};
\end{tikzpicture}
\caption{(a) Probability mass functions of $Z_{n}$ for IBPF and LEPF with $\gs=3$, $\theta=1$. (b) $\E_{0,0}[e^{tZ_n}]$ vs. $n$ for $M(n)=n^{p}$.}
\label{fig:simple2}
\end{figure}

Figure \ref{fig:simple1}c compares $R_n$ to the ratio of non-asymptotic mean square errors estimated by:
\begin{equation}\label{eq:approximate relative variance}
R^{N}_{n} \defeq \frac{\sum_{i=1}^{N_{\mathrm{MC}}}\big(\pi_{n,\mathrm{IBPF}}^{N,i}(\varphi)-\pi_{n}(\varphi)\big)^2}{\sum_{i=1}^{N_{\mathrm{MC}}}\big(\pi_{n,\mathrm{LEPF}}^{N,i}(\varphi)-\pi_{n}(\varphi)\big)^2},
\end{equation}
where $\pi_{n,\mathrm{IBPF}}^{N,i}(\varphi)$ and $\pi_{n,\mathrm{LEPF}}^{N,i}(\varphi)$, $i=1,\ldots,N_{\mathrm{MC}}$, $N_{\mathrm{MC}}=2000$ are independent approximations of $\pi_{n}(\varphi)$, with $\varphi = \id$, obtained from the IBPF and LEPF. It is apparent that  as $N$ grows, $R_n^N$ approaches $R_n$ and that the benefit of the LEPF over the IBPF becomes more substantial.

The main algorithmic difference between the LEPF and the IBPF is the number of particles exchanged between groups. For the IBPF, this number is 0, for the LEPF, is specified by the parameter $\theta$. Figure \ref{fig:simple1}d shows the behaviour of $R_{n}$ for different values of $\theta$. The results suggest that highest value of $R_{n}$ is obtained when $\theta = \gs/2$, i.e.~half of the particles in each group are exchanged.

By \eqref{eq:u invariance}, the behaviour of $R_{n}$ is explained entirely by the distribution of $Z_{n}$.
Figure \ref{fig:simple2}a shows a comparison of these distributions in the case that $M=3$ and $\theta=1$, i.e. the same settings as in Figure \ref{fig:trajectories}. By \eqref{eq:Z IBPF} the distribution of $Z_{n}$ for the IBPF is centred at $n/\gs$, while the corresponding distribution in the case of LEPF remains concentrated near 0 and, in particular, we observe that the distributions become increasingly distinct for large $n$. 


To help illustrate the connection to the convergence results of Section \ref{sec:M_fixed_m_to_infty}, Figure \ref{fig:ess degeneracy} shows a simulation of $\mathcal{E}^{\gs\gn}_{n}$ over 70000 time steps for the LEPF using the same model as before with $\gs=20$ and $\gn=50,250,500$. For each fixed value of $\gn$, $\mathcal{E}^{\gs\gn}_{n}$ does not crash to zero and stick there, but rather it fluctuates to some extent, eventually as $n$ grows reaching values which are closer to 0 for larger values of $\gn$. It is relevant here to recall the possibly quite loose lower bound $\mathcal{E}_n^{Mm}\geq1/m$ derived in \eqref{eq:E_lower_bound}. Informally, some connection between this phenomenon and the convergence rate can be observed in equation \eqref{eq:aSMC_uniform_alt}, where  $(\mathcal{E}_n^{Mm})^{-1/2}$ appears in the $L_p$ error bound.
\begin{figure}
\includegraphics[width=\textwidth]{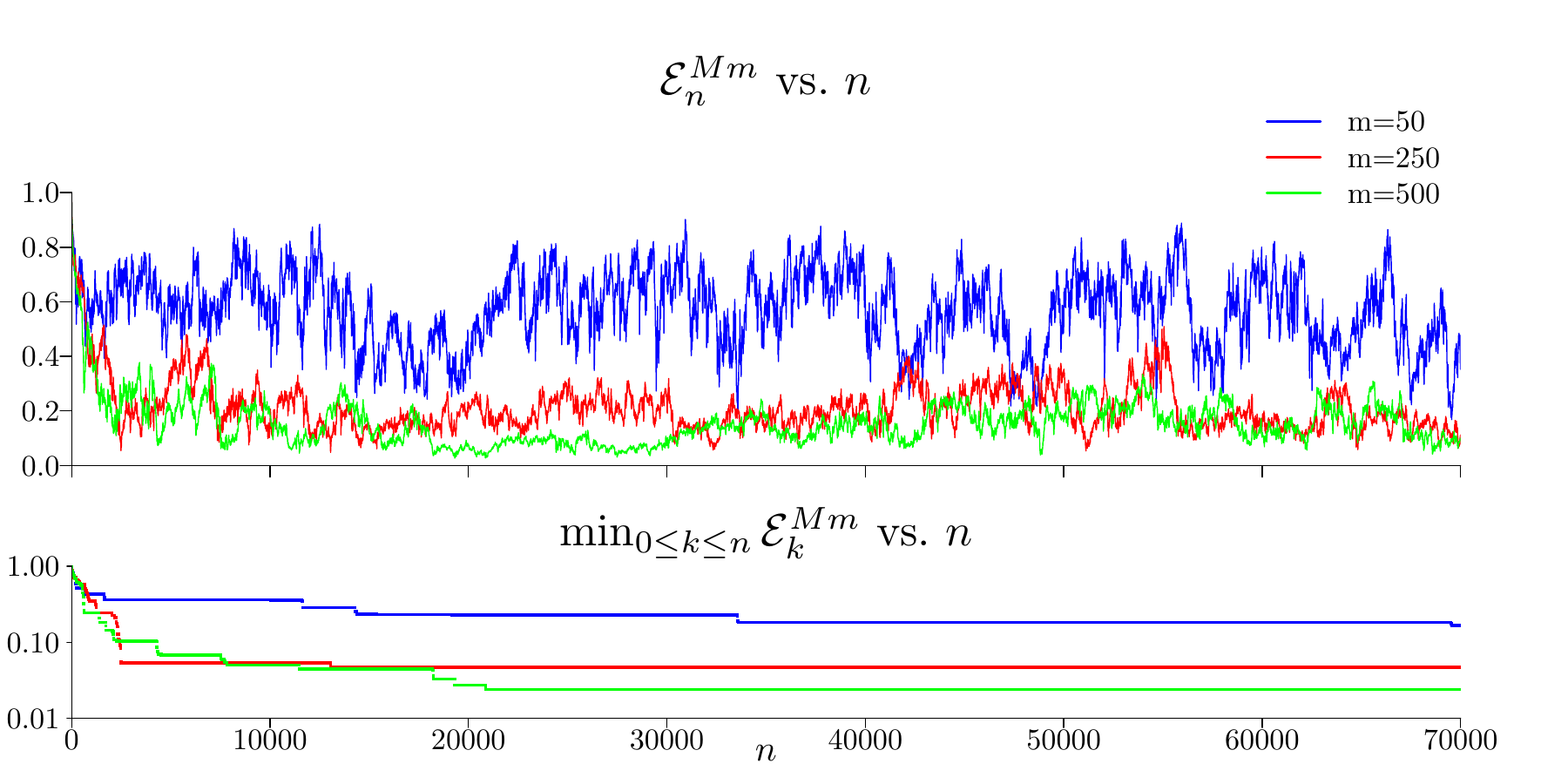}
\caption{$\mathcal{E}_{n}^{\gs\gn}$ (top) and its running minimum on log-scale (bottom) for $\gs=20$ and $\gn=50,250,500$.}
\label{fig:ess degeneracy}
\end{figure}

Lastly we consider the question of how $\gs=\gs(n)$ should be scaled with $n$ in order to prevent explosion of $\sigma_n^2$ as $n\to\infty$. So let $\sigma_n^2$ be as given in \eqref{eq:asymp_var_sec_4} but with $M$ replaced by $M(n)$. For the IBPF we see straightforwardly that if $\limsup_{n\to\infty} n/\gs(n) < \infty$, then by \eqref{eq:rel as var ibpf},
\begin{equation*}
\limsup_{n\to\infty}\frac{\sigma_n^2}{\pi_{0}(\cvarphi^{2})} = \limsup_{n\to\infty}\Big(1+\frac{c}{n}\frac{n}{\gs(n)}\Big)^n < \infty.
\end{equation*}
We address the same issue for the LEPF through numerical evaluations again using the formulae of Section \ref{sec:law_of_Z_LEPF}. Figure \ref{fig:simple2}b shows the behaviour of $\E_{0,0}[e^{tZ_n}]=\sigma_n^2/\pi_{0}(\cvarphi^{2})$ for the LEPF with $M(n) = n^{p}$, $p=0.75,0.90,1.00,1.11,1.33$. The results suggest that the ``right'' scaling may be $M(n)=n$, as for IBPF, in the sense that for $p>1$, $\sigma_n^2$ tends towards $\pi_{0}(\cvarphi^{2})$, and for $p<1$, $\limsup_{n}\sigma_n^2=\infty$. We also note that for $M(n)=n$, we have from \eqref{eq:instance of simple model} for the IBPF that $\lim_{n\to\infty} \E_{0,0}[e^{tZ_n}]=e^c\approx 1.23$,
where as for the LEPF in Figure \ref{fig:simple2}b, with $M(n)=n$, it appears that $\limsup_{n\to\infty} \E_{0,0}[e^{tZ_n}] \approx 1.12$.

\subsection{Simulations}\label{sec:stochastic volatility}

We now see if some of the phenomena observed for the simplified model carry over to the case of a more realistic  stochastic volatility model:
\begin{equation}\label{eq:stoc vol model}
\arraycolsep=1.4pt
\begin{array}{rrlrll}
X_{0} \thicksim \normal(0,1),\quad& X_{k+1} &= aX_{k} + V_{k}, & V_{k} &\thicksim \normal(0,\sigma^2_{V}), \quad &\forall~ k \geq 0,\\[.1cm]
&Y_{k} &= b\exp(X_{k}/2)\varepsilon_{k},\quad & \varepsilon_{k} &\thicksim \normal(0,1),\quad &\forall~ k \geq 0.
\end{array}
\end{equation}
For the parameter values in the model we took $a=0.9$, $b=0.1$, $\sigma_{V} = 0.5$, and simulated a sequence of observations from the model. For the parameters of IBPF and LEPF we took $\gs=20$ and $\theta=1$.
\begin{figure}
\begin{minipage}{\textwidth}
\begin{tikzpicture}
\node[] at (0,0) {\includegraphics[width=\textwidth]{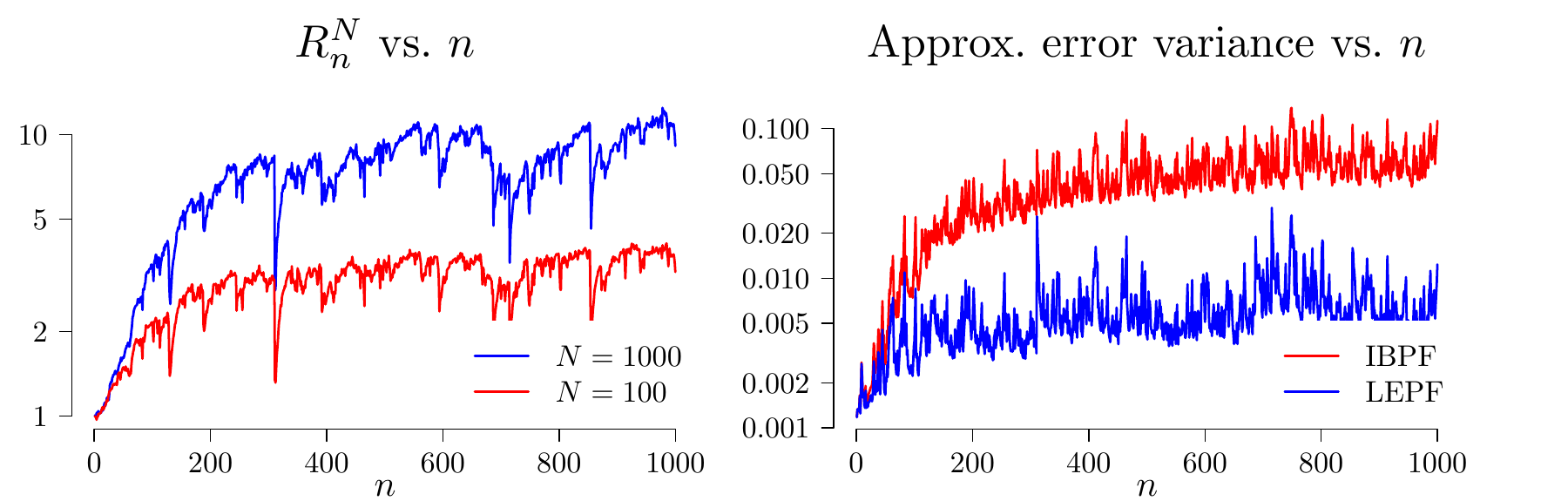}};
\node[] at (-3.1cm,-2.25cm) {(a)};
\node[] at (2.8cm,-2.25cm) {(b)};
\end{tikzpicture}
\end{minipage}
\caption{(a) Approximate relative variance. (b) Approximate error variance for $N=1000$. }
\label{fig:numerical results}
\end{figure}

Figure \ref{fig:numerical results}a shows the ratio \eqref{eq:approximate relative variance}
for $N_{\mathrm{MC}} = 10000$, $\varphi(x) = x$. The true value of $\pi_{n}(\varphi)$ was estimated with standard BPF using $10^6$ particles. Roughly similar behaviour to that in Figure \ref{fig:simple1}d can be observed, although of course for the stochastic volatility model we are not able to evaluate $R_n$. Figure \ref{fig:numerical results}b shows estimated mean square errors for IBPF and LEPF, proportional to the numerator and denominator in \eqref{eq:approximate relative variance}, respectively.

Figure \ref{fig:long ess} shows $\mathcal{E}_{n}^{\gs\gn}$ against $n$ with $\gs= 20 $ and $\gn = 50$ for a single run of each algorithm over $2\times10^4$ time steps, for the stochastic volatility model \eqref{eq:stoc vol model}. For both the LEPF and the IBPF, $\mathcal{E}_{n}^{\gs\gn}$ never goes below $1/{\gn} = 0.02 $, in accordance with \eqref{eq:E_lower_bound}, but it is notable that for the IBPF, $\mathcal{E}_{n}^{\gs\gn}$ stays quite close to $1/{\gn}= 0.02$, where as for LEPF, $\mathcal{E}_{n}^{\gs\gn}$ fluctuates around higher values.
\begin{figure}
\includegraphics[width=\textwidth]{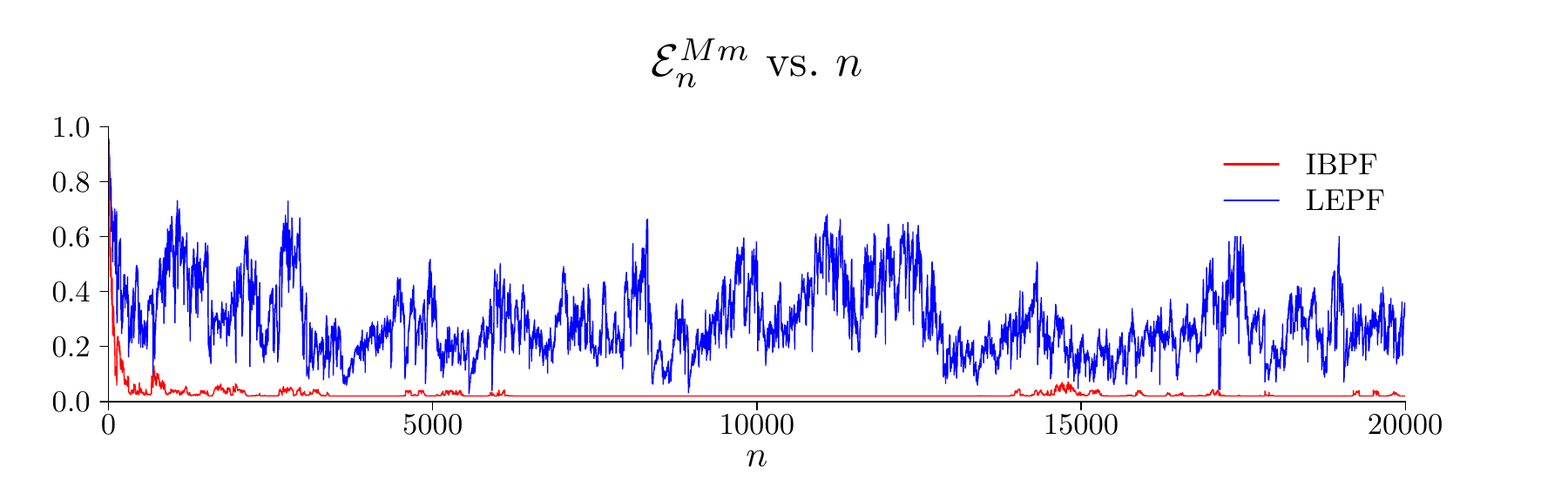}
\caption{$\mathcal{E}_n^{\gs\gn}$ vs. $n$ with $\gs=20$ and $\gn=50$.}
\label{fig:long ess}
\end{figure}

\subsection{Concluding remarks}\label{sec:conc}
 Although our results establish that the asymptotic variance for the LEPF can grow over time without bound, so that time-uniform convergence at rate $m^{-1/2}$ does not hold, our numerical experiments indicate that the errors from the LEPF may be substantially smaller than those from the IBPF, and that this difference can become more substantial as the time-horizon grows. The interaction structure of the LEPF therefore has clear benefits. The question of how to maximize these benefits, by considering variants of the LEPF arising from different $\alpha$ matrices, seems challenging. Outside of the toy model scenario, the formula for the asymptotic variance \eqref{eq:asymp_var_sec_4} is rather complicated. However, it can be written in terms of a composition of a sequence of non-negative integral operators. If the observations $(y_n)$ are treated as a random and stationary sequence, then the sequence of integral operators becomes also random and stationary. In light of this, Oseledec's theorem or similar results for non-negative integral operators may provide some tools to describe the rate of growth of the asymptotic variance over time.
  
 More extreme modifications to the LEPF and IBPF may allow time-uniform convergence at rate $m^{-1/2}$ to be achieved. For instance, choosing $\alpha$ adaptively in a time-varying manner so as to control the effective sample size can provably help to control errors \cite{whiteley_et_al14}. The price to pay is that doing so may compromise the communication efficiency of the algorithm on a distributed computing architecture. Another possible approach is to stabilize the performance of the algorithm by artificially regulating the values taken by the weights $W_n^i$ and thus introduce some bias, but avoid degeneracy and prevent low values of effective sample size. A drawback of this approach is that it would compromise the lack-of-bias properties which validate the use of particle filters within particle MCMC. Rigorous treatment of these ideas is a potential topic for future research.

%
%

\section{Acknowledgement}

This work was supported by the EPSRC through First Grant EP/KO23330/1 and SuSTaIn.

\section*{References}

\bibliography{mybib}

\begin{thebibliography}{10}
\expandafter\ifx\csname url\endcsname\relax
  \def\url#1{\texttt{#1}}\fi
\expandafter\ifx\csname urlprefix\endcsname\relax\def\urlprefix{URL }\fi
\expandafter\ifx\csname href\endcsname\relax
  \def\href#1#2{#2} \def\path#1{#1}\fi

\bibitem{gordon_et_al_93}
N.~J. Gordon, D.~J. Salmond, A.~F.~M. Smith, Novel approach to
  nonlinear/non-{Gaussian Bayesian} state estimation, IEE Proc. F 140 (1993)
  107--113.

\bibitem{smc:theory:Dm04}
P.~Del~Moral, {Feynman}-{Kac} Formulae. Genealogical and interacting particle
  systems with applications, Probab. and its Appl., Springer Verlag, New York,
  2004.

\bibitem{crisan2012particle}
D.~Crisan, O.~Obanubi, Particle filters with random resampling times, Stoch.
  Proc. Appl. 122 (2012) 1332--1368.

\bibitem{del2012adaptive}
P.~Del~Moral, A.~Doucet, A.~Jasra, On adaptive resampling strategies for
  sequential {Monte Carlo} methods, Bernoulli 18 (2012) 252--278.

\bibitem{bolic_et_al_05}
M.~Boli\'c, P.~M. Djuri\'c, S.~Hong, Resampling algorithms and architectures
  for distributed particle filters, IEEE Trans. Sig. Process. 53 (2005)
  2442--2450.

\bibitem{miguez07}
J.~M\'iguez, Analysis of parallelizable resampling algorithms for particle
  filtering, Signal Process. 87 (2007) 3155--3174.

\bibitem{miguez14}
J.~M\'iguez, On the uniform asymptotic convergence of a distributed particle
  filter, Proc. of the 8th Sens. Array and Multichannel Signal Process.
  Workshop (SAM 2014) (2014) 241--244.

\bibitem{miguez_et_vazquez15}
J.~M\'iguez, M.~A. {V\'azquez}, A proof of uniform convergence over time for a
  distributed particle filter, Signal Process. 122 (2016) 152--163.

\bibitem{smc:the:C04}
N.~Chopin, Central limit theorem for sequential {Monte Carlo} methods and its
  application to {Bayesian} inference, Ann. Stat. 32 (2004) 2385--2411.

\bibitem{kunsch2005recursive}
H.~R. K{\"u}nsch, Recursive {Monte Carlo} filters: algorithms and theoretical
  analysis, Ann. Stat. 33 (2005) 1983--2021.

\bibitem{favetto2009asymptotic}
B.~Favetto, On the asymptotic variance in the central limit theorem for
  particle filters, ESAIM: Probab. and Stat. 16 (2012) 151--164.

\bibitem{whiteley2013stability}
N.~Whiteley, Stability properties of some particle filters, Ann. Appl. Probab.
  23 (2013) 2500--2537.

\bibitem{douc2014stability}
R.~Douc, E.~Moulines, J.~Olsson, Long-term stability of sequential {Monte
  Carlo} methods under verifiable conditions, Ann. Appl. Probab. 24 (2014)
  1767--1802.

\bibitem{whiteley_et_al14}
N.~Whiteley, A.~Lee, K.~Heine, On the role of interaction in sequential{ Monte
  Carlo} algorithms, Bernoulli 22~(1) (2016) 494--529.

\bibitem{heine_et_al_14}
K.~Heine, N.~Whiteley, A.~T. Cemgil, H.~G\"ulda\c{s}, Butterfly resampling:
  Asymptotics for particle filters with constrained interactions,
  arXiv:1411.5876 (2014).

\bibitem{hall_et_heyde80}
P.~Hall, C.~C. Heyde, Martingale Limit Theory and Its Applications, Probab. and
  math. stat., Academic Press, 1980.

\bibitem{andrieu_et_al10}
C.~Andrieu, A.~Doucet, H.~Holenstein, Particle {Markov chain Monte Carlo}
  methods, J. R. Stat. Soc. Ser. B 72 (2010) 269--342.

\bibitem{cerou_et_al_11}
F.~C\'erou, P.~Del~Moral, A.~Guyader, A nonasymptotic theorem for unnormalized
  {Feynman--Kac} particle models, Ann. Inst. Henri Poincar\'{e} (B) Probab.
  Stat. 47 (2011) 629--649.

\bibitem{serfling}
R.~J. Serfling, Approximation theorems of mathematical statistics, Wiley ser.
  in probab. and math. stat., John Wiley \& Sons, 1980.

\bibitem{shiryaev1984probability}
A.~N. Shiryaev, Probability, Vol.~95 of Grad. texts in math, Springer-Verlag,
  1984.

\bibitem{skellam48}
J.~G. Skellam, A probability distribution derived from the binomial
  distribution by regarding the probability of success as variable between the
  sets of trials, J. R. Stat. Soc. Ser. B 10 (1948) 257--261.

\bibitem{feller57}
W.~Feller, The numbers of zeros and of changes of sign in a symmetric random
  walk, L'Enseign. Math. 3 (1957) 229--235.

\bibitem{johnson_et_kotz77}
N.~L. Johnson, S.~Kotz, Urn models and their application, John Wiley \& Sons,
  1977.

\end{thebibliography}

\appendix

\section{Auxiliary proofs}

\begin{lemma}
The matrices defined in \eqref{eq:del aibpf and alepf} satisfy Assumption \ref{ass:similarity}.
\end{lemma}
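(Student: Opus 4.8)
The plan is to verify condition \eqref{eq:similarity} by taking $\alpha_\infty$ to be exactly the matrix $\alepfi$ written down in \eqref{eq:del aibpf and alepf}, recalling that for the LEPF the band width is $\hbw=\gs-1+\theta$. It is convenient to introduce the \emph{group-index} map $G(k)\defeq\floor{(k-1)/\gs}$ for $k\in\Z$, in terms of which
\begin{equation*}
\alepfi^{ij}=\gs^{-1}\ind[G(i)=G(j-\theta)],\qquad \alepf_N^{ab}=\gs^{-1}\ind[G(a)=G((b-\theta)\cmod N)],\quad a,b\in[N].
\end{equation*}
Writing $a=i\cmod N$ and $b=j\cmod N$ and cancelling the common factor $\gs^{-1}$, the content of \eqref{eq:similarity} becomes the identity of indicators
\begin{equation*}
\ind[G(i)=G(j-\theta)]=\ind[G(a)=G((b-\theta)\cmod N)]\,\ind[|i-j|\leq\hbw],\qquad i,j\in\Z,
\end{equation*}
valid for $N=\gs\gn\geq 2\hbw+1$, and this is what I would establish.

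First I would dispose of the band factor. If $G(i)=G(j-\theta)$ then $i$ and $j-\theta$ lie in a common block $\{k\gs+1,\dots,k\gs+\gs\}$, which forces $-\hbw\leq i-j\leq \gs-\theta-1$ and in particular $|i-j|\leq\hbw$. Hence $\alepfi^{ij}=\alepfi^{ij}\,\ind[|i-j|\leq\hbw]$, so the factor $\ind[|i-j|\leq\hbw]$ is harmless, and it suffices to treat the case $|i-j|\leq\hbw$, in which that factor equals $1$.

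The next ingredient is the periodicity $G(k+N)=G(k)+\gn$, which holds because $N=\gs\gn$. Since $a\equiv i$ and $b\equiv j$ modulo $N$, and $(b-\theta)\cmod N\equiv j-\theta$ modulo $N$, this gives $G(a)\equiv G(i)$ and $G((b-\theta)\cmod N)\equiv G(j-\theta)$ modulo $\gn$. Moreover both $a$ and $(b-\theta)\cmod N$ lie in $[N]$, so their group indices lie in $\{0,\dots,\gn-1\}$, and on that range equality is equivalent to congruence modulo $\gn$. Consequently the right-hand indicator equals $\ind[G(i)\equiv G(j-\theta)\ (\mathrm{mod}\ \gn)]$, while the left-hand one is $\ind[G(i)=G(j-\theta)]$; the implication ``equal $\Rightarrow$ congruent'' is trivial, and only the converse is in question.

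The hard part will be to show that, when $|i-j|\leq\hbw$, the congruence $G(i)\equiv G(j-\theta)$ modulo $\gn$ actually forces equality. For this I would use the elementary bound $|\floor{x}-\floor{y}|<|x-y|+1$ to get
\begin{equation*}
|G(i)-G(j-\theta)|<\frac{|i-j+\theta|}{\gs}+1\leq\frac{\gs-1+2\theta}{\gs}+1=2+\frac{2\theta-1}{\gs},
\end{equation*}
using $|i-j+\theta|\leq\hbw+\theta=\gs-1+2\theta$. On the other hand $N\geq 2\hbw+1$ reads $\gs\gn\geq 2\gs+2\theta-1$, i.e. $\gn\geq 2+(2\theta-1)/\gs$. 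Combining these, $|G(i)-G(j-\theta)|<2+(2\theta-1)/\gs\leq\gn$, so the integer $G(i)-G(j-\theta)$ is a multiple of $\gn$ with absolute value strictly less than $\gn$, hence zero. This gives $G(i)=G(j-\theta)$ and closes the remaining implication, establishing the indicator identity and thus \eqref{eq:similarity}. The delicate point is precisely this quantitative interplay between the band width $\hbw$ and the hypothesis $N\geq 2\hbw+1$: it is exactly what prevents two distinct blocks from being confused after reduction modulo $N$, and it is why the condition is imposed only for $N$ large relative to $\hbw$.
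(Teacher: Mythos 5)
Your proof is correct, and it takes a genuinely different route from the paper's. Both arguments dispose of the band factor identically (observing that $G(i)=G(j-\theta)$ forces $|i-j|\leq\hbw$), but they diverge on the main case $|i-j|\leq\hbw$. The paper exploits double periodicity: it uses $\alepfi^{ij}=\alepfi^{i+k\gs,j+k\gs}$ together with Assumption \ref{ass:periodic circulance} (already verified for $\alepf_N$) to shift the pair $(i,j)$ by a multiple of $\gs$ into $[N]^2$, choosing $k=-\floor{(i-1)/\gs}$ when $j\geq i$ and $k=-\floor{(i-1)/\gs}+\gn-1$ when $i>j$ so that the relevant block sits at the start (resp.~end) of $[N]$ and the $\cmod N$ operation does not wrap, after which agreement of the two matrices is a direct computation. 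You instead reduce modulo $N$ immediately and turn the problem into arithmetic: group indices of the reduced coordinates are congruent mod $\gn$ to those of $(i,j-\theta)$, equality of indices in $\{0,\ldots,\gn-1\}$ is the same as congruence mod $\gn$, and the quantitative estimate $|G(i)-G(j-\theta)|<2+(2\theta-1)/\gs\leq\gn$ (the last inequality being exactly $N\geq2\hbw+1$) upgrades congruence to equality. What your approach buys: it avoids the case split on the sign of $i-j$, it is self-contained in that it never invokes Assumption \ref{ass:periodic circulance}, and it isolates transparently the role of the hypothesis $N\geq2\hbw+1$ as the anti-aliasing condition preventing two distinct blocks from being identified after reduction mod $N$ --- a point that is only implicit in the paper's computations. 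What the paper's approach buys: it reuses structure already established (the periodicity of $\alepf_N$), so the verification at the shifted indices is a one-line evaluation of indicators rather than an inequality chain.
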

\begin{proof}
For $i,j\in\Z$ such that $|i-j|>\hbw \defeq \gs - 1 +\theta$, by \eqref{eq:del aibpf and alepf}$, \alepfi^{ij}=0$ and \eqref{eq:similarity} clearly holds. For $|i-j|\leq\hbw$, we observe that by \eqref{eq:del aibpf and alepf}
\begin{equation}\label{eq:infinity part 1}
\alepfi^{ij} = \alepfi^{i+k\gs,j+k\gs}, \qquad \forall k\in\Z,
\end{equation}
and provided that $i+k\gs,j+k\gs\in[N]$, then by \ref{ass:periodic circulance} we also have
\begin{equation}\label{eq:infinity part 2}
\alepf_N^{i \cmod N, j \cmod N} = \alepf_N^{(i+k\gs) \cmod N, (j+k\gs) \cmod N} = \alepf_N^{i+k\gs,j+k\gs}.
\end{equation}
So, to complete the verification of \eqref{eq:similarity} we shall, for each $i,j\in\Z$ such that $|i-j|\leq\hbw$, find $k$ such that $i+k\gs,j+k\gs\in[N]$ and check that $\alepf_N^{i+k\gs,j+k\gs}=\alepfi^{i+k\gs,j+k\gs}$.

First consider the case $j \geq i$, and set $k = -\floor{(i-1)/\gs}$. In this case, by using \eqref{eq:del aibpf and alepf} together with the assumptions that $\gs\gn\geq 2\hbw +1$, and $j-i\leq \hbw$ we have
\begin{align*}\label{eq:eqv 1}
\alepfi^{i+k\gs,j+k\gs}
&= \gs^{-1}\ind[\floor{(j+k\gs-\theta-1))/\gs} = 0] \\
&= \gs^{-1}\ind[j+k\gs-\theta \in [\gs]]\\
&= \gs^{-1}\ind[\floor{((j+k\gs-\theta) \cmod N-1)/\gs}=0] \nonumber\\
&= \alepf_N^{i+k\gs,j+k\gs},
\end{align*}
and moreover $i+k\gs, j+k\gs \in [N]$ holds and thus by \eqref{eq:infinity part 1}, \eqref{eq:infinity part 2} we have \eqref{eq:similarity} for $j\geq i$.

For the case $i>j$, we can take $k = -\floor{(i-1)/\gs}+\gn-1$, for which $i+k\gs, j+k\gs \in [N]$, and similarly as above
\begin{align*}
\alepfi^{i+k\gs,j+k\gs} &= \gs^{-1}\ind[(m-1)\gs \leq j+k\gs-\theta-1 \leq \gn\gs-1 ] = \alepf_N^{i+k\gs,j+k\gs},
\end{align*}
from which we conclude, by \eqref{eq:infinity part 1}, \eqref{eq:infinity part 2} that \ref{ass:similarity} holds for all $i,j\in\Z$.
\end{proof}

\begin{proof}[Proof of Lemma \ref{lem:distribution of B}]
Let $V$ be distributed as in \eqref{eq:feller}, thus by \eqref{eq:d is rw} $V$ has the same distribution as the number of zero increments in $D$. Our strategy is to construct a collection of sequences $\{D^{p}\}_{0\leq p \leq V}$ and random variables $\{B^{p}\}_{0\leq p \leq V}$ such that $D^{V}$ and $B^{V}$ have the same distributions as $D$ and $B$, respectively. The construction is done in a manner that allows us to identify explicitly the distribution of $B^{V}$ and hence the distribution of $B$.

To start, take a sequence $D^{0} \defeq (D^{0}_{k})_{0\leq k \leq n-V}$ where $D^{0}_{0}=0$ and the increments $(D^{0}_{k}-D^{0}_{k-1})_{1\leq k \leq n-V}$ are i.i.d.~with common distribution $\delta_{-1}/2+\delta_{1}/2$. We then define sequences $D^{p} \defeq (D^{p}_{k})_{0\leq k \leq n-V+p}$ for $1\leq p \leq V$ recursively
\begin{equation}\label{eq:def d recursion}
D^{p} \defeq (D^{p-1}_{0},\ldots,D^{p-1}_{K_{p}},D^{p-1}_{K_{p}},\ldots,D^{p-1}_{n-V+p-1}),
\end{equation}
where $K_{p}$ is a uniform random variable on the set $\{0,\ldots,n-V+p-1\}$, for all $1\leq p \leq V$.

By this construction, $D^{0}$ is of length $n-V+1$ and has only non-zero increments. $D^{1}$ is of length $n-V+2$ and has exactly one zero increment at a uniformly random location. Finally, $D^{V}$ is of length $n+1$ and has exactly $V$ zero increments at uniformly random locations and hence can be checked to have the same distribution as $D$.

The random variables $\{B^{p}\}_{0\leq p \leq V}$ are defined as
\begin{equation}\label{eq:def bp}
B^{p} \defeq \sum_{k=1}^{n-V+p} \ind[D^{p}_{k-1}=0]\ind[D^{p}_{k}=0], \qquad \forall~0\leq p \leq V,
\end{equation}
for which we have, by \eqref{eq:def d recursion}, the recursive expression
\begin{equation}
B^{p} = B^{p-1} + \ind[D^{p-1}_{K_{p}}=0], \label{eq:b recursion} \qquad \forall~0< p \leq V.
\end{equation}
By the definition of $K_{p}$, $\ind[D^{p-1}_{K_{p}}=0]$ in \eqref{eq:b recursion} is a Bernoulli random variable with success probability
\begin{equation}
\frac{\sum_{k=0}^{n-V+p-1} \ind[D^{p-1}_{k}=0]}{n-V+p} = \frac{1 + B^{p-1} + \sum_{k=1}^{n-V+p-1}\ind[D^{p-1}_{k-1}\neq0]\ind[D^{p-1}_{k}=0]}{n-V+p},\label{eq:bernoulli probability}
\end{equation}
and if we define $S \defeq \sum_{k=1}^{n-V}\ind[D^{0}_{k}=0]$ when $V<n$, and $S\defeq 0$ when $V=n$. This means that $S$ is the number of times zero occurs in the sequence $D^{0}$, excluding the first element, then by induction, one can check that for all $0\leq p \leq V$,
\begin{equation*}
S = \sum_{k=1}^{n-V+p}\ind[D^{p}_{k-1}\neq0]\ind[D^{p}_{k}=0],
\end{equation*}
and hence by \eqref{eq:bernoulli probability}
\begin{equation}\label{eq:distribution of increments}
\ind[D^{p-1}_{K_{p}}=0] \thicksim \Ber\bigg(\frac{1+S+B^{p-1}}{n-V+p}\bigg).
\end{equation}
The key observation is that by \eqref{eq:b recursion} and \eqref{eq:distribution of increments}, the sequence $(\ind[D^{p-1}_{K_{p}}=0])_{0<p\leq V}$, is distributed according to a P\'olya's urn model, for which we have readily (see, e.g.~\cite{johnson_et_kotz77})
\begin{equation}\label{eq:bb}
\sum_{p=1}^{V} \ind[D^{p-1}_{K_{p}}=0] \thicksim \BBin(V,S+1,n-V-S),
\end{equation}
with the convention that $\BBin(k,1,0)$ corresponds to point mass $\delta_{k}$ for any $k>0$.

To conclude the proof we observe that because the increments of $D^{0}$ are non-zero, we have $B^{0}=0$ and hence, by \eqref{eq:b recursion}, $B^{V} = \sum_{p=1}^{V} \ind[D^{p-1}_{K_{p}}=0]$ and therefore, by \eqref{eq:bb}, if we set 
$\bB=B^{V}$, it remains to point out that because $D^{0}$ is a simple random walk, we know by \cite[Theorem 2]{feller57}, that $S$ is distributed as described in \eqref{eq:feller}. Finally, since $D^{V}$ has the same distribution as $D$, then by \eqref{eq:def b as consecutive zeros} and \eqref{eq:def bp}, $B^{V}$ must have the same distribution as $B$, concluding the proof.
\end{proof}

\end{document}